\documentclass[A4paper]{article}

\usepackage[normalem]{ulem}

\usepackage{amssymb, amsmath, amsthm, amsfonts, array, float, mathrsfs, quiver, verbatim}
\usepackage[labelfont=bf]{caption}
\usepackage[utf8]{inputenc}
\usepackage[hidelinks]{hyperref}
\usepackage[capitalize,noabbrev]{cleveref}
\usepackage{enumitem,hyperref}
\usepackage{authblk}
\usepackage{tikz-cd}
\usepackage{tikz}
\usetikzlibrary{arrows.meta}
\usetikzlibrary{patterns.meta}
\usetikzlibrary{calc, positioning, fit, shapes.misc}
\usetikzlibrary{shapes.geometric}

\def\rotateclockwise#1{
  \newdimen\xrw
  \pgfextractx{\xrw}{#1}
  \newdimen\yrw
  \pgfextracty{\yrw}{#1}
  \pgfpoint{\yrw}{-\xrw}
}

\def\rotatecounterclockwise#1{
  \newdimen\xrcw
  \pgfextractx{\xrcw}{#1}
  \newdimen\yrcw
  \pgfextracty{\yrcw}{#1}
  \pgfpoint{-\yrcw}{\xrcw}
}

\def\outsidespacerpgfclockwise#1#2#3{
  \pgfpointscale{#3}{
    \rotateclockwise{
      \pgfpointnormalised{
        \pgfpointdiff{#1}{#2}}}}
}

\def\outsidespacerpgfcounterclockwise#1#2#3{
  \pgfpointscale{#3}{
    \rotatecounterclockwise{
      \pgfpointnormalised{
        \pgfpointdiff{#1}{#2}}}}
}

\def\outsidepgfclockwise#1#2#3{
  \pgfpointadd{#2}{\outsidespacerpgfclockwise{#1}{#2}{#3}}
}

\def\outsidepgfcounterclockwise#1#2#3{
  \pgfpointadd{#2}{\outsidespacerpgfcounterclockwise{#1}{#2}{#3}}
}

\def\outside#1#2#3{
  ($ (#2) ! #3 ! -90 : (#1) $)
}

\def\cornerpgf#1#2#3#4{
  \pgfextra{
    \pgfmathanglebetweenpoints{#2}{\outsidepgfcounterclockwise{#1}{#2}{#4}}
    \let\anglea\pgfmathresult
    \let\startangle\pgfmathresult

    \pgfmathanglebetweenpoints{#2}{\outsidepgfclockwise{#3}{#2}{#4}}
    \pgfmathparse{\pgfmathresult - \anglea}
    \pgfmathroundto{\pgfmathresult}
    \let\arcangle\pgfmathresult
    \ifthenelse{180=\arcangle \or 180<\arcangle}{
      \pgfmathparse{-360 + \arcangle}}{
      \pgfmathparse{\arcangle}}
    \let\deltaangle\pgfmathresult

    \newdimen\x
    \pgfextractx{\x}{\outsidepgfcounterclockwise{#1}{#2}{#4}}
    \newdimen\y
    \pgfextracty{\y}{\outsidepgfcounterclockwise{#1}{#2}{#4}}
  }
  -- (\x,\y) arc [start angle=\startangle, delta angle=\deltaangle, radius=#4]
}

\def\corner#1#2#3#4{
  \cornerpgf{\pgfpointanchor{#1}{center}}{\pgfpointanchor{#2}{center}}{\pgfpointanchor{#3}{center}}{#4}
}

\def\hedgeiii#1#2#3#4{
  \outside{#1}{#2}{#4} \corner{#1}{#2}{#3}{#4} \corner{#2}{#3}{#1}{#4} \corner{#3}{#1}{#2}{#4} -- cycle
}

\def\hedgem#1#2#3#4{
  
  \outside{#1}{#2}{#4}
  \pgfextra{
    \def\hgnodea{#1}
    \def\hgnodeb{#2}
  }
  foreach \c in {#3} {
    \corner{\hgnodea}{\hgnodeb}{\c}{#4}
    \pgfextra{
      \global\let\hgnodea\hgnodeb
      \global\let\hgnodeb\c
    }
  }
  \corner{\hgnodea}{\hgnodeb}{#1}{#4}
  \corner{\hgnodeb}{#1}{#2}{#4}
  -- cycle
}

\def\hedgeii#1#2#3{
  \hedgem{#1}{#2}{}{#3}
}

\definecolor{red}{HTML}{d7191c}
\definecolor{blue}{HTML}{2c7bb6}
\definecolor{orange}{HTML}{fdae61}

\usepackage[backend=biber, maxbibnames=99, style=alphabetic]{biblatex}
\newtheorem{theorem}{Theorem}[section]

\newtheorem{proposition}[theorem]{Proposition}

\newtheorem*{thmfunctoriality-graphs}{\Cref{thm:functoriality-graphs}}
\newtheorem*{thmfunctoriality-coalgebras}{\Cref{thm:functoriality-coalgebras}}
\newtheorem*{thmfunctoriality-hypergraphs}{\Cref{thm:functoriality-hypergraphs}}

\newtheorem{lemma}[theorem]{Lemma}

\theoremstyle{definition}
\newtheorem{definition}[theorem]{Definition}

\newtheorem{example}[theorem]{Example}

\theoremstyle{remark}
\newtheorem{remark}[theorem]{Remark}

\newcommand{\xto}{\xrightarrow}

\newcommand{\power}{\mathcal{P}}
\newcommand{\Coalg}{\mathbf{Coalg}}

\newcommand{\Role}{\mathrm{Role}}
\newcommand{\Set}{\mathbf{Set}}
\newcommand{\Graph}{\mathbf{Graph}}
\newcommand{\HGraph}{\mathbf{HGraph}}
\newcommand{\FHGraph}{\mathbf{FHGraph}}

\newcommand{\define}[1]{{\bf \boldmath{#1}}}

\newcommand{\block}[2]{{#1}/{#2}}
\newcommand{\lift}[1]{\widehat{#1}}

\definecolor{NavyBlue}{rgb}{0.0, 0.0, 0.5}

\DeclareFontFamily{U}{mathb}{}
\DeclareFontShape{U}{mathb}{m}{n}{<-5.5> mathb5 <5.5-6.5> mathb6 
<6.5-7.5> mathb7 <7.5-8.5> mathb8 <8.5-9.5> mathb9 <9.5-11> mathb10 
<11-> mathb12}{}
\DeclareSymbolFont{mathb}{U}{mathb}{m}{n}
\DeclareFontSubstitution{U}{mathb}{m}{n}
\DeclareMathSymbol{\blackdiamond}{\mathbin}{mathb}{"0C} 

\crefname{item}{Condition}{Conditions}
\title{Coalgebraic analysis of social systems}

\author[1]{Nima Motamed}
\author[2]{Nina Otter}
\author[3]{Emily Roff}

\affil[1]{Utrecht University, The Netherlands}
\affil[2]{DataShape, Inria-Saclay,  France}
\affil[3]{University of Edinburgh, United Kingdom}

\begin{document}

\maketitle

\begin{abstract}
The \emph{algebraic analysis of social systems}, or algebraic social network analysis, refers to a collection of methods designed to extract information about the structure of a social system represented as a directed graph. Central among these are methods to determine the \emph{roles} that exist within a given system, and the \emph{positions}. The analysis of roles and positions is highly developed for social systems that involve only pairwise interactions among actors---however, in contemporary social network analysis it is increasingly common to use models that can take into account higher-order interactions as well. In this paper we take a category-theoretic approach to the question of how to lift role and positional analysis from graphs to hypergraphs, which can accommodate higher-order interactions. We use the framework of \emph{universal coalgebra}---a `theory of systems' with origins in computer science and logic---to formalize the main concepts of role and positional analysis and extend them to a large class of structures that includes both graphs and hypergraphs. As evidence for the validity of our definitions, we prove a very general functoriality theorem that specializes, in the case of graphs, to a folkloric observation about the compatibility of positional and role analysis.
\end{abstract}

\tableofcontents


\section{Introduction}
\label{sec:intro}

\emph{Social network analysis} studies the structure of social systems represented by combinatorial objects such as graphs. At the heart of the subject are two complementary questions: what are the \emph{positions} within a given social network, and what are the \emph{roles}? These questions were first articulated in the 1970s and early 1980s in work by Lorrain and White \cite{LorrainWhite1971}, White, Boorman and Brieger \cite{WhiteBoormanBreiger1976}, Pattison \cite{Pattison1982} and others. Over the decade that followed, a large literature accumulated around \emph{positional analysis} and \emph{role analysis}, each of which comes with its own conceptual and computational challenges; a comprehensive account is given by Wasserman and Faust in \cite[Part IV]{Wasserman_Faust_1994}. The methods developed during that era are still in use---indeed, there is increasingly sophisticated software dedicated to their implementation \cite{Ostoic2020}. From today's perspective, however, they have an important limitation: they are restricted to systems modelled by directed graphs, which can only represent \emph{pairwise} interactions among actors.

Modern network science demands models that can capture \emph{higher‑order} relations: patterns of interaction among groups of actors, not just pairs \cite{BGHS}. Graphs cannot represent such relations; for that, one needs a richer combinatorial structure, such as a \emph{hypergraph}. As things stand, the study of roles and positions does not extend from graphs to hypergraphs. The primary challenge is to find the right definitions: how should we define a `role' with respect to a higher-order relation, and how should we define a `position'? A robust answer to these questions calls for a framework that encompasses both graphs and hypergraphs, and in which both roles and positions have natural interpretations. This paper provides that framework. 

To make the paper accessible to a mixed readership of mathematicians and computational social scientists, we open with an extended introduction that summarizes the main ideas of `traditional' role and positional analysis before explaining our framework and how it offers an answer to the central questions. The introduction closes with a Reader's Guide to the rest of the paper which should help those unfamiliar with category theory to navigate it.


\paragraph{Role analysis}

\begin{figure}
    \centering
    \begin{tikzpicture}[line cap=round, line join=round, thick, yscale=0.8] 
    \tikzset{ dot/.style={circle, fill, inner sep=1.6pt}, 
    over/.style={preaction={draw=white, line width=\pgflinewidth+1.2pt}},}
    \def\bend{16}
    \def\gap{2.5pt} 

    \node[dot] (a) at (-2.1, 2.2) {}; 
    \node[dot] (b) at (-0.7, 2.2) {}; 
    \node[dot] (c) at (-2.8, 0.7) {}; 
    \node[dot] (d) at (-1.4, 0.7) {}; 
    \node[dot] (e) at (-0, 0.7) {}; 
    \node[dot] (f) at (-0.7, -0.8) {}; 


    \draw[->, shorten <=\gap, shorten >=\gap, black] (a) -- (d); 
    \draw[->, shorten <=\gap, shorten >=\gap, black] (b) -- (d); 
    \draw[->, shorten <=\gap, shorten >=\gap, black] (d) -- (f);
    \draw[->, shorten <=\gap, shorten >=\gap, black] (e) -- (f);
    \draw[->, shorten <=\gap, shorten >=\gap, black] (c) -- (f);

    \draw[->, shorten <=\gap, shorten >=\gap, red, dotted] (a) to[bend left=\bend] (b); 
    \draw[->, shorten <=\gap, shorten >=\gap, blue, dashed] (b) to[bend left=\bend] (a);
    \draw[->, shorten <=\gap, shorten >=\gap, blue, dashed] (c) to[bend left=\bend] (d); 
    \draw[->, shorten <=\gap, shorten >=\gap, blue, dashed] (d) to[bend left=\bend] (c);
    \draw[->, shorten <=\gap, shorten >=\gap, blue, dashed] (e) to[bend left=\bend] (d); 
    \draw[->, shorten <=\gap, shorten >=\gap, red, dotted] (d) to[bend left=\bend] (e);
    
    \draw[dashed] (-3.2,-1.3) rectangle (0.4,2.7); 
    \end{tikzpicture}
    \quad
    \begin{tikzpicture}[line cap=round, line join=round, thick, yscale=0.8] 
    \tikzset{ dot/.style={circle, fill, inner sep=1.6pt}, 
    over/.style={preaction={draw=white, line width=\pgflinewidth+1.2pt}},}
    \def\bend{16}
    \def\gap{2.5pt} 

    \node[dot] (a) at (-2.1, 2.2) {}; 
    \node[dot] (b) at (-0.7, 2.2) {}; 
    \node[dot] (c) at (-2.8, 0.7) {}; 
    \node[dot] (d) at (-1.4, 0.7) {}; 
    \node[dot] (e) at (-0, 0.7) {}; 
    \node[dot] (f) at (-0.7, -0.8) {}; 


    \draw[->, shorten <=\gap, shorten >=\gap, black] (a) -- (d); 
    \draw[->, shorten <=\gap, shorten >=\gap, black] (b) -- (d); 
    \draw[->, shorten <=\gap, shorten >=\gap, black] (d) -- (f);
    \draw[->, shorten <=\gap, shorten >=\gap, black] (e) -- (f);
    \draw[->, shorten <=\gap, shorten >=\gap, black] (c) -- (f);

    \draw[->, shorten <=\gap, shorten >=\gap, purple, dashdotted] (a) to[bend left=\bend] (b); 
    \draw[->, shorten <=\gap, shorten >=\gap, purple, dashdotted] (b) to[bend left=\bend] (a);
    \draw[->, shorten <=\gap, shorten >=\gap, purple, dashdotted] (c) to[bend left=\bend] (d); 
    \draw[->, shorten <=\gap, shorten >=\gap, purple, dashdotted] (d) to[bend left=\bend] (c);
    \draw[->, shorten <=\gap, shorten >=\gap, purple, dashdotted] (e) to[bend left=\bend] (d); 
    \draw[->, shorten <=\gap, shorten >=\gap, purple, dashdotted] (d) to[bend left=\bend] (e);
    
    \draw[dashed] (-3.2,-1.3) rectangle (0.4,2.7); 
    \end{tikzpicture}
    \quad
    \begin{tikzpicture}[line cap=round, line join=round, thick, yscale=0.8] 
    \tikzset{ dot/.style={circle, fill, inner sep=1.6pt}, } 
    \def\bend{22}
    \def\gap{2.5pt} 
    
    \node[dot] (a) at (-1.6, 2.2) {}; 
    \node[dot] (b) at (-1.2, 2.2) {}; 
    \node[dot] (d) at (-1.4, 0.8) {}; 
    \node[dot] (d2) at (-1.8, 0.8) {}; 
    \node[dot] (d3) at (-1, 0.8) {}; 
    \node[dot] (e) at (-1.4, -0.45) {}; 
    \node[rectangle,draw,fit=(a) (b),inner sep=.5] (ab) {};
    \node[rectangle,draw,fit=(d) (d2) (d3),inner sep=.5] (dd) {};
    \node[rectangle,draw,fit=(e),inner sep=.5] (ee) {};
    
    \draw[->, shorten <=\gap, shorten >=\gap, black] (ab) -- (dd); 
    \draw[->, shorten <=\gap, shorten >=\gap, black] (dd) -- (ee); 
    \path[->] (ab) edge [shorten <=\gap, shorten >=\gap, out=45, in=315, looseness=4, loop, purple, dashdotted] (ab);
    \path[->] (dd) edge [shorten <=\gap, shorten >=\gap, out=45, in=315, looseness=4, loop, purple, dashdotted] (dd);
    \path[->] (ee) edge [shorten <=\gap, shorten >=\gap, out=45, in=315, looseness=4, loop, purple, dashdotted] (ee);
        
    \draw[dashed] (-3.2,-1.3) rectangle (0.4,2.7); 
    \end{tikzpicture}
    
    \caption{On the left, a multirelational network containing a `parent' relation \(P\) (solid arrows), `sister' relation \(S\) (dotted arrows), and `brother' relation \(B\) (dashed arrows). Center, a role reduction combining \(S\) and \(B\) into the `sibling' relation \(\overline{S}\) (dash-dotted arrows). On the right, a blockmodel by `generations'.}
    \label{fig:family-network}
\end{figure}
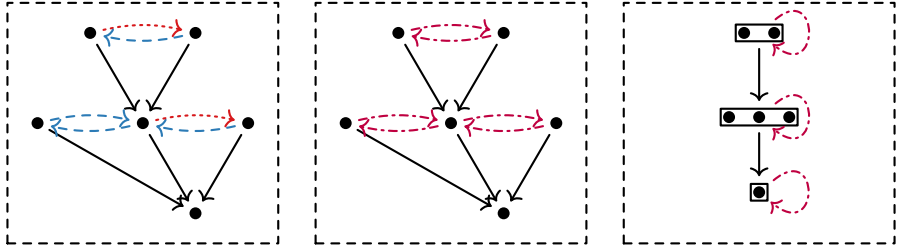

In traditional social network analysis, the basic object of interest is a \emph{multirelational network}: a set \(A\) of social actors and a handful of binary relations \(R_1,\ldots, R_k\) among the actors in \(A\), which we will refer to in this introduction as the \emph{generating relations}. In running examples throughout the paper, we will consider a multirelational network \(F = (A, (S,B,P))\) in which \(A\) comprises the members of a family and the generating relations are the `sister' relation \(S\), the `brother' relation \(B\), and the `parent' relation \(P\). \Cref{fig:family-network} shows how this network can be depicted as an edge-coloured directed graph.\footnote{For consistency and to aid understanding, almost all examples in this paper will be of kinship systems. However, the types of systems that are studied in social network analysis are manifold, and include: advice, mentorship and coworking relationships between employees in a firm \cite{lazega}; sympathy/antipathy between monks in a monastery \cite{sampson, BBA75};  marriage and business partnerships between families in  Fifteenth Century Florence \cite{ BreigerPattison1986}; favour exchanges between villages in rural India \cite{villages}; relationship status---couples, exes, `it's complicated', friends, enemies, and family---between penguins in Kyoto Aquarium \cite{Penguins_of_Kyoto_dataset}.
}

The \emph{roles} in a multirelational network are relations among the actors that can be obtained by composing the generating relations. In our family network, the roles include `sister', `brother' and `parent', but also `brother-of-parent' (commonly known as `uncle'), `sister-of-parent-of-parent' (`great-aunt'), and so forth. The set of all roles in a multirelational network \(G\) forms a semigroup, \(\Role(G)\): the \emph{semigroup of roles} in \(G\).

Even for quite a small network, the semigroup of roles can easily be too large for immediate interpretation. One way to condense the information it carries is to look for quotients: smaller semigroups whose structure reflects that of \(\Role(G)\). We will call these \emph{role reductions}. For instance, in the case of the family network \(F\), there is a role reduction in which the `brother' and `sister' roles are combined into a role that one might call `sibling' (see \Cref{fig:parent-sibling} and \Cref{eg:parent-sibling}). Because it is a semigroup quotient---meaning that it respects the composition of relations---this reduction also has the effect of combining the role of `uncle' with `aunt', `great-aunt' with `great-uncle', and so forth. The study of \(\Role(G)\) and its role reductions is known as \emph{role analysis}.


\paragraph{Positional analysis}

Whereas a role is an equivalence class of relations among actors, a \emph{position} in a network is an equivalence class of actors. \emph{Positional analysis} refers to the process of aggregating actors into positions. The output of positional analysis is a \emph{blockmodel}: a smaller network in which the nodes are positions and the relations reflect the structure of the original network. Formally, a blockmodel of a multirelational network \(G = (A, (R_1,\ldots, R_k))\) is a simultaneous quotient of the graphs \((A,R_1), \ldots, (A,R_k)\).

The idea is that a position should be comprised of actors who are equivalently embedded in the network; the difficulty is that there are many different ways one might specify what it means to be ``equivalently embedded''. A good notion of equivalence ought to be relaxed enough to produce \emph{interesting} blockmodels, yet strict enough that the blockmodels it produces are \emph{sensible}. 

One way to judge whether a blockmodel of \(G\) is sensible is to ask whether it induces a role reduction---a quotient of \(\Role(G)\). For instance, there is a blockmodel of our family network \(F\) in which the positions are the `generations' in the family. This blockmodel does induce a role reduction: it has the effect of combining `parent' and `parent-of-sibling' into a single role (see \Cref{fig:generations} and \Cref{eg:generations}). But that is not always the case; indeed, a poor choice of blockmodel may \emph{increase} the number of relations in the semigroup of roles.

To guarantee compatibility with role analysis, early approaches to positional analysis demanded that actors occupying the same position should be \emph{structurally equivalent} (meaning they have identical neighbourhoods within the network) or \emph{automorphically equivalent} (meaning there is a graph automorphism that interchanges them). These notions of equivalence are sensible---the blockmodel by a structural or automorphic equivalence always induces a role reduction---but, as Pattison explains in \cite[p.92]{Pattison1982}, too restrictive to capture ``the notion  of identical `abstract', rather than `concrete', social positions''. For instance, she writes, ``the leaders of two identically structured but distinct small groups'' ought to be regarded as occupying the same social position, but are not structurally equivalent.

A more flexible approach, introduced in 1978 by Sailer \cite{Sailer1978}, asks that actors in the same position be \emph{regularly equivalent}: that they relate in matching ways to \emph{equivalent} others. (For the formal definition, see \Cref{def:reg_equiv}.) This notion of equivalence achieves the necessary balance: strict enough to be compatible with role analysis, yet relaxed enough to capture Pattison's ``abstract positions''. In this paper, a blockmodel of a multirelational network by a regular equivalence will be called a \emph{positional reduction}.

Following Otter and Porter \cite{OtterPorter2020}, we frame the compatibility of role and positional analysis in terms of a \emph{functoriality theorem}. The theorem, which has the status of folklore in the social network literature, says that every positional reduction induces a role reduction, and it also says a little more: that if one performs a sequence of positional reductions, the induced role reductions behave consistently. (For the reader who is meeting categories and functors for the first time, we recommend Leinster's \cite[Chapter 1]{Leinster_2014} as a primer.)

\begin{thmfunctoriality-graphs}[\cite{Pattison1982, OtterPorter2020}]
    Let \(k\Graph_{\mathsf{PR}}\) denote the category of \(k\)-relational networks and positional reductions, and \(\mathbf{SGrp}_\mathsf{Q}\) the category of semigroups and their quotient maps. The semigroup of roles defines a functor
    \[\Role\colon k\Graph_{\mathsf{PR}} \to \mathbf{SGrp}_\mathsf{Q}.\]
    In particular, every positional reduction induces a role reduction.
\end{thmfunctoriality-graphs}

In this paper, we propose definitions for \emph{roles} and \emph{positions} in social systems involving higher-order relations. As evidence that our definitions are sound, we will prove that positional analysis for higher-order relations is compatible with role analysis: they satisfy the natural analogue of \Cref{thm:functoriality-graphs}.


\paragraph{A coalgebraic framework}

Our starting point is to view a social system not as something static but as a type of \emph{transition system}, and to use \emph{universal coalgebra} to set up a common framework for positional and role analysis. Universal coalgebra is the modern mathematics of transition systems; we recommend Jacobs \cite{Jacobs2017} as an introduction to the field.

In its simplest form, a coalgebra is a set \(A\) equipped with a function \(A \to T(A)\) where \(T(A)\) is some other set associated in a systematic way to \(A\). (Formally, \(T\) should be an endofunctor on the category \(\Set\) of sets and functions.) In the computer science context from which the study of coalgebra emerged, \(A\) is usually thought of as a set of states and \(T(A)\) as the set of possible outcomes of some process of computation. For us, \(A\) will instead represent the set of actors in a social system, and the function \(A \to T(A)\) will assign to each actor their `neighbourhood' in a specified sense.

As we explain in \Cref{sec:ss_coalg}, both graphs and hypergraphs can be encoded as coalgebras by taking \(T\) to be, respectively, the covariant powerset functor \(\power\) or its square \(\power\power\) (defined in \Cref{def:powerset}). That fact is well-known; what makes it important to our story is the observation that a regular equivalence on a graph is precisely a \emph{bisimulation equivalence} on the corresponding \(\power\)-coalgebra. Just as regular equivalences are at the heart of social network analysis, so bisimulation equivalences are at the heart of the study of transition systems, and of coalgebra. 

Indeed, the appearance of this notion in the social science literature of the 1970s is rather remarkable. It is a striking and well-known fact about bisimulation that it emerged independently in three fields at once: Sangiorgi has traced the parallel development of the concept in computer science, philosophical logic and mathematics through the 1970s \cite{Sangiorgi2009}. According to Sangiorgi's account, the term `simulation' first appears in this context in the work of Milner in 1970 and 1971, and `bisimulation' eventually emerges in a 1980 collaboration between Milner and Park. It is remarkable, then, that social scientists converged upon the same concept over almost exactly the same period---and, so far as we can tell, also independently. We regard this as a powerful omen that universal coalgebra---and category theory, more broadly---offers the right apparatus by which to lift techniques of social network analysis from graphs to hypergraphs.

Our category-theoretic framework is constructed in Sections \ref{sec:positional_bisim} and \ref{sec:role_kleisli}. \Cref{sec:positional_bisim} sets up positional analysis for the coalgebras for an arbitrary functor \(T \colon \Set \to \Set\). A \emph{blockmodel} for a \(T\)-coalgebra is a quotient by a \(T\)‑bisimulation equivalence; a \emph{positional reduction} is a coequalizer in the category \(k\Coalg(T)\) of \emph{\(k\)-relational \(T\)-coalgebras} (\Cref{def:multirelational-coalgebras}).

In \Cref{sec:role_kleisli} we turn to role analysis. Here, the key is that `traditional' role analysis for graphs can be interpreted in terms of composition in the \emph{Kleisli category} \(\Set_\mathbb{P}\) of the \emph{powerset monad} \(\mathbb{P} = (\power, \mu, \eta)\): for a multirelational network \(G = (A, (R_i)_{i=1}^k)\), the semigroup \(\Role(G)\) is a sub-semigroup of the monoid of endomorphisms \(\Set_\mathbb{P}(A,A)\). (These terms will be defined in \Cref{subsec:k-coalgebra-roles}.) Since we are only interested in its semigroup structure, it is not actually important that \(\Set_\mathbb{P}(A,A)\) forms a monoid, nor that \(\power\) carries the full structure of a monad; what matters is that it carries the structure of a \emph{semimonad}. For any semimonad \(\mathbb{T} = (T, \mu)\) on \(\Set\), we define a \emph{semigroup of \(\mu\)-roles} and a notion of \emph{\(\mu\)-role reduction} for \(k\)-relational \(T\)-coalgebras. 

Our main theorem, which we present as a warrant of the correctness of our definitions, says that \(\mu\)-role analysis is functorial with respect to \(T\)-positional reductions. Thus, role and positional analysis are unified within a category-theoretic framework that subsumes the case of graphs.

\begin{thmfunctoriality-coalgebras}
    Let \(\mathbb{T} = (T, \mu)\) be a semimonad on \(\Set\), and let \(k\Coalg(T)_\mathsf{PR}\) denote the category of \(k\)-relational \(T\)-coalgebras and positional reductions. The semigroup of roles defines a functor
    \[\Role_\mathbb{T} \colon k\Coalg(T)_\mathsf{PR} \to \mathbf{SGrp}_\mathsf{Q}.\]
    In particular, every \(T\)-positional reduction induces a \(\mu\)-role reduction.
\end{thmfunctoriality-coalgebras}


\paragraph{Extending positional and role analysis} 

In the case of \(\power\)-coalgebras, \Cref{thm:functoriality-coalgebras} recovers \Cref{thm:functoriality-graphs}, expressing the compatibility of `traditional' role and positional analysis. But \Cref{thm:functoriality-coalgebras} is vastly more general, and its value lies in its power to indicate \emph{new} methods of role and positional analysis that are better suited to the complexities of modern social scientific data. Developing such methods is a challenge of active interest to researchers in the field; in recent years, for instance, roles and positions have been extended in more than one way to fuzzy social networks \cite{FanLiauLin2007, MicicStanimirovicJanvic2023} and to weighted or valued networks \cite{Nordlund2020}.

The coalgebraic framework suggests a `universal' approach to this challenge. Any type of social system that can be modelled by the coalgebras for a functor \(T \colon \Set \to \Set\) admits a natural notion of regular equivalence and of positional reductions. If \(T\) happens to carry the structure of a semimonad, then there is a corresponding notion of role analysis, and \Cref{thm:functoriality-coalgebras} guarantees that role analysis will be functorial with respect to positional reductions.

In \Cref{sec:analysis-for-hypergraphs} we implement this strategy in the case of the functor \(\power\power \colon \Set \to \Set\) that takes a set \(A\) of actors to the set \(\power\power(A)\) of sets of sets of actors in \(A\). Coalgebras for \(\power\power\) correspond exactly to a certain class of directed hypergraphs known as F-hypergraphs, and naturally encompass all undirected hypergraphs. Interpreting bisimulation equivalences for  \(\power\power\)-coalgebras thus yields a natural notion of \emph{regular equivalence} and \emph{positional reduction} for hypergraphs.

Crucially, \(\power\power\) admits a semimonad structure; in fact it admits two, giving rise to two associative composition operations for \(\power\power\)-coalgebras (\Cref{eg:double-powerset-roles}). Thus, our framework produces not just one, but two distinct semigroups of roles in a multirelational hypergraph: a semigroup of \emph{tight roles}, which keeps track of `branch' structure in higher‑order interactions, and a semigroup of \emph{loose roles}, which does not. By applying \Cref{thm:functoriality-coalgebras}, we prove that both tight and loose role analysis are compatible with positional analysis for hypergraphs.

\begin{thmfunctoriality-hypergraphs}
    The semigroups of tight and loose roles are both functorial with respect to positional reductions of hypergraphs.
\end{thmfunctoriality-hypergraphs}


\paragraph{The structure of the paper}

We begin in \Cref{sec:positional_role} by giving a mathematical account of the main ideas of positional and role analysis on graphs. In \Cref{sec:ss_coalg} we motivate the use of hypergraph models through the example of a co-authorship network, then introduce coalgebras and explain how they can encode both graphs and hypergraphs. \Cref{sec:positional_bisim} formulates positional analysis for coalgebras via bisimulation. \Cref{sec:role_kleisli} formulates role analysis via Kleisli composition and proves our key result, \Cref{thm:functoriality-coalgebras}, which expresses the compatibility of role analysis with positional analysis in the coalgebraic framework.

In \Cref{sec:analysis-for-hypergraphs} we specialize to hypergraphs, giving explicit definitions of regular equivalences and positional reductions in that setting, then introducing the semigroups of tight and loose roles and applying \Cref{thm:functoriality-coalgebras} to establish their functoriality with respect to positional reductions. In \Cref{sec:future_work} we conclude with ideas to extend our framework---for instance, to a wider class of directed hypergraphs, or to accommodate notions of `approximate equivalence'.

\paragraph{A reader's guide} 

We have endeavoured to write this paper for distinct and probably almost disjoint sets of readers: those interested in the practical applications, who may have no knowledge of category theory, and those who are more categorically-minded but may be completely unfamiliar with the applications. For the reader who is interested \emph{only} in the question of how to study roles and positions in graphs and hypergraphs, this Introduction, together with Section \ref{sec:positional_role}, Section \ref{subsec:higher-order-relations}, and Sections \ref{subsec:hypergraph-positional-analysis}-\ref{subsec:hypergraph-role-analysis}, should provide a self-contained account of the relevant definitions. We hope that readers with a basic knowledge of category theory---say, to the level of Leinster's textbook \cite{Leinster_2014}---will find the entire paper accessible and interesting.


\section{Positional and role analysis on graphs}
\label{sec:positional_role}

Both positional analysis and role analysis are concerned with reducing a model of some social system to one that is smaller and easier to comprehend. In this section we give an overview of both methodologies, in mathematical terms that are suited to the later developments in this paper. The interested reader can find full accounts of these methodologies in the classical monographs \cite{Pattison1993, Wasserman_Faust_1994} and more modern accounts of some aspects in \cite{rawlings_network_2023,DoreianBatageljFerligoj2004, brandes2005network, Pattison2023}.


\subsection{Positional analysis}\label{subsec:positional}

In the field of social network analysis, a \emph{position} within a network is understood to be a set of nodes that share a similar pattern of ties to other nodes. The aim of \emph{positional analysis} is to partition the set of nodes in a network into positions, and assign ties among the positions that reflect the structure of the original network. In this section we explain how that can be made precise for networks modelled by directed graphs. For brevity, we refer to directed graphs as \emph{graphs}.

\begin{definition}\label{def:graph}
    A \define{graph} is a pair of sets \(G = (A, R)\), where \(R \subseteq A \times A\). The elements of \(A\) are called the \define{vertices} of \(G\) and the pairs \((v,w) \in R\) are called \define{edges}; in diagrams, we depict an edge \((v,w)\) as an arrow \(v \to w\).
    
    A \define{map of graphs} $f \colon (A, R) \to (A', R')$ is a function $f\colon A \to A'$ that preserves edges: for every $(a, b) \in R$ we have $(f(a), f(b)) \in R'$. We will denote the category of graphs and maps of graphs by \(\Graph\).
\end{definition}

Given a social network modelled by a graph \(G\), the aim of positional analysis is to construct a smaller graph whose vertices deserve to be understood as positions in the network. We will follow the social science literature in referring to this smaller graph as a \emph{blockmodel}. This term appears first in work by White, Boorman and Breiger \cite[pp.~739-740]{WhiteBoormanBreiger1976}; a more formal definition is given by Arabie, Boorman and Levitt \cite[Definition 1]{ArabiePhippsLevitt1978}. White and Reitz refer to blockmodels as \emph{full homomorphic images} \cite[Definition 5]{WhiteReitz1983}. Mathematicians would call a blockmodel a \emph{quotient} of \(G\).

\begin{definition}\label{def:blockmodel}
    Let \(G = (A, R)\) be a graph, and let \(E \subseteq A \times A\) be an equivalence relation. The \define{blockmodel} of \(G\) by \(E\) is the graph \(\block{G}{E} = \left(\block{A}{E}, \block{R}{E}\right)\) whose vertex set \(\block{A}{E}\) consists of the equivalence classes for \(E\) and whose edge set is
    \[\block{R}{E} = \left\{(X,Y) \mid \text{there exist \(x \in X\) and \(y \in Y\) such that \((x,y) \in R\)}\right\}.\]
    We will sometimes refer to the elements of \(\block{A}{E}\) as \define{blocks} or as \define{positions}.
\end{definition}

\begin{figure}
    \centering
    \begin{tikzpicture}[line cap=round, line join=round, thick, yscale=0.8] 
    \tikzset{ dot/.style={circle, fill, inner sep=1.6pt}, } 
    \def\bend{16}
    \def\gap{2.2pt} 

    \node[anchor=base west] (a) at ( 0, 0) {$a$}; 
    \node[anchor=base west] (b) at (-.5, 2) {$b$\strut}; 
    \node[anchor=base west] (c) at ( .5, 2) {$c$\strut}; 

    \draw[->, shorten <=-\gap, shorten >=\gap] (b) -- (a); 
    \draw[->, shorten <=-\gap, shorten >=\gap] (c) -- (a); 

    \node[anchor=north] (d) at (.25, -1) {$G$\strut}; 

    \draw[dashed] (-1,-.7) rectangle (1.5,3); 
    \end{tikzpicture}
    \qquad
    \begin{tikzpicture}
    [line cap=round, line join=round, thick, yscale=0.8] 
    \tikzset{ dot/.style={circle, fill, inner sep=1.6pt}, } 
    \def\bend{16}
    \def\gap{2.2pt} 

    \node[anchor=base west] (a) at ( 0, 0) {$a$};
    \node[anchor=base west] (b) at (-.25, 0) {$b$\strut}; 
    \node[anchor=base west] (c) at ( .5, 2) {$c$\strut}; 
    \node[rectangle,draw,fit=(a) (b),inner sep=.5] (ba) {};
    \node[rectangle,draw,fit=(c),inner sep=.5] (cc) {};

    \node[anchor=north] (d) at (.25, -1) {$\block{G}{E_1}$\strut}; 

    \draw[->, shorten <=\gap, shorten >=\gap] (cc) -- (ba); 

    \draw[->] (-.35,.3) to [out=140, in=220, loop, looseness=4] (-.35,-.1);
        
    \draw[dashed] (-1,-.7) rectangle (1.5,3); 
    \end{tikzpicture}
    \qquad
    \begin{tikzpicture}[line cap=round, line join=round, thick, yscale=0.8] 
    \tikzset{ dot/.style={circle, fill, inner sep=1.6pt}, } 
    \def\bend{16}
    \def\gap{2.2pt} 

    \node[anchor=base west] (a) at ( 0, 0) {$a$\strut}; 
    \node[anchor=base west] (b) at (-.1, 2) {$b$\strut}; 
    \node[anchor=base west] (c) at ( .1, 2) {$c$\strut}; 

    \node[rectangle,draw,fit=(a),inner sep=.5] (aa) {};
    \node[rectangle,draw,fit=(b) (c),inner sep=.5] (bcd) {};

    \node[anchor=north] (d) at (.25, -1) {$\block{G}{E_2}$\strut}; 
    
    \draw[->, shorten <=\gap, shorten >=\gap] (bcd) -- (aa); 
    
    \draw[dashed] (-1,-.7) rectangle (1.5,3); 
    \end{tikzpicture}
    
    \caption{A graph \(G\) and its blockmodels with respect to two equivalence relations, of which only \(E_2\) is a regular equivalence. See \Cref{eg:graph-blockmodels}.}
    \label{fig:graph-blockmodels}
\end{figure}
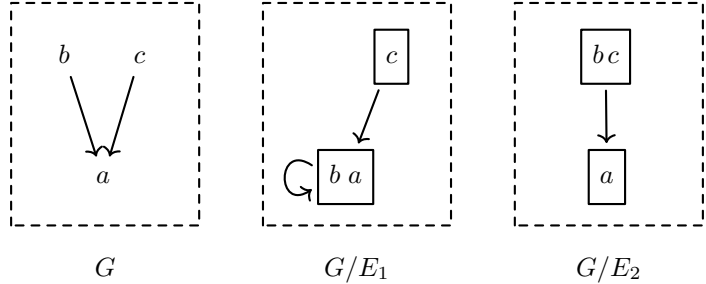

Most blockmodels do not provide good descriptions of the positions within a network. For instance, every graph \(G\) with nonempty vertex set admits as a blockmodel a graph with a single vertex (and a single loop, if the edge-set of \(G\) is also nonempty)---but it is seldom useful to think of a network as containing a single position. In the early literature on social network analysis, a central objective was to specify classes of equivalence relations on graphs that seemed to capture the notion of `equivalent positions', relating vertices only if they share the same pattern of connection with other vertices. Since there are various ways to interpret what a `pattern of connection' should be, there are various candidate classes of equivalence relations, including the \emph{structural equivalences} and \emph{automorphic equivalences} mentioned in the Introduction. More general than either of these is the class of \emph{regular equivalences}.

\begin{definition}\label{def:reg_equiv}
Let $G = (A, R)$ be a graph, \(E\) an equivalence relation on \(A\).
\begin{enumerate}
    \item We will say that \(E\) is \define{outward-regular} if, for every $(a, a') \in E$ and for each $(a, b) \in R$, there is $(a', b') \in R$ such that $(b, b') \in E$. \label{cond:out-reg_equiv}

    \item We will say that \(E\) is \define{inward-regular} if, for every $(a, a') \in E$ and for each $(b, a) \in R$, there is $(b', a') \in R$ such that $(b, b') \in E$. \label{cond:in-reg_equiv}
    \end{enumerate}
A \define{regular equivalence} on \(G\) is an equivalence relation \(E \subseteq A \times A\) which is both outward-and inward-regular.
\end{definition}

The term \emph{regular equivalence} is due to White and Reitz \cite[Definition 11]{WhiteReitz1983}. The terms \emph{outward-regular} and \emph{inward-regular} are non-standard.

Condition \ref{cond:out-reg_equiv} in \Cref{def:reg_equiv} obliges an equivalence relation \(E\) to respect the `pattern of outgoing connections' of each vertex: it says that if \(E\) relates \(v\) and \(v'\), then these two vertices must admit outgoing edges to some vertices \(w\) and \(w'\) that are themselves related by \(E\). Condition \ref{cond:in-reg_equiv} obliges \(E\) to respect the `pattern of incoming connections' in the same way. Consequently, blockmodels by outward- or inward-regular equivalences offer more faithful representations of the positions in a network and the ties between them.

\begin{example}\label{eg:graph-blockmodels}
    Consider the graph \(G = (A,R)\) shown on the left in \Cref{fig:graph-blockmodels}. For the sake of illustration, suppose that \(G\) represents a small family unit, and \(R\) is the `parent' relation: \((v,v') \in R\) if \(v'\) is the parent of \(v\). \Cref{fig:graph-blockmodels} also shows blockmodels of \(G\) with respect to two different equivalence relations on \(A\).
    
    The equivalence relation \(E_1\) is not inward-regular: it treats \(b\) as equivalent to \(a\) although \(a\) admits an incoming edge (\(a\) `is a parent') while \(b\) does not. It is not outward-regular either, since it treats \(a\) as equivalent to \(b\) although \(b\) admits an outgoing edge (\(b\) `has a parent') while \(a\) does not.

    The equivalence relation \(E_2\), however, is a regular equivalence. The positions in \(\block{G}{E_2}\) correspond to the `generations' of this small family.
\end{example}

Of the three notions introduced in \Cref{def:reg_equiv}, regular equivalence, which respects both outgoing and incoming connections, arguably captures the idea of `equivalent positions' most accurately. Yet, there are good reasons to consider the larger classes of outward-~or inward-regular equivalences. Pattison \cite[p.389]{Pattison1988} argues for their usefulness from a social-scientific perspective, and there are also mathematical reasons to allow this generality. As we will see in \Cref{subsec:graph-functoriality}, either outward- or inward-regularity is enough on its own to guarantee the compatibility of positional analysis with \emph{role analysis}, described in \Cref{subsec:role}. Later, in \Cref{sec:positional_bisim}, we will find that outward- and inward-regular equivalences are especially natural to consider from the perspective of universal coalgebra.


\subsection{Role analysis}\label{subsec:role}

Often one is interested in more than one type of relation among the same set of actors---that is, more than one graph structure on the same set of vertices. Role analysis concerns the structure of such models, which we will refer to as \emph{multirelational graphs}. (Elsewhere in the literature, different terms are used; for example, \cite{Ostoic2020} and \cite{PioLopezEtAl2021} refer to these as `multiplex graphs'.)

\begin{definition}\label{def:multirelational-graph}
     A \define{\(k\)-relational graph} is a pair \(G = (A, (R_i)_{i=1}^k)\) in which each \(G_i = (A,R_i)\) is a graph. A \define{map of \(k\)-relational graphs}
     \[(A, (R_i)_{i=1}^k) \to (A', (R_i')_{i=1}^k)\]
     is a function \(A \to A'\) which is a map of graphs \((A, R_i) \to (A', R_i')\) for each \(i \in \{1,\ldots,k\}\).
     We denote the category of \(k\)-relational graphs and these maps by $k\Graph$. We will also refer to \(k\)-relational graphs as \define{multirelational graphs}.
\end{definition}

Role analysis studies the \emph{compound} or \emph{composite} relations generated by a multirelational graph. Canonical early papers on this topic include Boorman and White \cite{BoormanWhite1979} and Pattison \cite{Pattison1982}.

Consider, for example, the multirelational graph shown in \Cref{fig:family-network}. Suppose the vertices are members of a family and the relation \(P\) is the `parent' relation while \(S\) is the `sister' relation and \(B\) is the `brother' relation. An anthropologist studying European family structures will no doubt be aware of the so-called `aunt' relation, which can be described as the \emph{composite} of the parent and the sister relation, in the following sense.

\begin{definition}\label{def:graph-composition}
    Given two graphs \(G = (A, R_1)\) and \(H = (A, R_2)\) on the same set of vertices, their \define{composite} is the graph \(H \ast G = (A, R_2 \ast R_1)\) where
    \[R_2 \ast R_1 = \{(v,w) \mid \text{there exists } u \in A \text{ such that } (u, w) \in R_2 \text{ and } (v,u) \in R_1\}.\]
    We will sometimes denote \(R_2 \ast R_1\) by \(R_2R_1\).
\end{definition}

Knowing the significance of the `aunt' relation, our anthropologist may find herself wondering what other composite relations exist within this social system. To answer this question amounts to considering what is called the \emph{semigroup of roles} in \(G\). To define the semigroup of roles, let us first observe that the binary operation \(\ast\) is associative, so the set of all graph structures on \(A\) forms a semigroup under composition. We will denote that semigroup by \(\mathbb{G}(A)\).

\begin{definition}
    The \define{semigroup of roles} \(\Role(G)\) in a multirelational graph \(G = (A, (R_i)_{i=1}^k)\) is the sub-semigroup of \(\mathbb{G}(A)\) generated by \((R_i)_{i=1}^k\). We write
    \[\Role(G) = \langle (A,R_1), \ldots, (A,R_k) \rangle_\ast.\]
\end{definition}

\begin{remark}\label{rem:GA-monoid}
    In fact, \(\mathbb{G}(A)\) is a monoid: its unit element is the graph whose edge set is \(\delta_A = \left\{(a,a) \mid a \in A \right\}\). Typically, though, \(\Role(G)\) will not be a submonoid, because it need not contain \(\delta_A\). The monoid \(\mathbb{G}(A)\) also has an absorbing element: the empty relation \(\emptyset \subseteq A \times A\) satisfies \(\emptyset \ast R = \emptyset = R \ast \emptyset\) for every \(R \in \mathbb{G}(A)\). We will denote the empty relation by \(0\).
\end{remark}

The semigroup of roles can easily be much larger than the generating set of relations---indeed, it quickly becomes too large to be readily comprehensible. The goal of role analysis is to identify a smaller semigroup whose structure still captures information about the structure of \(\Role(G)\). Formally, just as in positional analysis one is looking for a quotient of the graph modelling a social system, in role analysis one is looking for a quotient of the semigroup of roles. We will call such a quotient a \emph{role reduction}.

\begin{definition}\label{def:role-reduction}
    A \define{role reduction} of a multirelational graph \(G\) is a semigroup \(\mathbb{S}\) equipped with a surjective semigroup homomorphism \(\Role(G) \twoheadrightarrow \mathbb{S}\).
\end{definition}

\begin{example}\label{eg:parent-sibling}
    Consider the multirelational graph \(F = (A, (S,B,P))\) at the upper left of \Cref{fig:parent-sibling}; it is a fragment of the family in \Cref{fig:family-network}. At the upper right of \Cref{fig:parent-sibling} is the multiplication table of \(\Role(F)\). 
    
    Inspecting the table, one can see there is a congruence on \(\Role(F)\) in which \(S\) is equivalent to \(B\); the multiplication table of the quotient semigroup \(\mathbb{S}\) is shown at the lower right of \Cref{fig:parent-sibling}. The semigroup \(\mathbb{S}\) is generated by the elements \(\overline{S}\) and \({P}\), so we have \(\mathbb{S} = \Role(\overline{F})\) where \(\overline{F} = (A, (\overline{S}, {P}))\) is the multirelational graph shown at the lower left of the figure.
    
    This role reduction can be interpreted as combining the distinct roles of `sister' and `brother' into the single role of `sibling'. By doing so, it reduces the semigroup of roles from seven non-zero elements to four. For the larger family in \Cref{fig:family-network}, this reduction would also have the effect of combining the roles of `aunt' (\(S \ast P\)) and `uncle' (\(B \ast P\)) into the single role of `parent's sibling' (\(\overline{S} \ast P\)).
\end{example}

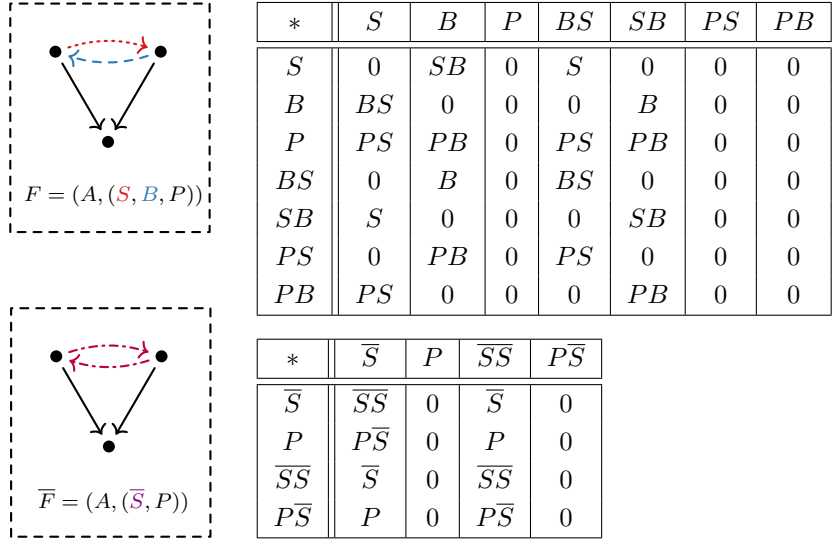
\begin{figure}
    \centering
    \begin{minipage}{0.3\textwidth}
        \centering
        \begin{tikzpicture}[line cap=round, line join=round, thick, yscale=0.8] 
        \tikzset{ dot/.style={circle, fill, inner sep=1.6pt}, } 
        \def\bend{22}
        \def\gap{2.5pt} 
    
        \node[dot] (a) at (-2.1, 2.2) {}; 
        \node[dot] (b) at (-0.7, 2.2) {}; 
        \node[dot] (d) at (-1.4, 0.7) {}; 

        \node[anchor=east] (g) at ( 0, -.2) {\footnotesize\(F = (A, ({\color{red} S}, {\color{blue}{B}}, {\color{black} P}))\)};
    
        \draw[->, shorten <=\gap, shorten >=\gap, black] (a) -- (d); 
        \draw[->, shorten <=\gap, shorten >=\gap, black] (b) -- (d); 
        
        \draw[->, shorten <=\gap, shorten >=\gap, red, dotted] (a) to[bend left=\bend] (b);

        \draw[->, shorten <=\gap, shorten >=\gap, blue, dashed] (b) to[bend left=\bend] (a);
        
        \draw[dashed] (-2.7, -.8) rectangle (-.05, 3); 
        \end{tikzpicture}

        \vspace{9.5mm}

        \begin{tikzpicture}[line cap=round, line join=round, thick, yscale=0.8] 
        \tikzset{ dot/.style={circle, fill, inner sep=1.6pt}, } 
        \def\bend{22}
        \def\gap{2.5pt} 
    
        \node[dot] (a) at (-2.1, 2.2) {}; 
        \node[dot] (b) at (-0.7, 2.2) {}; 
        \node[dot] (d) at (-1.4, 0.7) {}; 
        
        \node[anchor=east] (g) at ( -.2, -.2) {\footnotesize\(\overline{F} = (A, ({\color{violet} \overline{S}}, {\color{black} {P}}))\)};
    
        \draw[->, shorten <=\gap, shorten >=\gap, black] (a) -- (d); 
        \draw[->, shorten <=\gap, shorten >=\gap, black] (b) -- (d); 
        
        \draw[->, shorten <=\gap, shorten >=\gap, purple, dashdotted] (a) to[bend left=\bend] (b); 
        \draw[->, shorten <=\gap, shorten >=\gap, purple, dashdotted] (b) to[bend left=\bend] (a);
        
        \draw[dashed] (-2.7, -.8) rectangle (-.05, 3); 
        \end{tikzpicture}
    \end{minipage}
    \hfill 
    \begin{minipage}{0.69\textwidth}
        \renewcommand{\arraystretch}{1.2}
        \begin{tabular}{| c || c | c | c | c | c | c | c |}
        \hline
        \(\ast\) & \(S\)& \(B\) & \(P\) & \(BS\)& \(SB\) & \(PS\)& \(PB\) \\ \hline \hline
        \(S\)& 0 & \(SB\) & 0 & \(S\)& 0 & 0 & 0 \\
        \(B\) & \(BS\)& 0 & 0 & 0 & \(B\) & 0 & 0 \\
        \(P\) & \(PS\)& \(PB\) & 0 & \(PS\)& \(PB\) & 0 & 0 \\
        \(BS\)& 0 & \(B\) & 0 & \(BS\)& 0 & 0 & 0 \\
        \(SB\) & \(S\)& 0 & 0 & 0  & \(SB\) & 0 & 0 \\
        \(PS\)& 0 & \(PB\) & 0 & \(PS\)& 0 & 0 & 0 \\
        \(PB\) & \(PS\)& 0 & 0 & 0 & \(PB\) & 0 & 0 \\
        \hline
        \end{tabular}

        \vspace{2.7mm}
        
        \begin{tabular}{| c || c | c | c | c |}
        \hline
        \(\ast\) & \(\overline{S}\) & \({P}\) & \(\overline{S}\overline{S}\) & \({P}\overline{S}\) \\ \hline \hline
        \(\overline{S}\) & \(\overline{S}\overline{S}\) & 0 & \(\overline{S}\) & 0 \\
        \({P}\) & \({P}\overline{S}\) & 0 & \({P}\) & 0 \\
        \(\overline{S}\overline{S}\) & \(\overline{S}\) & 0 & \(\overline{S}\overline{S}\) & 0 \\
        \({P}\overline{S}\) & \({P}\) & 0 & \({P}\overline{S}\) & 0
        \\
        \hline
        \end{tabular}
    \end{minipage}

    \caption{A role reduction in which the `sister' and `brother' relations \(S\) and \(B\) are identified within the `sibling' relation \(\overline{S}\). See \Cref{eg:parent-sibling}.}
    \label{fig:parent-sibling}
\end{figure}

\begin{remark}
\label{R:non-associative}
    All compositions of relations that we consider in this paper are associative: they satisfy \((R_3 \ast R_2) \ast R_1 = R_3 \ast (R_2 \ast R_1)\) for all \(R_1,R_2,R_3\). However, there are contexts where the order in which relations are composed may alter the outcome---for instance, when the interpretation of a given kinship relation is gender-dependent. This is the case for the different roles that cousins play in some cultures, as discussed by Lévi-Strauss \cite{levi1971elementary}, who distinguishes between cousins that are considered siblings (the children of the mother's sisters and father's brothers) and cousins who may be spouses (the children of the mother's brothers and father's sisters). We leave the study of such types of composition for future work; we discuss related open questions in Section \ref{sec:future_work}. 
\end{remark}


\subsection{Compatibility of positional and role analysis}
\label{subsec:graph-functoriality}

Positional analysis and role analysis emerged over the same period in the 1970s and early 1980s, and naturally researchers were interested in performing both types of analysis on the same sets of data. Many early papers are concerned with the question of when the two forms of simplification---blockmodels and role reductions---are `compatible' in the sense that one determines the other; see, for example, \cite{Sailer1978} and the related discussion in \cite[pp.353-354]{Wasserman_Faust_1994}.

One version of this question can be formulated as follows. Given a \(k\)-relational graph \(G = (A, (R_i)_{i=1}^k)\), any equivalence relation \(E\) on \(A\) determines a simultaneous blockmodel of \(R_1,\ldots, R_k\): a map of \(k\)-relational graphs
\[G \twoheadrightarrow \block{G}{E} = (\block{A}{E}, (\block{R_i}{E})_{i=1}^k),\]
where, for each \(i\), the graph \((\block{A}{E}, \block{R_i}{E})\) is the blockmodel of \((A,R_i)\) by \(E\). The multirelational graphs \(G\) and \(\block{G}{E}\) each possess a semigroup of roles; the question is whether there exists a corresponding role reduction
\[\Role(G) \twoheadrightarrow \Role(\block{G}{E}).\]
Certainly, some simultaneous blockmodels do induce role reductions.

\begin{example}\label{eg:generations}
    Consider the 2-relational graph \(\overline{F}\) in \Cref{fig:parent-sibling} and its blockmodel \(\block{\overline{F}}{E}\) in \Cref{fig:generations}; the positions here are the `generations' of this small family. The quotient \(\overline{F} \twoheadrightarrow \block{\overline{F}}{E}\) induces a role reduction \(\Role(\overline{F}) \twoheadrightarrow \Role(\block{\overline{F}}{E})\), determined by \(\overline{S} \mapsto \block{\overline{S}}{E}\) and \(P \mapsto \block{P}{E}\); it identifies the role of `parent' with that of `sibling's parent'. 
\end{example}

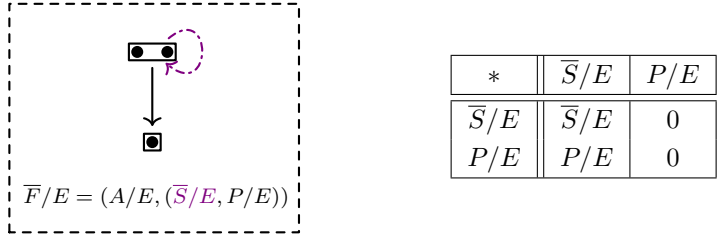
\begin{figure}
    \centering
    \begin{minipage}{0.45\textwidth}
        \centering
        \begin{tikzpicture}[line cap=round, line join=round, thick, yscale=0.8] 
        \tikzset{ dot/.style={circle, fill, inner sep=1.6pt}, } 
        \def\bend{22}
        \def\gap{2.5pt} 
    
        \node[dot] (a) at (-1.6, 2.2) {}; 
        \node[dot] (b) at (-1.2, 2.2) {}; 
        \node[dot] (d) at (-1.4, 0.7) {}; 
        \node[rectangle,draw,fit=(a) (b),inner sep=.5] (ab) {};
        \node[rectangle,draw,fit=(d),inner sep=.5] (dd) {};
        
        \node[anchor=east] (g) at ( .55, -.2) {\footnotesize\(\block{\overline{F}}{E} = (\block{A}{E}, ({\color{violet} \block{\overline{S}}{E}}, {\color{black} \block{P}{E}}))\)};
    
        \draw[->, shorten <=\gap, shorten >=\gap, black] (ab) -- (dd); 
        \path[->] (ab) edge [shorten <=\gap, shorten >=\gap, out=45, in=315, looseness=4, loop, violet, dashdotted] (ab);
        
        \draw[dashed] (-3.3, -.8) rectangle (.55, 3); 
        \end{tikzpicture}
    \end{minipage}
    \hfill 
    \begin{minipage}{0.45\textwidth}
        \renewcommand{\arraystretch}{1.2}
        
        \begin{tabular}{| c || c | c |}
        \hline
        \(\ast\) & \(\block{\overline{S}}{E}\) & \(\block{P}{E}\) \\ \hline \hline
        \(\block{\overline{S}}{E}\) & \(\block{\overline{S}}{E}\) & 0 \\
        \(\block{P}{E}\) & \(\block{P}{E}\) & 0 \\
        \hline
        \end{tabular}
    \end{minipage}

    \caption{A blockmodel of \(\overline{F}\) (\Cref{fig:parent-sibling}) into `generations', inducing a role reduction that identifies `parent' with `sibling's parent'. See \Cref{eg:generations}.}
    \label{fig:generations}
\end{figure}

Yet, this is not always the case, as the next example shows.

\begin{example}\label{eg:not-all-blockmodels}
    Consider the graph \(G = (A,R)\) in which \(A = \{a,a',b,b'\}\) and \(R = \{(b,a), (a',b')\}\). Let \(E \subseteq A \times A\) be the equivalence relation generated by \(\{(a,a'), (b,b')\}\). The blockmodel \(\block{G}{E}\) has two vertices, \(X = \{a,a'\}\) and \(Y = \{b,b'\}\), and its edge set is \(\block{R}{E} = \{(X,Y),(Y,X)\}\). 
    
    Since \(R \ast R = 0\), the semigroup \(\Role(G)\) has two elements, \(R\) and \(0\). Since \((\block{R}{E}) \ast (\block{R}{E}) = \delta_{\block{A}{E}}\), the semigroup \(\Role(\block{G}{E})\) also has two elements, \(\block{R}{E}\) and \(\delta_{\block{A}{E}} =: \delta\). In this case, there is no role reduction from \(\Role(G)\) to \(\Role(\block{G}{E})\). For, suppose a semigroup homomorphism \(\phi \colon \Role(G) \to \Role(\block{G}{E})\) were to satisfy \(\phi(R) = \block{R}{E}\); then it would satisfy
    \[\phi(0) = \phi(R \ast R) = \phi(R) \ast \phi(R) = (\block{R}{E}) \ast (\block{R}{E}) = \delta.\]
    But in that case we would have
    \[\phi(E \ast 0) = \phi(0) = \delta \neq (\block{R}{E}) \ast \delta = \phi(R) \ast \phi(0),\]
    contradicting that \(\phi\) is a homomorphism. Meanwhile, if \(\phi(R) = \delta\) then
    \[\phi(0) = \phi(R \ast R) = \phi(R) \ast \phi(R) = \delta \ast \delta = \delta\]
    so this semigroup homomorphism \(\phi\) is not surjective.
\end{example}

The question, then, is what conditions one might place on \(E\) to ensure that \(G \twoheadrightarrow \block{G}{E}\) does induce a role reduction. Notice that the equivalence relation \(E\) in \Cref{eg:generations} is outward-regular, while that in \Cref{eg:not-all-blockmodels} is neither outward- nor inward-regular. As it turns out, either outward- or inward-regularity with respect to each of the graphs \((A,R_i)\), is sufficient to ensure that the simultaneous blockmodel of \(G = (A, (R_i)_{i=1}^k)\) by \(E\) induces a role reduction.

This was understood within social network analysis by the early 1980s; for instance, it is stated without proof by Pattison in \cite[pp.91--92]{Pattison1982} as well as in \cite[p.181]{Pattison1993}, where it is attributed to  J.~Q.~Johnson. (Pattison calls the conditions of outward- and inward-regularity the `out-degree condition' and the `in-degree condition'.) To lay groundwork for the rest of the paper, we will give a formal statement of this folklore result. In due course we will see that \Cref{prop:induced-map-graph-case} can be derived from our \Cref{thm:functoriality-coalgebras}.

\begin{proposition}
\label{prop:induced-map-graph-case}
    Let \(G = (A, (R_i)_{i=1}^k)\) be a multirelational graph and let \(E \subseteq A \times A\) be an outward-regular equivalence on \((A,R_i)\) for all \(i \in \{1,\ldots,k\}\). Then there is a role reduction
    \[\phi \colon \Role(G) \twoheadrightarrow \Role\left(\block{G}{E}\right)\]
    defined by \(\phi(R) = \block{R}{E}\).
    The same holds if \(E\) is an inward-regular equivalence on \((A,R_i)\) for all \(i \in \{1,\ldots,k\}\).
\end{proposition}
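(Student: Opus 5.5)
The plan is to show that the blockmodel map \(R \mapsto \block{R}{E}\), defined on all of \(\mathbb{G}(A)\), is multiplicative on a suitable subset that contains \(\Role(G)\). It then restricts to a surjective homomorphism onto \(\Role(\block{G}{E})\). Write \(q \colon A \to \block{A}{E}\) for the quotient map, so that \(\block{R}{E} = \{(q(v),q(w)) \mid (v,w) \in R\}\). For any relations \(R, R'\) we have \(\block{(R' \ast R)}{E} \subseteq (\block{R'}{E}) \ast (\block{R}{E})\): a witness \(u\) for \((v,w) \in R'\ast R\) gives the witness \(q(u)\) upstairs. So the whole content lies in the reverse inclusion, and that is where outward-regularity enters.

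The key step is a lemma: if \(E\) is outward-regular for \((A,R)\) and for \((A,R')\), then it is outward-regular for \(R' \ast R\), and \((\block{R'}{E}) \ast (\block{R}{E}) \subseteq \block{(R' \ast R)}{E}\).

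For the inclusion, take \((X,Z)\) in the right-hand side, with \(Y\) as witness. Then there are \((x,y) \in R\) with \(x\in X\), \(y \in Y\), and \((y',z) \in R'\) with \(y' \in Y\), \(z \in Z\). Since \((y',y) \in E\) and \(E\) is outward-regular for \(R'\), there is \((y,z') \in R'\) with \((z,z') \in E\). Then \((x,z') \in R' \ast R\) and \(q(z') = Z\), as required. Notice that only outward-regularity of the left factor \(R'\) is used here.

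Regularity of the composite is just as direct. Given \((a,a') \in E\) and \((a,c) \in R'\ast R\) via \(b\), regularity for \(R\) gives \((a',b') \in R\) with \(b E b'\). Regularity for \(R'\) then gives \((b',c') \in R'\) with \(c E c'\).

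By induction on word length, every element of \(\Role(G)\), being a finite composite of generators \(R_i\), is outward-regular for \(E\). Hence \(\phi(R' \ast R) = \phi(R') \ast \phi(R)\) for all \(R, R' \in \Role(G)\).

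The main point to get right is well-definedness. Since \(\phi\) is defined as the restriction of the function \(R \mapsto \block{R}{E}\) on \(\mathbb{G}(A)\), it is automatically a function on the set \(\Role(G)\). This is unlike \Cref{eg:not-all-blockmodels}, where one tries to define a map on generators. So \(\phi\) is a homomorphism into \(\mathbb{G}(\block{A}{E})\). Its image is the sub-semigroup generated by the \(\block{R_i}{E}\), which is exactly \(\Role(\block{G}{E})\), so \(\phi\) is surjective onto it.

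For the inward-regular case, the plan is to dualize rather than repeat the argument. Reversing all relations, \(R \mapsto R^{\mathrm{op}}\), turns inward-regularity into outward-regularity and satisfies \((R'\ast R)^{\mathrm{op}} = R^{\mathrm{op}} \ast R'^{\mathrm{op}}\). It also commutes with taking blockmodels. Applying the outward case to \((A,(R_i^{\mathrm{op}}))\) gives the corresponding statement for the opposite semigroups, and reversing again yields the claim. Alternatively, one can rerun the inclusion argument, this time using inward-regularity of the right factor \(R\) to lift \(y\) back to some \(x'\) with \((x',y') \in R\).
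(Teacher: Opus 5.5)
Your proof is correct, but it is not the route the paper takes. The paper gives no direct argument; it obtains the proposition as the case \(T=\power\) of \Cref{thm:functoriality-coalgebras}. There, outward-regularity of \(E\) for \(R_i\) says exactly that the quotient \(q\colon A\to\block{A}{E}\) is a surjective \(\power\)-coalgebra homomorphism from the out-neighbourhood coalgebra of \((A,R_i)\) to that of \((\block{A}{E},\block{R_i}{E})\). \Cref{lem:f_mult} carries this over to Kleisli composites. \Cref{lem:prod_eq} then makes the map defined on words in the generators well defined. Since a surjective homomorphism out of \(\rho\) forces the target structure to be \([a]\mapsto q(\rho(a))\), this map also sends each \(\rho\) to its blockmodel (cf.\ \Cref{eg:P-blockmodels}). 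The inward case uses the in-neighbourhood coalgebras of \Cref{rem:in-neighbourhood}, which reverse the order of composition just as your \(R\mapsto R^{\mathrm{op}}\) does.

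Your closure lemma is \Cref{lem:f_mult} unpacked for \(\power\). By defining \(\phi\) globally as \(R\mapsto\block{R}{E}\), you replace the well-definedness step (\Cref{lem:prod_eq}) with a direct multiplicativity check. Your route gives an elementary, self-contained proof and a sharper fact: \(\block{(R'\ast R)}{E}=(\block{R'}{E})\ast(\block{R}{E})\) holds whenever the left factor \(R'\) alone is outward-regular. So peeling generators off the left already gives multiplicativity on \(\Role(G)\), and your closure-under-composition step is not actually needed. (One small slip: in that inclusion argument, \((X,Z)\) should be taken in the left-hand side \((\block{R'}{E})\ast(\block{R}{E})\).) The paper's route is independent of the concrete formulas for composition and blockmodels. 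The same argument therefore applies verbatim to any semimonad, in particular to both multiplications on \(\power\power\), which yields \Cref{thm:functoriality-hypergraphs} for tight and loose roles. Your elementwise method would have to be redone by hand for each composition law.
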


The proposition motivates us to give a name to the map that takes the quotient of a multirelational graph by an outward-regular equivalence. (From here on, for brevity, we will make various definitions and statements in terms of outward-regularity only; each of these could be made in terms of inward-regularity instead.) First it will be helpful to characterize these maps concretely, for which we use the following definition.

\begin{definition}
\label{def:reflect-outgoing-edges}
    A map \(f \colon (A,(R_i)_{i=1}^k) \to (A',(R_i')_{i=1}^k)\) in \(k\Graph\) will be said to \define{reflect outgoing edges} if, for every vertex $a \in A$ and any edge $(f(a), b) \in R_i'$, there exists $a'\in A$ such that $(a, a') \in R_i$ and $f(a')=b$.
\end{definition}

For conciseness, we omit the proof of the next lemma. In due course, it can be recovered as a special case of \Cref{lem:coalgebra-PR-concretely}.

\begin{lemma}
\label{lem:graph-PR-concretely}
    For a map \(f \colon (A, (R_i)_{i=1}^k) \to (A', (R_i')_{i=1}^k)\) in \(k\Graph\), the following are equivalent.
    \begin{itemize}
        \item The equivalence relation \(E = \{(a,a') \mid f(a) = f(a')\}\) is outward-regular on \((A,R_i)\) for each \(i \in \{1,\ldots, k\}\), and \((A',R_i') = (\block{A}{E}, \block{R_i}{E})\); the function $f$ sends each vertex \(a \in A\) to its equivalence class in \(\block{A}{E}\).
        \item The function $f \colon A \to A'$ is surjective, and \(f\) reflects outgoing edges.
    \end{itemize}
\end{lemma}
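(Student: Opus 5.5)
We prove both implications directly from the definitions, treating one index \(i\) at a time. Throughout, \(E = \{(a,a') \mid f(a) = f(a')\}\) denotes the kernel of \(f\).

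\emph{First condition implies second.} Since \(f\) sends each vertex to its equivalence class, it is the quotient map \(A \to \block{A}{E}\), and so it is surjective. To show that \(f\) reflects outgoing edges, fix \(a \in A\) and an edge \((f(a), Y) \in \block{R_i}{E}\). By \Cref{def:blockmodel} there are \(x \in f(a)\) and \(y \in Y\) with \((x,y) \in R_i\). Now \((x,a) \in E\), so outward-regularity applied to \((x,a)\) and the edge \((x,y)\) yields an edge \((a,a') \in R_i\) with \((y,a') \in E\). Hence \(f(a') = [y] = Y\), which is exactly the reflection condition.

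\emph{Second condition implies first.} This direction has three parts.
\begin{enumerate}
\item Outward-regularity. Take \((a,a') \in E\) and \((a,b) \in R_i\). Since \(f\) is a map of graphs, \((f(a), f(b)) \in R_i'\), and \(f(a) = f(a')\). Reflecting this edge at \(a'\) gives \(b'\) with \((a',b') \in R_i\) and \(f(b') = f(b)\), so \((b,b') \in E\).
\item Identifying the vertex sets. Because \(f\) is surjective, the induced map \(\bar f \colon \block{A}{E} \to A'\), \([a] \mapsto f(a)\), is a well-defined bijection. We use \(\bar f\) to identify \(A'\) with \(\block{A}{E}\); under this identification \(f\) becomes the quotient map.
\item Identifying the edge sets. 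We need \(R_i' = \{(f(x), f(y)) \mid (x,y) \in R_i\}\), since the right-hand side corresponds to \(\block{R_i}{E}\) under \(\bar f\). The inclusion \(\supseteq\) holds because \(f\) preserves edges. For \(\subseteq\), take \((u,v) \in R_i'\). Surjectivity gives \(a\) with \(f(a) = u\), and reflecting the edge \((u,v)\) at \(a\) gives \(a'\) with \((a,a') \in R_i\) and \(f(a') = v\).
\end{enumerate}

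\emph{Main obstacle.} The only delicate point is how to read ``\((A', R_i') = (\block{A}{E}, \block{R_i}{E})\)''. Literal equality of sets is too strong; it should be understood as isomorphism via the canonical bijection \(\bar f\), compatible with the quotient map. Once this is made explicit, everything else is unwinding definitions. In particular, the equality of edge sets in part 3 needs both surjectivity and reflection, since reflection alone only applies at vertices already in the image of \(f\).
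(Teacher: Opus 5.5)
Your proof is correct, but it does not follow the paper's route. The paper gives no direct proof. It says the lemma is the case \(T = \power\) of the coalgebraic First Isomorphism Theorem, \Cref{lem:coalgebra-PR-concretely}, read through three translations:
\(\power\)-coalgebra homomorphisms are exactly the graph maps that reflect outgoing edges (\Cref{eg:P-coalgebra-homs});
outward-regular equivalences are exactly \(\power\)-bisimulation equivalences (\Cref{subsec:coalg-bisimulation});
\(\power\)-blockmodels agree with graph blockmodels (\Cref{eg:P-blockmodels}).
The abstract lemma in turn rests on three of Rutten's results: coalgebra epimorphisms are surjective, the kernel of a homomorphism is a bisimulation, and the forgetful functor creates coequalizers. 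Your steps are the concrete content of these facts. Your step 1 is the statement that the kernel of a homomorphism is a bisimulation, and you rightly note it needs no surjectivity. Your step 3 is the uniqueness of the quotient structure \([a] \mapsto (\power q)(N^\mathrm{out}_G(a))\) once \(f\) is surjective, which is the same mechanism as \Cref{lem:prod_eq}. Your first direction shows that, for an outward-regular \(E\), this formula is well defined and agrees with the graph blockmodel, so the quotient map is a coalgebra homomorphism. Your elementary route buys two things. It is self-contained, with no external citations, and it is precise about where each hypothesis is used. It also makes explicit that \((A',R_i') = (\block{A}{E}, \block{R_i}{E})\) holds up to the canonical bijection \(\bar f\), which the paper leaves implicit. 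The paper's route buys uniformity: the same argument works for any \(T\), and with \(\power\power\) it gives \Cref{lem:pos-red-equiv} for F-hypergraphs without redoing the verification.
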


On the strength of the lemma, we make the following definition.

\begin{definition}
\label{def:positional-reduction-graph}
    A \define{positional reduction} of a \(k\)-relational graph \(G\) is a map \(G \twoheadrightarrow H\) in \(k\Graph\) which is surjective on vertices and reflects outgoing edges.
\end{definition}

In practice, positional analysis typically involves a sequence of positional reductions, grouping actors together in stages into a nested family of blockmodels \cite[\S9.5.1]{Wasserman_Faust_1994}. One would like to know that the induced role reductions behave consistently. Otter and Porter \cite{OtterPorter2020} frame this consistency in terms of functoriality, and that is the approach we will adopt throughout this paper. 

To state the functoriality theorem for graphs, let us first observe that the composite of two positional reductions is again a positional reduction, so there is a category \(k\Graph_{\mathsf{PR}}\) of \(k\)-relational graphs and positional reductions. We denote by \(\mathbf{SGrp}_\mathsf{Q}\) the category of semigroups and their quotient maps, which are precisely surjective semigroup homomorphisms.

\begin{theorem}
\label{thm:functoriality-graphs}
    The assignment of the semigroup of roles extends to a functor
    \[\Role\colon k\Graph_{\mathsf{PR}} \to \mathbf{SGrp}_\mathsf{Q}.\]
\end{theorem}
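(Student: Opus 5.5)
The functor is defined on objects by \(G \mapsto \Role(G)\). On a positional reduction \(f \colon G = (A,(R_i)_{i=1}^k) \twoheadrightarrow H = (A',(R_i')_{i=1}^k)\), the functor sends \(f\) to the map \(\Role(f) \colon \Role(G) \to \Role(H)\) given by direct image:
\[
\Role(f)(R) = f_*R = \{(f(a),f(b)) \mid (a,b) \in R\}.
\]
By \Cref{lem:graph-PR-concretely} we may identify \(H\) with \(\block{G}{E}\) for \(E\) the kernel of \(f\); then \(f_*R\) is exactly \(\block{R}{E}\), so this is the map \(\phi\) of \Cref{prop:induced-map-graph-case}. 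We could simply invoke that proposition, but I would rather give a direct argument for the key identity
\[
f_*(S \ast R) = f_*S \ast f_*R \qquad \text{for all } R,S \in \Role(G),
\]
since this also settles functoriality.

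The inclusion \(\subseteq\) holds for any map of graphs. The reverse inclusion is where reflection of outgoing edges enters. For it to be usable, I first prove by induction on word length that every \(R \in \Role(G)\) satisfies the reflection property relative to \(f_*R\): for every \(a \in A\) and every \((f(a),y) \in f_*R\), there is \(b\) with \((a,b) \in R\) and \(f(b) = y\).
\begin{itemize}
\item For a generator \(R_i\), we need \(f_*R_i = R_i'\), i.e.\ that every edge of \(R_i'\) is the image of an edge of \(R_i\). Given \((x,y) \in R_i'\), surjectivity gives \(a\) with \(f(a) = x\), and reflection then lifts the edge. The reflection property for \(R_i\) is then just the hypothesis on \(f\).
\item For a composite \(S \ast R\), take \((f(a),z) \in f_*(S\ast R) \subseteq f_*S \ast f_*R\). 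Choose \(y\) with \((f(a),y) \in f_*R\) and \((y,z) \in f_*S\).
\item Lift the first step using the property for \(R\): this gives \(b\) with \((a,b) \in R\) and \(f(b) = y\).
\item Lift the second step from \(b\) using the property for \(S\): this gives \(c\) with \((b,c) \in S\) and \(f(c) = z\).
\end{itemize}
The same two-step lifting, applied to an arbitrary \((x,z) \in f_*S \ast f_*R\) with \(x = f(a)\) by surjectivity, proves \(\supseteq\). So I will state the lifting for a pair \((R,S)\) already having the property, and derive both the homomorphism identity and the inductive step from it.

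It follows that \(\Role(f)\) is a semigroup homomorphism. Its image is a sub-semigroup of \(\mathbb{G}(A')\) containing the generators \(R_i' = f_*R_i\), hence it equals \(\Role(H)\); so \(\Role(f)\) is well defined into \(\Role(H)\) and surjective. Functoriality is then immediate: \((g\circ f)_* = g_*\circ f_*\) and \((\mathrm{id}_A)_* = \mathrm{id}\), since these hold for direct images of relations in general. This also uses that composites of positional reductions are positional reductions, which was observed just before the statement.

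The main obstacle is the reverse inclusion \(f_*S \ast f_*R \subseteq f_*(S\ast R)\). The failure in \Cref{eg:not-all-blockmodels} shows this is exactly where hypotheses are needed. The delicate point is that reflection is assumed only for the generators, so it has to be propagated to all of \(\Role(G)\) by the induction above before the lifting argument can be applied to arbitrary roles.
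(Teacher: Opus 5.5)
Your proof is correct. The map you define is the paper's: under the identification of $H$ with $\block{G}{E}$, the direct image $f_*R$ is the blockmodel $\block{R}{E}$. Your functoriality step (direct images compose) is the paper's observation that blockmodels compose.

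The difference is in how you get the homomorphism property. The paper's proof of this theorem simply invokes \Cref{prop:induced-map-graph-case}. It records that proposition as folklore without proof and justifies it only later, as the case $T = \power$ of \Cref{thm:functoriality-coalgebras}. You prove it directly.

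Your inductive invariant is that $f$ reflects the outgoing edges of every role $R$ relative to $f_*R$. This says precisely that $f$ is a $\power$-coalgebra homomorphism $N^{\mathrm{out}}_{(A,R)} \to N^{\mathrm{out}}_{(A',f_*R)}$. Your two-step lifting is then the elementwise form of \Cref{lem:f_mult}, which the paper gets from naturality of $\mu$. Your use of surjectivity to identify $f_*R_i$ with $R_i'$, and $f_*(S \ast R)$ with $f_*S \ast f_*R$, is the graph instance of \Cref{lem:prod_eq}.

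Your route gives a self-contained, elementary argument. In it, well-definedness is free, because $f_*$ is defined on all of $\mathbb{G}(A)$ rather than on words in the generators. The paper's route gives generality: a single diagram chase works uniformly for any semimonad on $\Set$, including both multiplications on $\power\power$. For a general functor $T$ there is no canonical way to push an arbitrary coalgebra structure forward along $f$. That is why the general proof defines the map on words and uses \Cref{lem:prod_eq} to show it does not depend on the chosen decomposition.

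One small tightening. Containing the generators only gives $\Role(H) \subseteq f_*(\Role(G))$. The reverse inclusion comes from $f_*(R_{i_1} \ast \cdots \ast R_{i_m}) = R'_{i_1} \ast \cdots \ast R'_{i_m}$, which your homomorphism identity gives immediately.
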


\begin{proof}
    \Cref{prop:induced-map-graph-case} says that every map \(q \colon G \twoheadrightarrow H\) in \(k\Graph_{\mathsf{PR}}\) induces a map \(\Role(q) \colon \Role(G) \twoheadrightarrow \Role(H)\) in \(\mathbf{SGrp}_\mathsf{Q}\). Explicitly, writing \(V(G)\) for the vertex set of \(G\), the map \(\Role(q)\) takes each graph in \(\Role(G)\) to its blockmodel with vertex set 
    \[V(H) = \block{V(G)}{\{(a,a') \mid q(a) = q(a')\}}.\]
    Thus, \(\Role(\mathrm{Id}_G) = \mathrm{Id}_{\Role(G)}\) and, given \(q \colon G \twoheadrightarrow H\) and \(p \colon H \twoheadrightarrow K\), the homomorphisms \(\Role(p \circ q)\) and \(\Role(p) \circ \Role(q)\) both take each graph in \(\Role(G)\) to its blockmodel with vertex set \(V(K)\). That is, \(\Role(p \circ q) = \Role(p) \circ \Role(q)\).
\end{proof}


\section{Social systems as coalgebras}\label{sec:ss_coalg}

The techniques of positional and role analysis described in \Cref{sec:positional_role} are well-established for social systems modelled by graphs; Wasserman and Faust \cite[Part IV]{Wasserman_Faust_1994} give a comprehensive account of how they can be applied in practice. However, there are many types of social system for which graphs do not provide a sufficiently rich model---see, for example, \cite{BONACICH2004189}. Our goal is  to extend positional and role analysis to systems that involve not just pairwise relations among actors, but higher-order relations too.  Our work inscribes itself in a line of current efforts devoted to generalising methods developed for graphs to higher-order models \cite{BGHS, TBBE21}.

The key to our approach is to interpret a social system not as something static but as a form of \emph{transition system}, and represent it by a category-theoretic gadget known as a \emph{coalgebra}. The purpose of this section is to explain what that means. First, in \Cref{subsec:higher-order-relations}, we will describe how higher-order relations can be represented by a hypergraph. (For a much more comprehensive overview, see \cite{TBBE21}.) Then, in \Cref{subsec:coalgebras} we introduce coalgebras and explain how they can they can be used to encode both graphs and hypergraphs.

Throughout this section and the rest of the paper we will make repeated reference to the covariant powerset functor, which we now introduce formally.

\begin{definition}\label{def:powerset}
    For each set \(X\), let \(\power(X)\) denote the \define{powerset} of \(X\); that is,
    \[\power(X) = \{U \mid U \subseteq X\}.\]
    The \define{(covariant) powerset functor} is the functor \(\power \colon \Set \to \Set\) taking each set to its powerset and each function \(f \colon X \to Y\) to the \define{direct image} function
    \begin{align*}
        \power(f) \colon \power(X) &\to \power(Y) \\
        U &\mapsto f(U) = \{f(u) \mid u \in U\}.
    \end{align*}
    We will refer to the functor \(\power\power \colon \Set \to \Set\) as the \define{double powerset functor}.
\end{definition}

\subsection{Social systems as hypergraphs}
\label{subsec:higher-order-relations}

\begin{figure}
    \centering
    \begin{tikzpicture}[line cap=round, line join=round, thick, yscale=0.8] 
    \tikzset{ dot/.style={circle, fill, inner sep=1.6pt}, 
    over/.style={preaction={draw=white, line width=\pgflinewidth+1.2pt}},}
    \def\bend{16}
    \def\gap{2.5pt} 

    \node[] (a) at (-1.5, 1.7) {a}; 
    \node[] (b) at (-0.2, 1.7) {b}; 
    \node[] (c) at (-2.7, 0.7) {c}; 
    \node[] (d) at (-1.5, -0.3) {d}; 
    \node[] (e) at (-0.2, -0.3) {e}; 

    \node[anchor=east] (g) at ( -2.5, -0.9) {\footnotesize $(1)$}; 

    \draw[shorten <=\gap, shorten >=\gap, black] (a) -- (b); 
    \draw[shorten <=\gap, shorten >=\gap, black] (a) -- (c); 
    \draw[shorten <=\gap, shorten >=\gap, black] (c) -- (d); 
    \draw[shorten <=\gap, shorten >=\gap, black] (a) -- (d); 
    \draw[shorten <=\gap, shorten >=\gap, black] (b) -- (d); 
    \draw[shorten <=\gap, shorten >=\gap, black] (e) -- (b); 
    \draw[shorten <=\gap, shorten >=\gap, black] (d) -- (e); 
    
    \draw[dashed] (-3.2,-1.3) rectangle (0.4,2.7); \end{tikzpicture}
    \quad
    \begin{tikzpicture}[line cap=round, line join=round, thick, yscale=0.8,
    he/.style={draw, semithick},        
    ]
    \tikzset{ dot/.style={circle, fill, inner sep=1.6pt}, } 
    \def\bend{20}
    \def\gap{2.5pt} 

    \node[] (a) at (-1.5, 1.7) {a}; 
    \node[] (b) at (-0.2, 1.7) {b}; 
    \node[] (c) at (-2.7, 0.7) {c}; 
    \node[] (d) at (-1.5, -0.3) {d}; 
    \node[] (e) at (-0.2, -0.3) {e}; 

    \node[anchor=east] (g) at ( -2.5, -0.9) {\footnotesize $(2)$}; 

    \draw[he] \hedgeii{a}{b}{3mm};
    \draw[he] \hedgeiii{a}{d}{c}{3mm};
    \draw[he] \hedgeiii{b}{e}{d}{3mm};
    
    \draw[dashed] (-3.2,-1.3) rectangle (0.4,2.7); 
    \end{tikzpicture}
    \quad
    \begin{tikzpicture}[line cap=round, line join=round, thick, yscale=0.8] 
    \tikzset{ dot/.style={circle, fill, inner sep=1.6pt}, 
    over/.style={preaction={draw=white, line width=\pgflinewidth+1.2pt}},} 
    \def\bend{20}
    \def\gap{.5pt} 

    \node[] (a) at (-1.5, 1.7) {a}; 
    \node[] (b) at (-0.2, 1.7) {b}; 
    \node[] (c) at (-2.7, 0.7) {c}; 
    \node[] (d) at (-1.5, -0.3) {d}; 
    \node[] (e) at (-0.2, -0.3) {e}; 

    \node[anchor=east] (g) at ( -2.5, -0.9) {\footnotesize $(3)$}; 

    \coordinate (ab) at (-0.8, 1.7); 
    \coordinate (abd) at (-2.1, 0.7); 
    \coordinate (bde) at (-0.45, 0.7);

    \draw[->, shorten <=\gap] (a) -- (ab); 
    \draw[->, shorten <=\gap] (c) -- (abd); 
    \draw[->, shorten <=\gap] (b) -- (bde);

    \draw[shorten >=\gap] (ab) -- (b); 
    \draw[shorten >=\gap] (abd) to[bend right=\bend] (a); 
    \draw[shorten >=\gap] (abd) to[bend left=\bend] (d); 
    \draw[shorten >=\gap] (bde) to[bend left=\bend] (d); 
    \draw[shorten >=\gap] (bde) to[bend right=\bend] (e); 
    
    \draw[dashed] (-3.2,-1.3) rectangle (0.4,2.7); 
    \end{tikzpicture}
    
    \caption{Three models of the same co-authorship system. See \Cref{subsec:higher-order-relations}.}
    \label{fig:coauthorship}
\end{figure}
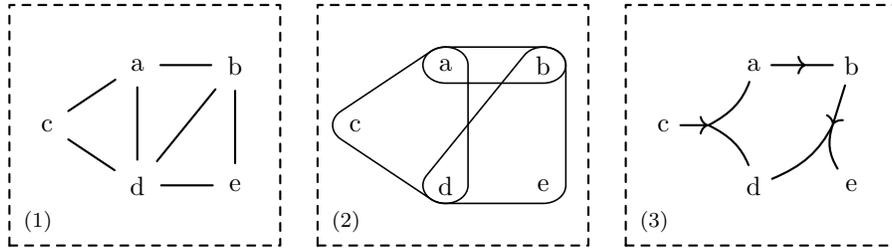

As an example of a social system involving higher-order relations, consider a \emph{co-authorship system}. A basic co-authorship system comprises a set \(A\) of authors and a `bibliography' \(B\): a set of papers written by the authors in \(A\). A good model of the system should faithfully represent the co-authorship relations among the authors. Co-authorship systems have often been thought of as networks and modelled by undirected graphs (e.g.~\cite{Newman2004})---however, such a model is typically {not} a faithful representation.

To be concrete, suppose we are given a set of five authors, \(A = \{a,b,c,d,e\}\), and a set of three papers, \(B = \{P_1,P_2,P_3\}\). Paper \(P_1\) has authors \(a,c,d\), paper \(P_2\) has authors \(a, b\), and \(P_3\) has authors \(b,c,e\). Conventionally, the co-authorship system \((A,B)\) might be modelled by the undirected graph in \Cref{fig:coauthorship}(1), which has vertex set \(A\) and an edge between two vertices if there exists a paper in \(B\) on which they are both authors. This model captures the pairwise co-authorship relations but not the {higher-order} ones: given a triple of vertices linked pairwise by edges, there is no way to tell whether or not there is a paper in \(B\) which carries all three as authors. To capture these higher-order relations one should use, instead of a graph, a \emph{hypergraph}.

\begin{definition}\label{def:undirected-hypergraph}
    A \define{hypergraph} is a pair of sets \((A, H)\) in which \(H \subseteq \power(A)\). The elements of \(A\) are called \define{vertices} and elements of \(H\) are called \define{hyperedges}. A \define{map of hypergraphs} \((A,H) \to (A',H')\) is a function \(f \colon A \to A'\) such that, for every \(U \in H\), the direct image \(f(U)\) belongs to \(H'\). We denote the category of hypergraphs and these maps by \(\HGraph\).
\end{definition}

We depict hypergraphs as in \Cref{fig:coauthorship}(2): the five vertices are the authors in \(A\) and each of the three papers in \(B\) is represented by a single hyperedge containing its authors. This hypergraph is a faithful representation of the pair of sets \((A,B)\). In some contexts, though, it might still not be a sufficiently rich model of the co-authorship system. Suppose the authors in \(A\) happen to belong to a discipline in which authorship {order} is important; in that case one might like the model to record, for each paper, who is first author. For this, one should use a \emph{directed hypergraph}.

\begin{definition}\label{def:directed-hypergraph}
    A \define{directed hypergraph} is a pair of sets \(\mathcal{H} = (A,H)\) in which \(H \subseteq \power(A) \times \power(A)\). Again, the elements of \(H\) will be called \define{hyperedges}. In a hyperedge \((U,V)\), the set \(U\) is called the \define{source} and \(V\) is called the \define{target}.
\end{definition}

We could model our co-authorship system by a directed hypergraph whose hyperedges are pairs of the form \((\{\text{first author of \(P_i\)}\}, \{\text{other authors of \(P_i\)}\})\) for \(i=1,2,3\). This directed hypergraph has the special property that the source of every hyperedge is a singleton; such structures are known as \emph{forward hypergraphs} or \emph{F-hypergraphs}.

\begin{definition}\label{def:f-hypergraph}
    An \define{F-hypergraph} is a directed hypergraph in which the source of every hyperedge is a singleton set. In an F-hypergraph, we will denote a hyperedge \((\{a\},U)\) by \((a,U)\). A \define{map of F-hypergraphs} \((A,H) \to (A',H')\) is a function \(f \colon A \to A'\) such that for every \((a,V) \in H\) we have \((f(a),f(V)) \in H'\). We denote the category of these by \(\FHGraph\).
\end{definition}

We will depict F-hypergraphs as in \Cref{fig:coauthorship}(3), with arrows pointing outward from the source of each hyperedge. This F-hypergraph records that the first author of \(P_1\) is \(c\), the first author of \(P_2\) is \(a\), and the first author of \(P_3\) is \(b\). 

F-hypergraphs can be regarded as a strict generalization of hypergraphs. Indeed, \(\HGraph\) embeds as a a full subcategory of \(\FHGraph\).

\begin{proposition}\label{prop:hypergraphs-F-hypergraphs}
    There is a full and faithful functor
    \(\HGraph \hookrightarrow \FHGraph\).
\end{proposition}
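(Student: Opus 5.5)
We need a functor \(\HGraph \to \FHGraph\). The natural choice sends a hypergraph \((A,H)\) to the F-hypergraph \(\iota(A,H) = (A, \widehat{H})\), where
\[\widehat{H} = \{(a, U) \mid U \in H,\ a \in U\},\]
so each hyperedge \(U\) is recorded from the point of view of each of its members. A map of hypergraphs \(f\) is sent to the same underlying function \(f\colon A \to A'\). Faithfulness and preservation of identities and composites are then immediate, because \(\iota\) does not change underlying functions. This choice has the virtue of making every vertex of \(U\) a source, which is what the fullness argument will need. An alternative, \(\{(a, U\setminus\{a\})\}\), would break on maps that collapse vertices, so I commit to the version above.

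First I will check that \(\iota\) is well defined on maps. Suppose \(f\) is a map of hypergraphs and \((a,U) \in \widehat{H}\), so that \(U \in H\) and \(a \in U\). Then \(f(U) \in H'\) and \(f(a) \in f(U)\), hence \((f(a), f(U)) \in \widehat{H'}\).

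For fullness, let \(g\colon A \to A'\) be a map of F-hypergraphs \(\iota(A,H) \to \iota(A',H')\). I must show \(g(U) \in H'\) for every \(U \in H\). If \(U\) is nonempty, pick \(a \in U\). Then \((a,U) \in \widehat{H}\), so \((g(a), g(U)) \in \widehat{H'}\), which by definition of \(\widehat{H'}\) means \(g(U) \in H'\).

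The main obstacle is the empty hyperedge. If \(\emptyset \in H\), it contributes nothing to \(\widehat{H}\), so a map of F-hypergraphs places no constraint forcing \(\emptyset \in H'\). Fullness would then fail, and faithfulness on objects would also be lost. I would handle this in one of two ways:
\begin{itemize}
\item observe that the paper's intended hypergraphs have nonempty hyperedges and restrict to that convention; or
\item encode \(\emptyset \in H\) by adding the hyperedges \((a,\emptyset)\) for every \(a \in A\).
\end{itemize}
I will check the second option with a small case analysis. If \(\emptyset \in H\) and \(A \neq \emptyset\), then for any \(a\) the pair \((a,\emptyset)\) is sent by \(g\) to \((g(a), \emptyset)\), which then lies in \(\widehat{H'}\), giving \(\emptyset \in H'\). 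If \(A = \emptyset\), this encoding again fails to force \(\emptyset \in H'\). I expect to resolve that residual degenerate case by the nonempty-hyperedge convention, and I will state it explicitly.
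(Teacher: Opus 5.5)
Your argument is correct, given the nonemptiness convention you state, but it uses a different encoding from the paper's, and the difference matters.

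The paper sends \((A,H)\) to the F-hypergraph with hyperedges \(\{(a,U) \mid \{a\} \sqcup U \in H\}\), that is, \((a, W \setminus \{a\})\) for \(a \in W \in H\). This is exactly the alternative you discarded. The paper then asserts that it is ``easily seen'' that \(f\) is a map of hypergraphs if and only if it is a map of the associated F-hypergraphs. Your reason for discarding this encoding is sound. Take \(A = \{a,b\}\), \(H = \{\{a,b\}\}\), \(A' = \{c\}\), \(H' = \{\{c\}\}\) and \(f\) constant. Then \(f\) is a map of hypergraphs, yet it sends the hyperedge \((a,\{b\})\) to \((c,\{c\})\), whereas the only hyperedge of the F-hypergraph associated to \((A',H')\) is \((c,\emptyset)\). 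So, read literally, the paper's assignment is functorial only on maps that are injective on each hyperedge.

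Keeping the source inside the target, as you do, makes both functoriality (\(f(a) \in f(U) \in H'\)) and fullness one-line checks. The paper's encoding gains only fidelity to the ``source versus remaining members'' reading used in the co-authorship example. If \(\sqcup\) is read as an ordinary union, the paper's hyperedge set contains yours and its argument goes through just as yours does.

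The empty-hyperedge problem you identify is genuine, and the paper does not address it. Since \(\emptyset\) is not of the form \(\{a\} \sqcup U\) either, the backward direction of the paper's claimed equivalence also fails when \(\emptyset \in H\), and distinct hypergraphs are identified. Your patch via \((a,\emptyset)\) is correct and leaves only the object \((\emptyset,\{\emptyset\})\) unhandled. So in the end both you and the paper need a convention excluding empty hyperedges, or at least that one object; stating it explicitly is the right call.

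One terminological slip: what is lost in that case is injectivity on objects, not faithfulness. Faithfulness holds regardless, since your functor does not change underlying functions.
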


\begin{proof}
    Every hypergraph \(\mathcal{H} = (A,H)\) gives rise to an F-hypergraph \(F(\mathcal{H})\) with vertex set \(A\) and hyperedge set \(\{ (a, U) \mid \{a\} \sqcup U \in H\}\). It is easily seen that a function \(f \colon A \to A'\) is a map of hypergraphs \(\mathcal{H} \to \mathcal{H}'\) if and only if it is a map of F-hypergraphs \(F(\mathcal{H}) \to F(\mathcal{H}')\); thus, \(\mathcal{F}\) defines a full and faithful functor.
\end{proof}

Note that not every F-hypergraph is in the image of this subcategory inclusion. For instance, the F-hypergraph in \Cref{fig:coauthorship}(3) cannot be \(F(\mathcal{H})\) for any hypergraph \(\mathcal{H}\), since, for example, \((a,\{b\})\) is a hyperedge but \((b,\{a\})\) is not. This means that F-hypergraphs are strictly more general than hypergraphs and can describe a richer variety of higher-order relations. 

The goal of this paper is to define positional and role analysis for these higher-order relations. We will make our definitions for F-hypergraphs, and by \Cref{prop:hypergraphs-F-hypergraphs} they can be specialized to hypergraphs. Our model of `multirelational higher-order network', generalizing a multirelational graph, will therefore be a collection of several F-hypergraph structures on the same set of vertices.

\begin{definition}
    A \define{\(k\)-relational F-hypergraph} is a pair \(\mathcal{H} = (A, (H_i)_{i=1}^k)\) of a set \(A\) and a tuple \((H_1,\ldots,H_k)\) such that each \((A,H_i)\) is an F-hypergraph. A \define{map of \(k\)-relational F-hypergraphs}
    \[(A, (H_i)_{i=1}^k) \to (A', (H_i')_{i=1}^k)\]
    is a function \(A \to A'\) which is a map of directed hypergraphs \((A,H_i) \to (A',H_i')\) for every \(i\). We denote the category of these by \(k\FHGraph\).
\end{definition}

\subsection{Graphs and hypergraphs as coalgebras}
\label{subsec:coalgebras}

Coalgebras emerged after the 1970s in a category-theoretic approach to systems theory whose goal was to develop a unified framework for sequential machines and control systems. As a comprehensive introduction to the field, we recommend Rutten \cite{R20}.

The canonical example of a coalgebra is a \emph{non-deterministic transition system}. Such a system consists of a set \(X\) of \emph{states} and a function assigning to each \(x \in X\) a set of \emph{possible next states} \(\alpha(x) \subseteq X\). That is, it is a function 
\[\alpha \colon X \to \power(X).\]
A non-deterministic transition system can be represented graphically by drawing a  directed edge from each state to each of its possible next states. For instance, if $X=\{x_0,x_1,x_2\}$ and \(\alpha\) is given by $\alpha(x_0)=\{x_1,x_2\}$, $\alpha(x_1)=\{x_0,x_1\}$ and $\alpha(x_2)=\{x_1\}$, this produces the graph
\[
\begin{tikzpicture}
\node (x0) at (0,0.5) {$\bullet$};
\node (x1) at (-1,-1) {$\bullet$};
\node (x2) at (1,-1) {$\bullet$};
\node at (-1,-1.3) {$x_1$};
\node at (1,-1.3) {$x_2$};
\node at (0,0.8) {$x_0$};
\path[->] 
(x0) edge [bend left] (x1)
(x0) edge [bend left] (x2)
(x1) edge [bend left] (x0)
(x2) edge [bend left] (x1);
\draw[->] (x1) to [out=190,in=150,looseness=10] (x1);
\end{tikzpicture}.
\]
Conversely, every graph \(G = (A,R)\) gives rise to a non-deterministic transition system \(A \to \power(A)\) which assigns to each vertex its `out-neighbourhood'---the set of all vertices to which it admits an edge.

Coalgebras generalize non-deterministic transition systems by allowing the set \(\power(A)\) to be replaced by \(T(A)\) for any functor \(T \colon \Set \to \Set\).

\begin{definition}\label{def:T-coalgebra}
    Given a functor \(T \colon \Set \to \Set\), a \define{\(T\)-coalgebra} is a pair \((A, \alpha)\) of a set \(A\) and a function \(\alpha \colon A \to T(A)\). A \define{homomorphism of \(T\)-coalgebras} $(A, \alpha) \to (A', \alpha')$ is a function $f \colon A \to A'$ making this diagram commute:
    \[
    \begin{tikzcd}
    A \arrow[d, swap, "\alpha"] \arrow[r, "f"] & A' \arrow[d, "\alpha'"]\\ 
    T(A) \arrow[r, swap, "T(f)"] & T(A').
    \end{tikzcd}
    \]
    We denote by $\Coalg(T)$ the category of \(T\)-coalgebras and their homomorphisms, in which identity maps and composition are inherited from $\Set$.
\end{definition}

With suitable choices of $T$, coalgebras recover many different kinds of `systems', including automata and infinite streams \cite{R20} and probabilistic transition systems \cite{BartelsSokolovaVink2004}. One can also talk about coalgebras for a functor \(T \colon \mathbf{C} \to \mathbf{C}\) where \(\mathbf{C}\) is some category other than $\Set$; in this way, one can obtain coalgebraic treatments of certain dynamical systems \cite{R20} and hybrid systems combining discrete updates with continuous time evolution \cite{Jacobs2000Hybrid}. In this paper, however, the most important examples besides \(\power\)-coalgebras will be coalgebras for the \emph{double} covariant powerset functor, \(\power\power\).

\begin{example}[\(\power\power\)-coalgebras are F-hypergraphs]
\label{eg:hypergraph-coalgebras}
    Just as every graph can be specified by its out-neighbourhood function, every F-hypergraph \(\mathcal{H} = (A,H)\) can be specified by its neighbourhood function
    \[
    \begin{aligned}
    \mathcal{N}_\mathcal{H} \colon A &\to \power\power(A) \\
    v &\mapsto \{U \mid (v, U) \in H\} .
    \end{aligned}
    \]
    Conversely, every coalgebra for the double powerset functor, \(\alpha \colon A \to \power\power(A)\), can be specified by the F-hypergraph \(\mathcal{H}_\alpha = (A, H_\alpha)\) with hyperedge set 
    \[H_\alpha = \{(v,U) \mid U \in \alpha(v)\}.\]
    Thus, for each set \(A\) there is a bijection between F-hypergraph structures on \(A\) and \(\power\power\)-coalgebra structures on \(A\).
\end{example}

In Sections \ref{sec:positional_bisim} and \ref{sec:role_kleisli} we will extend positional and role analysis from graphs to \(T\)-coalgebras for an arbitrary functor \(T \colon \Set \to \Set\). In particular, by \Cref{eg:hypergraph-coalgebras}, this will define positional and role analysis for F-hypergraphs. Our main object of interest will be a \emph{multirelational \(T\)-coalgebra}: a collection of several coalgebra structures supported on the same set.

\begin{definition}\label{def:multirelational-coalgebras}
    Given a functor \(T \colon \Set \to \Set\), a \define{\(k\)-relational \(T\)-coalgebra} is a pair \((A, (\alpha_i)_{i=1}^k)\) of a set \(A\) and a tuple \((\alpha_1, \ldots, \alpha_k)\) of \(T\)-coalgebra structures on \(A\). A \define{map of \(k\)-relational \(T\)-coalgebras}
    \[(A, (\alpha_i)_{i=1}^k) \to (A', (\alpha_i')_{i=1}^k)\]
    is a function \(A \to A'\) which is a \(T\)-coalgebra homomorphism \((A,\alpha_i) \to (A',\alpha_i')\) for each \(i \in \{1,\ldots,k\}\). We denote the category of these by \(k\Coalg(T)\).
\end{definition}

We have seen that every \(\power\)-coalgebra \(\alpha \colon A \to \power(A)\) determines a graph \(G_\alpha^{\mathrm{out}} = (A,R)\) with
\(E = \{(a,b) \mid b \in \alpha(a)\}\), and every graph \(G = (A,R)\) defines a \(\power\)-coalgebra \(N_G^{\mathrm{out}}\colon A \to \power(A)\) with \(N_G^{\mathrm{out}}(a) = \{b \in A \mid (a, b) \in E\}\).
This gives, for each set \(A\), a bijection between \(\power\)-coalgebra structures and graph structures on \(A\). It follows that there is a bijection between the objects of \(k\Coalg(\power)\) and those of \(k\Graph\). However, this does not extend to an equivalence of categories: not every map of graphs is a \(\power\)-coalgebra homomorphism.

\begin{example}[Homomorphisms of \(\power\)-coalgebras]
\label{eg:P-coalgebra-homs}
    Given graphs $G=(A,R)$ and $H=(A',R')$, a homomorphism of \(\power\)-coalgebras $(A,N_G^\mathsf{out}) \to (A',N_{H}^\mathsf{out})$ is a function $f \colon A \to A'$ satisfying
    \(
    \power(f)\circ N_G^\mathsf{out} = N_{H}^\mathsf{out} \circ f.
    \)
    That is, \(f \colon A \to A'\) is a homomorphism if and only if it satisfies
    \[\{f(a') \mid (a,a') \in R\} = \{b \mid (f(a), b) \in R'\}\]
    for every \(a \in A\). The inclusion of sets from left to right here says that \(f\) preserves edges---that is, \(f\) is a map of graphs. The inclusion of sets from right to left says that \(f\) reflects outgoing edges in the sense of \Cref{def:reflect-outgoing-edges}.
\end{example}

\Cref{eg:P-coalgebra-homs} tells us that every coalgebra homomorphism is a map of graphs, but not every map of graphs is a coalgebra homomorphism. In category-theoretic parlance, \(\power\)-coalgebras and their homomorphisms form a \emph{wide} subcategory of \(\Graph\), but not a \emph{full} subcategory. It follows that the same holds for \(k\Coalg(\power)\) and \(k\Graph\).

\begin{proposition}\label{prop:CoalgP-Graph-functor}
    The assignment
    \begin{align*}
        G_{(-)}^\mathsf{out} \colon k\Coalg(\power) &\to k\Graph \\
        (A, (\alpha_i)_{i=1}^k) & \mapsto (A, (G_{\alpha_i}^\mathsf{out})_{i=1}^k)
    \end{align*}
    determines a functor which is bijective on objects and faithful, but not full. \qed
\end{proposition}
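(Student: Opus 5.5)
The statement has three parts: that \(G_{(-)}^\mathsf{out}\) is a functor, that it is bijective on objects and faithful, and that it is not full. I will check each directly, using \Cref{eg:P-coalgebra-homs} as the main tool.

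\textbf{Functoriality.} On objects the assignment sends \((A,(\alpha_i)_{i=1}^k)\) to \((A,(G^\mathsf{out}_{\alpha_i})_{i=1}^k)\). On morphisms it sends a map \(f\) of \(k\)-relational \(\power\)-coalgebras to the same underlying function \(f\colon A\to A'\). For this to be well defined, \(f\) must be a map of \(k\)-relational graphs. Fix \(i\). Since \(f\) is a \(\power\)-coalgebra homomorphism \((A,\alpha_i)\to(A',\alpha_i')\), the computation in \Cref{eg:P-coalgebra-homs}, applied with \(\alpha_i=N^\mathsf{out}_{G_{\alpha_i}}\), gives
\[\{f(a')\mid (a,a')\in R_i\}=\{b\mid (f(a),b)\in R_i'\}.\]
The left-to-right inclusion says exactly that \(f\) preserves \(R_i\)-edges. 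So \(f\) is a map of graphs for each \(i\), hence a morphism in \(k\Graph\). Identities and composites in both categories are those of \(\Set\), so they are preserved automatically.

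\textbf{Bijective on objects and faithful.} The assignments \(\alpha\mapsto G^\mathsf{out}_\alpha\) and \(G\mapsto N^\mathsf{out}_G\) are mutually inverse on each fixed set \(A\): one has \(b\in\alpha(a)\iff(a,b)\in R\iff b\in N^\mathsf{out}_G(a)\). Applying this componentwise gives a bijection on objects. Faithfulness holds because the functor does not change underlying functions, so distinct homomorphisms remain distinct.

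\textbf{Not full.} It suffices to exhibit a single map of graphs that fails to reflect outgoing edges, and to take it in every component (say with \(R_2,\dots,R_k\) empty). Take \(A=\{a\}\) with \(R=\emptyset\), and \(A'=\{a\}\) with \(R'=\{(a,a)\}\), and let \(f\) be the identity function. Then \(f\) is vacuously a map of graphs. However,
\[\power(f)(\alpha(a))=\emptyset\neq\{a\}=\alpha'(f(a)),\]
so \(f\) is not a coalgebra homomorphism.

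None of these steps presents a real obstacle. The only point needing care is the well-definedness on morphisms, and that is handled by citing \Cref{eg:P-coalgebra-homs} rather than recomputing it.
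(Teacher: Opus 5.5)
Your proof is correct and follows the same route the paper relies on; the paper states the proposition without proof, leaning on the bijection between $\power$-coalgebra structures and graph structures on a fixed set and on the two set inclusions in \Cref{eg:P-coalgebra-homs}. Your one-vertex example, the identity from $(\{a\},\emptyset)$ to $(\{a\},\{(a,a)\})$, is a useful explicit witness for non-fullness, which the paper only asserts by noting that not every map of graphs reflects outgoing edges.
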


\begin{remark}\label{rem:in-neighbourhood}
    There is a second functor exhibiting \(k\Coalg(\power)\) as a wide subcategory of \(k\Graph\). This one takes a \(\power\)-coalgebra \(\alpha\) to the graph \(G_\alpha^{\mathrm{in}} = (A,R)\) with \(R = \{(b,a) \mid b \in \alpha(a)\}\); in the other direction, there is a function on objects taking a graph \(G\) to its `in-neighbourhood' coalgebra 
    \[N_G^{\mathrm{in}} \colon a \mapsto \{b \in A \mid (b,a) \in R\}.\]
    Given graphs \(G = (A,R)\) and \(G' = (A',R')\), a coalgebra homomorphism $(A,N_G^\mathsf{in})\to (A',N_{G'}^\mathsf{in})$ is a function $f\colon A\to A'$ that preserves edges and \emph{reflects incoming edges}, meaning that for every \(a \in A\) and any edge \((b,f(a)) \in R'\), there exists \(a' \in A\) such that \((a',a) \in R\) and \(f(a') = b\).
\end{remark}

An analogous statement can be made for \(\power\power\)-coalgebras and F-hypergraphs: \(k\Coalg(\power\power)\) is a wide subcategory of \(k\FHGraph\) but not a full subcategory. To characterize those maps of F-hypergraphs that do arise from maps of \(\power\power\)-coalgebras, we will use the following definition.

\begin{definition}
\label{def:reflect-hyperedges}
    Let \(\mathcal{H} = (A,(H_i)_{i=1}^k)\) and \(\mathcal{K} = (A',(H_i')_{i=1}^k)\) be \(k\)-relational F-hypergraphs. A function \(f \colon A \to A'\) will be said to \define{reflect hyperedges} if for every vertex $a \in A$ and any hyperedge $(f(a), U) \in H_i'$, there exists $V \subseteq A$ such that $(a, V) \in R_i$ and $f(V)=U$.
\end{definition}

\begin{proposition}\label{prop:CoalgPP-HGraph-functor}
    The assignment
    \begin{align*}
        \mathcal{H}_{(-)} \colon k\Coalg(\power\power) &\to k\FHGraph \\
        (A, (\eta_i)_{i=1}^k) & \mapsto (A, (H_{\eta_i})_{i=1}^k)
    \end{align*}
    determines a functor which is bijective on objects and faithful, but not full.
\end{proposition}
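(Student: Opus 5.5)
We follow the pattern of \Cref{prop:CoalgP-Graph-functor} and \Cref{eg:P-coalgebra-homs}, with \Cref{eg:hypergraph-coalgebras} supplying the object-level correspondence. We check four things in turn: that the assignment is well defined on morphisms, that it is functorial, that it is bijective on objects, and that it is faithful. We then give a counterexample to fullness.

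On objects, \Cref{eg:hypergraph-coalgebras} gives for each set \(A\) mutually inverse bijections \(\eta \mapsto H_\eta\) and \(H \mapsto \mathcal{N}_{(A,H)}\) between \(\power\power\)-coalgebra structures and F-hypergraph structures on \(A\). Applying this componentwise to \(k\)-tuples shows that \(\mathcal{H}_{(-)}\) is bijective on objects.

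On morphisms, we send a homomorphism \(f\) to the same underlying function. Given \(f \colon (A,(\eta_i)) \to (A',(\eta_i'))\) in \(k\Coalg(\power\power)\), we unpack the commuting square for each \(i\) and \(a\in A\) as
\[\{f(V) \mid V \in \eta_i(a)\} = \eta_i'(f(a)),\]
since \(\power\power(f)\) sends a set of subsets to the set of their direct images. The inclusion from left to right says exactly that \((a,V) \in H_{\eta_i}\) implies \((f(a),f(V)) \in H_{\eta_i'}\), so \(f\) is a map of F-hypergraphs. The inclusion from right to left says that \(f\) reflects hyperedges in the sense of \Cref{def:reflect-hyperedges}, with \(R_i\) there read as \(H_i\). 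Identities and composition are inherited from \(\Set\) on both sides, so functoriality is immediate. Faithfulness is equally immediate, because the functor acts as the identity on underlying functions.

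For non-fullness, we exhibit a map of F-hypergraphs that fails to reflect hyperedges. Take \(k=1\), let \(A = A' = \{x\}\), let \(H = \emptyset\), and let \(H' = \{(x,\{x\})\}\). The identity function is a map of F-hypergraphs, vacuously, since \(H\) has no hyperedges to preserve. It is not a coalgebra homomorphism, because \(\power\power(\mathrm{id})(\emptyset) = \emptyset \neq \{\{x\}\}\). Hence no morphism of \(k\Coalg(\power\power)\) maps to it, and the functor is not full. For general \(k\), we use the same structures in every component.

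None of these steps is a real obstacle. The only point needing care is the computation of \(\power\power(f)\) and its translation into the two inclusions; everything else is bookkeeping.
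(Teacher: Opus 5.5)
Your proof is correct and follows the paper's argument. Both unpack the homomorphism condition as the set equality \(\{f(V) \mid V \in \eta_i(a)\} = \eta_i'(f(a))\), reading the left-to-right inclusion as preservation of hyperedges (which gives functoriality and faithfulness) and the right-to-left inclusion as reflection of hyperedges; your one addition is an explicit witness for non-fullness (the identity on a one-point set from \(H=\emptyset\) to \(H'=\{(x,\{x\})\}\)), which the paper only asserts exists, and your example is valid.
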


\begin{proof}
    From \Cref{eg:hypergraph-coalgebras} we can see that \(\mathcal{H}_{(-)}\) is a bijection, with inverse function \(\mathcal{N}_{(-)}\). Given F-hypergraphs \(\mathcal{H} = (A,H)\) and \(\mathcal{H}' = (A',H')\), a coalgebra homomorphism \(\mathcal{N}_\mathcal{H} \to \mathcal{N}_{\mathcal{H}'}\) is a function \(f \colon A \to A'\) satisfying
    \(
    \power\power(f)\circ N_\mathcal{H} = N_{\mathcal{H}'} \circ f.
    \)
    That is, \(f \colon A \to A'\) is a homomorphism if and only if it satisfies
    \[\{f(V) \mid (a,V) \in H\} = \{U \mid (f(a), U) \in H'\}\]
    for every \(a \in A\). The inclusion of sets from left to right here says that \(f\) preserves hyperedges---hence, every coalgebra homomorphism is a map of F-hypergraphs, so \(\mathcal{H}_{(-)}\) is a functor (and clearly faithful). The inclusion from right to left says that \(f\) reflects hyperedges in the sense of \Cref{def:reflect-hyperedges}. Not every map of F-hypergraphs has this property, hence \(\mathcal{H}_{(-)}\) is not full.
\end{proof}

In the next section we will see that an outward-regular equivalence is an \emph{internal equivalence relation}, not only in \(\Graph\) but in \(\Coalg(\power)\). Such equivalence relations play a central role in the study of coalgebra, where they are known as \emph{bisimulation equivalences}. The importance of outward-regular equivalences in the story of positional and role analysis suggests that the coalgebraic setting is a natural one in which to interpret those methods.


\section{Positional analysis via bisimulation}
\label{sec:positional_bisim}

In \Cref{subsec:positional} we described \emph{positional analysis} for a social system modelled by a graph. Positional analysis concerns the construction of \emph{blockmodels}; these are simply quotients in the category of graphs. We saw that certain blockmodels are more sensible than others: taking the quotient by an \emph{outward-regular equivalence} produces a blockmodel that captures more accurately the concept of a `position' in a social network and is compatible with the process of \emph{role analysis} described in \Cref{subsec:role}. We therefore defined a \emph{positional reduction} of a multirelational graph \(G = (A, (R_i)_{i=1}^k)\) to be a map that takes the quotient of the graphs \((A,R_1), \ldots, (A,R_k)\) by an outward-regular equivalence.

Now, we are going to lift each of these notions from graphs to \(T\)-coalgebras for an arbitrary functor \(T \colon \Set \to \Set\). In \Cref{subsec:coalg-bisimulation} we will observe that an outward-regular equivalence on a graph is precisely a \emph{bisimulation equivalence} on the corresponding out-neighbourhood \(\power\)-coalgebra. On the strength of that observation, in \Cref{subsec:coalg-blockmodelling} we will define a \emph{positional reduction} of a multirelational \(T\)-coalgebra \(\alpha = (A, (\alpha_i)_{i=1}^k)\) to be a map that takes the quotient of \((A,\alpha_1), \ldots, (A,\alpha_k)\) by a bisimulation equivalence. 

Specializing these notions to \(\power\power\)-coalgebras will tell us what regular equivalences, blockmodels and positional reductions ought to be for higher-order social systems modelled by F-hypergraphs. Each of those definitions will be given explicitly in \Cref{sec:analysis-for-hypergraphs}.


\subsection{Bisimulation equivalences}
\label{subsec:coalg-bisimulation}

In the case of graphs, blockmodelling can be described in category-theoretic terms as follows. Given a graph \(G = (A,R)\), we can make any equivalence relation \(E \subseteq A \times A\) into a full subgraph of the product \(G \times G\) by declaring that \(((a,b),(a',b'))\) is an edge in \(E\) if and only if \((a,a') \in R\) or \((b,b') \in R\). This makes the inclusion \(\iota \colon E \hookrightarrow G \times G\) into a \emph{congruence} (or \emph{internal equivalence relation}) in the category \(\Graph\), and the two composites 
\[p_1, p_2 \colon E \overset{\iota}{\hookrightarrow} G \times G \overset{\pi_1}{\underset{\pi_2}{\rightrightarrows}} G\]
into maps of graphs. The blockmodel \(\block{G}{E}\) is obtained by taking the coequalizer of \(p_1\) and \(p_2\) in \(\Graph\), also known as the {quotient} of the graph \(G\) by the congruence \(R\). (For an introduction to congruences and their quotients, see Borceux \cite[\S 2.5]{Borceux-HCA-2}.)

The blockmodel construction applies to any equivalence relation on the set of vertices of a graph, without conditions. In \Cref{subsec:graph-functoriality}, however, we saw that in order for positional analysis to be compatible with role analysis, one should restrict attention to the class of equivalence relations we termed \emph{outward-regular}. Now we are in a position to interpret that condition. 

Inspecting the definitions, one sees that an equivalence relation \(E\) on \(A\) is outward-regular if and only if \(\iota \colon E \hookrightarrow G \times G\) reflects outgoing edges in the sense of \Cref{def:reflect-outgoing-edges}. According to \Cref{eg:P-coalgebra-homs}, this says that \(\iota\) is not just a map of graphs but of \(\power\)-coalgebras, and it follows that
\[\iota \colon N^\mathrm{out}_E \hookrightarrow N^\mathrm{out}_G \times N^\mathrm{out}_G\]
is actually a congruence in the category \(\Coalg(\power)\). Thus, if \(E\) is outward-regular, then \(N^\mathrm{out}_{\block{G}{E}}\) is the quotient of \(N^\mathrm{out}_G\) by \(N^\mathrm{out}_E\) in \(\Coalg(\power)\). 

This is the category-theoretic recipe we will follow to construct blockmodels of systems modelled by coalgebras for an arbitrary functor \(T\). A \emph{\(T\)-blockmodel} will be the quotient of a \(T\)-coalgebra by a congruence in the category \(\Coalg(T)\). Congruences in \(\Coalg(T)\) are at the heart of the study of coalgebra, where they are known by the name \emph{bisimulation equivalences}. We refer to \cite{Staton2011} for an overview of bisimulations in coalgebra.

\begin{definition}\label{def:T-bisimulation-span}
    Let \(T \colon \Set \to \Set\) be a functor and let \(\alpha \colon A \to T(A)\) and \(\beta \colon B \to T(B)\) be \(T\)-coalgebras. A relation \(E \subseteq A \times B\) is a \define{\(T\)-bisimulation} between \((A,\alpha)\) and \((B,\beta)\) if there exists a \(T\)-coalgebra structure \(\epsilon \colon E \to T(E)\) such that the projections
    \(A \xleftarrow{p_A} E \xrightarrow{p_B} B\) 
    are \(T\)-coalgebra homomorphisms, meaning that this diagram commutes:
    \[
    \begin{tikzcd}
        A \arrow[d, swap, "\alpha"] & \arrow[l, swap, "p_A"] E \arrow[d, "\epsilon"] \arrow[r, "p_B"] & B \arrow[d, "\beta"]\\
        T(A) & \arrow[l, "T(p_A)"] T(E) \arrow[r, swap, "T(p_B)"] & T(B).
    \end{tikzcd}
    \]
    A \define{\(T\)-bisimulation equivalence} is a \(T\)-bisimulation \((E,\epsilon)\) between \((A,\alpha)\) and \((A,\alpha)\) such that \(E \subseteq A \times A\) is an equivalence relation.
\end{definition}

In order to instantiate \Cref{def:T-bisimulation-span} for choices of functor other than \(\power\), it is useful to be able to say concretely when an equivalence relation on a set \(A\) is a bisimulation equivalence with respect to a given coalgebra structure. For this, one can employ the notion of \emph{relation lifting}.

\begin{definition}\label{def:T-lifting}
    Let \(T \colon \Set \to \Set\) be a functor. Given a relation of sets $E \subseteq A \times B$, let \(A \xleftarrow{\pi_A} A \times B \xrightarrow{\pi_B} B\) be the projections. The \define{\(T\)-relation lifting} of \(E\) is the relation $\lift{T}(E) \subseteq T(A) \times T(B)$ defined by 
    \begin{equation*}
        \lift{T}(E) = \left\{ (x, y) \in T(A) \times T(B) \;\middle|\; 
        \begin{aligned}
        &\text{there exists } u \in T(E) \text{ such that } \\ 
        &T(\pi_A)(u) = x \text{ and } T(\pi_B)(u) = y
        \end{aligned}
        \right\}.
    \end{equation*}
\end{definition}

If we think of the functor \(T\) as taking a set \(A\) of possible states to a set \(T(A)\) of possible transitions, then \(T\)-relation lifting takes a relation \(E\) among states to a relation \(\lift{T}(E)\) among transitions. The following theorem expresses that under a bisimulation equivalence, two states are equivalent only if the transitions they can perform are equivalent. It is proved by Staton in \cite[Theorem 4.1(7)]{Staton2011}; in fact, he proves a much more general statement than the one we give here.

\begin{theorem}\label{thm:T-bisimulation-span}
    Let \(T \colon \Set \to \Set\) be a functor, and let \((A, \alpha)\) be a \(T\)-coalgebra. An equivalence relation \(E \subseteq A \times A\) is a \(T\)-bisimulation equivalence on \((A,\alpha)\) if and only if, for all \((a, a') \in E\), one has \((\alpha(a), \alpha(a')) \in \lift{T}(E)\).
\end{theorem}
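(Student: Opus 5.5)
We prove the two implications separately. Write \(\pi_1, \pi_2 \colon A \times A \to A\) for the projections and \(p_1, p_2 \colon E \to A\) for their restrictions to \(E\). These restrictions are exactly the maps \(p_A, p_B\) appearing in \Cref{def:T-bisimulation-span}. The lifting \(\lift{T}(E)\) is defined using \(T(\pi_A)\) and \(T(\pi_B)\) applied to elements of \(T(E)\). Since \(\pi_A\) and \(\pi_B\) restricted to \(E\) are just \(p_1\) and \(p_2\), we have
\[(x,y) \in \lift{T}(E) \iff \text{there is } u \in T(E) \text{ with } T(p_1)(u) = x \text{ and } T(p_2)(u) = y.\]
Both directions will use this reformulation.

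\emph{(Only if.)} Suppose \(\epsilon \colon E \to T(E)\) makes \(p_1\) and \(p_2\) into coalgebra homomorphisms. Fix \((a,a') \in E\) and set \(u = \epsilon(a,a')\). Commutativity of the two squares in \Cref{def:T-bisimulation-span} gives
\[T(p_1)(u) = \alpha(p_1(a,a')) = \alpha(a), \qquad T(p_2)(u) = \alpha(a').\]
Hence \((\alpha(a),\alpha(a')) \in \lift{T}(E)\). This direction does not use that \(E\) is an equivalence relation.

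\emph{(If.)} Suppose that for every \((a,a') \in E\) we have \((\alpha(a),\alpha(a')) \in \lift{T}(E)\). Then for each \(e = (a,a') \in E\) we may choose, using the axiom of choice, a witness \(u_e \in T(E)\) with \(T(p_1)(u_e) = \alpha(a)\) and \(T(p_2)(u_e) = \alpha(a')\). Define \(\epsilon(e) = u_e\). By construction,
\[T(p_1) \circ \epsilon = \alpha \circ p_1 \qquad \text{and} \qquad T(p_2) \circ \epsilon = \alpha \circ p_2,\]
which are precisely the two commuting squares required. So \((E,\epsilon)\) is a \(T\)-bisimulation. Since \(E\) is an equivalence relation by hypothesis, it is a \(T\)-bisimulation equivalence.

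The main point of care is the identification of the lifting witnesses with elements of \(T(E)\) mapped by \(T(p_i)\), together with the use of choice to assemble the pointwise witnesses into a single function \(\epsilon\). No hypothesis on \(T\) is needed, such as preservation of weak pullbacks, because \Cref{def:T-lifting} quantifies over \(T(E)\) itself rather than over images in \(T(A \times A)\). Had the lifting been defined via the image of \(T(E) \to T(A) \times T(A)\) computed from \(T(A\times A)\), we would instead need to check that \(T\) preserves the monomorphism \(E \hookrightarrow A\times A\). Set-functors preserve monos with nonempty domain, and the case \(E = \emptyset\) is trivial. Our plan therefore avoids this issue entirely by working directly with \(T(E)\).
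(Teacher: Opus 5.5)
Your proof is correct. The paper does not prove this statement itself. It cites Staton's much more general theorem relating bisimulations given by spans of coalgebra homomorphisms (as in \Cref{def:T-bisimulation-span}) to bisimulations defined via relation lifting. That result is formulated over a general base category with a factorization system, where passing from lifting witnesses back to a coalgebra structure on \(E\) needs a splitting hypothesis; \(\Set\) satisfies it by the axiom of choice.

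Your argument is the direct \(\Set\)-instance of that reasoning. In the forward direction you read the witnesses off \(\epsilon\). In the converse you assemble the pointwise witnesses into \(\epsilon\) by choice. The citation buys generality beyond \(\Set\), which fits the paper's later remark that matters become subtler over other base categories. Your version buys a self-contained argument that makes explicit three things: no hypothesis on \(T\) such as weak-pullback preservation is needed, the equivalence-relation hypothesis plays no role, and exactly where choice enters.

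One small correction to your closing aside: the mono-preservation worry never arises. Suppose \(\lift{T}(E)\) were defined through \(T(i) \colon T(E) \to T(A \times A)\) for the inclusion \(i\). Functoriality gives \(T(\pi_1) \circ T(i) = T(p_1)\) and \(T(\pi_2) \circ T(i) = T(p_2)\). So the resulting relation is literally the same, and injectivity of \(T(i)\) is irrelevant.
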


Using \Cref{thm:T-bisimulation-span}, we can instantiate bisimulation equivalences for F-hypergraphs, interpreted as coalgebras for the double powerset functor. We will use the fact that relation lifting respects the composition of functors in the sense of the following lemma, which can be derived from Kurz and Velebil's \cite[Proposition 2.10]{KurzVelebil}.

\begin{lemma}\label{lem:lifting-composite}
    Let \(S, T \colon \Set \to \Set\) be functors. For any relation \(E \subseteq A \times B\), we have \(\lift{ST}(E) = \lift{S}(\lift{T}(E))\).
\end{lemma}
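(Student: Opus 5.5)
We prove the two inclusions \(\lift{ST}(E) \subseteq \lift{S}(\lift{T}(E))\) and \(\lift{S}(\lift{T}(E)) \subseteq \lift{ST}(E)\) directly from \Cref{def:T-lifting}, working throughout with projections. Write \(\pi_A, \pi_B\) for the projections out of \(E\). Write \(\rho_{TA}, \rho_{TB}\) for the projections out of \(\lift{T}(E) \subseteq T(A) \times T(B)\). Let \(\theta \colon T(E) \to \lift{T}(E)\) be the corestriction of \(\langle T(\pi_A), T(\pi_B)\rangle\); by definition of the lifting it is surjective. It satisfies \(\rho_{TA} \circ \theta = T(\pi_A)\) and \(\rho_{TB} \circ \theta = T(\pi_B)\).

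For the inclusion \(\lift{ST}(E) \subseteq \lift{S}(\lift{T}(E))\), take \((x,y) \in \lift{ST}(E)\), witnessed by \(u \in ST(E)\) with \(ST(\pi_A)(u) = x\) and \(ST(\pi_B)(u) = y\). Put \(v = S(\theta)(u) \in S(\lift{T}(E))\). Functoriality of \(S\) gives \(S(\rho_{TA})(v) = S(\rho_{TA}\circ\theta)(u) = ST(\pi_A)(u) = x\), and similarly \(S(\rho_{TB})(v) = y\). So \(v\) witnesses \((x,y) \in \lift{S}(\lift{T}(E))\). This direction needs nothing beyond functoriality.

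For the reverse inclusion, take \(v \in S(\lift{T}(E))\) with \(S(\rho_{TA})(v) = x\) and \(S(\rho_{TB})(v) = y\). We need some \(u \in ST(E)\) with \(S(\theta)(u) = v\). Then the same computation shows \(ST(\pi_A)(u) = x\) and \(ST(\pi_B)(u) = y\), as required. The key point is therefore that \(S(\theta)\) is surjective. This holds because \(\theta\) is a surjection in \(\Set\) and therefore, using the axiom of choice, has a section \(\sigma\) with \(\theta\sigma = \mathrm{id}\). Functoriality gives \(S(\theta)S(\sigma) = \mathrm{id}\), so every functor \(\Set \to \Set\) preserves surjections. (If \(\lift{T}(E)\) is empty the section still exists, since then \(T(E)\) is also empty.) We may then take \(u = S(\sigma)(v)\).

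The main obstacle is this surjectivity step, which is where the argument uses something specific to \(\Set\) rather than general category theory. Once choice is invoked it is routine. Everything else is bookkeeping with projections and functoriality. This is consistent with the general statement in Kurz and Velebil, where lifting commutes with composition for functors preserving the relevant epis.
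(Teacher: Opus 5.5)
Your proof is correct. It is more self-contained than the paper's treatment: the paper gives no argument of its own and says only that the lemma ``can be derived from'' Kurz and Velebil's Proposition 2.10.

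Your route has two steps.
\begin{itemize}
\item You factor \(\langle ST(\pi_A), ST(\pi_B)\rangle \colon ST(E) \to ST(A) \times ST(B)\) as \(\langle S(\rho_{TA}), S(\rho_{TB})\rangle \circ S(\theta)\). This makes \(\lift{ST}(E) \subseteq \lift{S}(\lift{T}(E))\) pure functoriality.
\item The reverse inclusion then reduces to surjectivity of \(S(\theta)\). You get this because surjections split in \(\Set\) (using choice) and every functor preserves split epimorphisms.
\end{itemize}

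What your version buys is transparency: the only non-formal input is that \(S\) preserves surjections. In particular:
\begin{itemize}
\item Nothing is required of \(T\), since \(\theta\) is surjective by construction.
\item No weak-pullback preservation is needed. Contrast this with compatibility of lifting with relational composition, where that hypothesis does matter.
\end{itemize}

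What the citation buys is generality beyond \(\Set\). Your argument also transfers: it works verbatim in a regular category, provided \(S\) preserves regular epimorphisms. Your splitting argument is exactly what makes that hypothesis automatic in \(\Set\).

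For the paper's actual use of the lemma, with \(S = T = \power\) in \Cref{eg:PP-bisimulations}, your elementary proof is entirely sufficient.
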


\begin{example}[\(\power\power\)-bisimulation equivalences]
\label{eg:PP-bisimulations}
    Let \(\mathcal{H} = (A, H)\) be an F-hypergraph, and let \(\mathcal{N}_{\mathcal{H}} \colon A \to \power\power(A)\) denote the associated coalgebra. To characterize bisimulation equivalences on \(\mathcal{N}_{\mathcal{H}}\) using \Cref{thm:T-bisimulation-span}, we first need to describe \(\power\power\)-relation lifting.
    
    So, let \(E \subseteq A \times A\) be a relation. According to \Cref{lem:lifting-composite}, the relation-lifting of \(E\) by \(\power\power\) is the relation \(\lift{\power\power}(E) \subseteq \power\power (A) \times \power\power (A)\) obtained by taking \(\lift{\power}(E) \subseteq \power(A) \times \power(A)\) and lifting it by \(\power\). Using the expression for \(\lift{\power}(E)\) in \Cref{def:T-lifting} twice, one finds that \(\lift{\power\power}(E)\) is the set of pairs \((X,X') \in \power\power(A) \times \power\power(A)\) satisfying both these conditions:
    \begin{enumerate}
        \item \(\forall \: U \in X, \: \exists \: U' \in X' \text{ such that} 
            \begin{cases} 
            \forall u \in U, \: \exists \: u' \in U' \text{ with } (u,u') \in E \\
            \text{and} \\
            \forall u' \in U', \: \exists \: u \in U \text{ with } (u,u') \in E;
            \end{cases}\)
        \item \(\forall \: U' \in X', \: \exists \: U \in X \text{ such that} 
            \begin{cases} 
            \forall u' \in U', \: \exists \: u \in U \text{ with } (u,u') \in E \\
            \text{and} \\
            \forall u \in U, \: \exists \: u' \in U' \text{ with } (u,u') \in E.
            \end{cases}\)
    \end{enumerate}
    Notice that condition 1 holds for a pair \((X,X') \in \power\power(A) \times \power\power(A)\) if and only if condition 2 holds for the pair \((X',X)\).
    
    Now, suppose that \(E\) is an equivalence relation. By \Cref{thm:T-bisimulation-span}, \(E\) is a \(\power\power\)-bisimulation equivalence if and only if, for every \((a,a') \in E\), we have \((\mathcal{N}_{\mathcal{H}}(a), \mathcal{N}_{\mathcal{H}}(a')) \in \lift{\power\power} (A)\). In fact it is enough to require that, for every \((a,a') \in E\), the pair \((\mathcal{N}_{\mathcal{H}}(a), \mathcal{N}_{\mathcal{H}}(a'))\) satisfies condition 1. For, suppose that is the case; then, by the symmetry of \(E\), the pair \((\mathcal{N}_{\mathcal{H}}(a'), \mathcal{N}_{\mathcal{H}}(a))\) also satisfies condition 1, which means that \((\mathcal{N}_{\mathcal{H}}(a), \mathcal{N}_{\mathcal{H}}(a'))\) satisfies condition 2.
    
    Spelling out condition 1 for \(X = \mathcal{N}_{\mathcal{H}}(a)\) and \(X' = \mathcal{N}_{\mathcal{H}}(a')\), one sees that an equivalence relation \(E \subseteq A \times A\) is a \(\power\power\)-bisimulation equivalence if and only if it has the following property with respect to the F-hypergraph \(\mathcal{H}\): for every \((a, a') \in E\) and for each \((a, U) \in H\), there exists \((a', U') \in H\) such that
    \begin{itemize}
        \item for each \(u \in U\) there exists \(u' \in U'\) such that \((u,u') \in E\), and
        \item for each \(u' \in U'\) there exists \(u \in U\) such that \((u,u') \in E\).
    \end{itemize}
    In \Cref{sec:analysis-for-hypergraphs}, adopting the language used for graphs, we will define such an equivalence relation to be a \emph{regular equivalence} on the F-hypergraph \(\mathcal{H}\).
\end{example}


\subsection{Positional reductions for coalgebras}
\label{subsec:coalg-blockmodelling}

It is a fundamental fact of coalgebra that the forgetful functor 
\begin{align*}
    U \colon \Coalg(T) & \to \Set \\
    (A, \alpha) &\mapsto A
\end{align*}
creates all colimits \cite[Theorem 4.2]{R20}. In particular, \(\Coalg(T)\) has coequalizers---so, given a \(T\)-bisimulation equivalence \((E,\epsilon)\) on \((A,\alpha)\), we can take the coequalizer of the pair 
\(p_1,p_2 \colon (E, \epsilon) \rightrightarrows (A,\alpha)\), also known as the {quotient} of \(\alpha\) by \(\epsilon\).

\begin{definition}\label{def:coalg-blockmodel}
    Let \(T \colon \Set \to \Set\) be a functor, and \((A,\alpha)\) a \(T\)-coalgebra. Given a \(T\)-bisimulation equivalence \((E,\epsilon)\) on \((A,\alpha)\), the \define{\(T\)-blockmodel} of \(\alpha\) by \(\epsilon\) is the quotient of \(\alpha\) by \(\epsilon\) in \(\Coalg(T)\). We will denote it by \((\block{A}{E},\block{\alpha}{\epsilon})\).
\end{definition}

Rutten's description of coequalizers in \(\Coalg(T)\) provides an explicit expression for the coalgebra structure on a blockmodel \cite[\S 4.2]{R20}. Writing \(q \colon A \to \block{A}{E}\) for the quotient at the level of sets, the structure map of the blockmodel \((\block{A}{E}, \alpha/\epsilon)\) is the function
    \begin{equation}\label{eq:coalg-blockmodel-concretely}
    \begin{aligned}
        \alpha/\epsilon \colon \block{A}{E} & \to T(\block{A}{E}) \\
        [a] & \mapsto (Tq) (\alpha(a)).
    \end{aligned}
    \end{equation}
Using this expression, we can verify that \Cref{def:coalg-blockmodel} recovers the usual definition of blockmodels in the case of graphs.

\begin{example}[\(\power\)-blockmodels]
\label{eg:P-blockmodels}
    Let \(G = (A,R)\) be a graph, and \(E \subseteq A \times A\) an outward-regular equivalence. We saw in \Cref{subsec:coalg-bisimulation} that \(E\) is a bisimulation equivalence on the \(\power\)-coalgebra \(N^\mathrm{out}_G \colon A \to \power(A)\), so it carries a \(\power\)-coalgebra structure \(\epsilon \colon E \to \power(E)\). Now we will consider the \(\power\)-blockmodel of \((A, N^\mathrm{out}_G)\) by \((E,\epsilon)\) and compare it to the \(\power\)-coalgebra \(N_{\block{G}{E}}^\mathrm{out}\), where \(\block{G}{E}\) is the graph blockmodel of \(G\) by \(E\).
    
    On one hand, using expression \eqref{eq:coalg-blockmodel-concretely}, we have that
    \[
        (N^\mathrm{out}_G/\epsilon)([a]) = \power q (N^\mathrm{out}_G(a)) = \{X \in \block{A}{E} \mid \exists b \in X \text{ such that } b \in N^\mathrm{out}_G(a)\}
    \]
    for each \([a] \in \block{A}{E}\). On the other hand, by construction of \(\block{G}{E}\), we have
    \[
    N_{\block{G}{E}}^\mathrm{out}([a]) = \{X \in A / R \mid \exists b \in X \text{ such that } (a,b) \in E\}.
    \]
    Since \(N^\mathrm{out}_G(a) = \{b \mid (a,b) \in E\}\), this says exactly that 
    \[\left(\block{A}{E}, N^\mathrm{out}_G/\epsilon\right) = \left(\block{A}{E}, N_{\block{G}{E}}^\mathrm{out}\right)\]
    as coalgebras for the powerset functor. Thus, \(\power\)-blockmodels coincide with ordinary graph blockmodels with respect to outward-regular equivalences.
\end{example}

Next, we can describe blockmodels of \(\power\power\)-coalgebras, and thereby of F-hypergraphs.

\begin{example}[\(\power\power\)-blockmodels]
\label{eg:PP-blockmodels}
    Let \(\mathcal{H} = (A,H)\) be an F-hypergraph, and let \(\mathcal{N}_{\mathcal{H}} \colon A \to \power\power(A)\) denote the corresponding coalgebra for the double powerset functor. Suppose \((E,\epsilon)\) is a bisimulation equivalence on \(\mathcal{N}_{\mathcal{H}}\). Using \eqref{eq:coalg-blockmodel-concretely}, we see that the coalgebra structure of the \(\power\power\)-blockmodel \((\block{A}{E}, \block{\mathcal{N}_{\mathcal{H}}}{\epsilon})\) is given by
    \[
    (\mathcal{N}_{\mathcal{H}}/\epsilon)([a]) = (\power\power q) (\mathcal{N}_{\mathcal{H}}(a)) = \left\{ \{[u] \in \block{A}{E} \mid u \in U \} \mid U \in \mathcal{N}_{\mathcal{H}}(a) \right\}.
    \]
    for each \([a] \in \block{A}{E}\). 
    
    We can also describe the blockmodel as an F-hypergraph. Let \(\block{\mathcal{H}}{E}\) denote the F-hypergraph structure on \(\block{A}{E}\) in which \(([a],X)\) is a hyperedge if and only if there exists some \((a,U) \in H\) such that \(X = \{[u] \mid u \in U\}\).
    Then
    \[
    \mathcal{N}_{\block{\mathcal{H}}{E}}([a]) = \left\{X \subseteq \block{A}{E} \mid X = \{[u] \mid u \in U\} \text{ for some } (a,U) \in H \right\},
    \]
    and as \(\mathcal{N}_{\mathcal{H}}(a) = \{U \mid (a,U) \in H\}\), this gives \(\mathcal{N}_{\mathcal{H}} / \epsilon = \mathcal{N}_{\block{\mathcal{H}}{E}}\).
\end{example}

\Cref{def:positional-reduction-graph} introduced the term `positional reduction' for a blockmodel of a \(k\)-relational graph by an outward-regular equivalence. To generalize this, a positional reduction of a \(k\)-relational \(T\)-coalgebra should be a quotient by a bisimulation equivalence. We can be more explicit about what these look like by using the following statement, which is essentially the First Isomorphism Theorem for coalgebras, proved by Rutten in \cite[Theorem 7.1]{R20}. Since we frame it slightly differently, we will assemble the relevant statements of \cite{R20} into a brief proof.

\begin{lemma}
\label{lem:coalgebra-PR-concretely}
    For a map \(f \colon (A, (\alpha_i)_{i=1}^k) \to (B, (\beta_i)_{i=1}^k)\) in \(k\Coalg(T)\), the following are equivalent.
    \begin{itemize}
        \item There exist an equivalence relation \(E \subseteq A \times A\) and coalgebra structures \(\epsilon_1,\ldots, \epsilon_k\) on \(E\), such that \((E,\epsilon_i)\) is a bisimulation equivalence on \((A,\alpha_i)\) for each \(i\) and \(f \colon (A,\alpha_i) \to (B, \beta_i)\) is the quotient of \(\alpha_i\) by \(\epsilon_i\).
        \item The function \(f \colon A \to B\) is surjective.
    \end{itemize}
\end{lemma}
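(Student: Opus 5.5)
The forward implication is quick. If \(f\) is the quotient of \(\alpha_i\) by \(\epsilon_i\), then by \cite[Theorem 4.2]{R20} the forgetful functor \(U \colon \Coalg(T) \to \Set\) creates colimits, and in particular coequalizers. So the underlying function of \(f\) is the set-theoretic coequalizer of the two projections \(p_1, p_2 \colon E \rightrightarrows A\). That coequalizer is the canonical map \(A \to \block{A}{E}\), which is surjective.

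For the converse, suppose \(f\) is surjective. We take \(E = \{(a,a') \mid f(a) = f(a')\}\), the kernel pair of \(f\) in \(\Set\), which is clearly an equivalence relation. The task is to equip \(E\) with structures \(\epsilon_i\) so that each \((E,\epsilon_i)\) is a bisimulation equivalence on \((A,\alpha_i)\). The cleanest route is \Cref{thm:T-bisimulation-span}: it suffices to show that \((\alpha_i(a),\alpha_i(a')) \in \lift{T}(E)\) whenever \(f(a)=f(a')\). Since \(f\) is a homomorphism, \(Tf(\alpha_i(a)) = \beta_i(f(a)) = \beta_i(f(a')) = Tf(\alpha_i(a'))\).

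So the real point is the claim that \(Tf(x) = Tf(y)\) implies \((x,y) \in \lift{T}(\ker f)\). This is the step I expect to be the main obstacle, and it would go as follows. If \(A\) is empty the claim is trivial. Otherwise, write the kernel pair as a pullback \(E \rightrightarrows A \to B\). Because \(f\) is surjective, it has a section \(s\), and then \(f\) is a split epi, which every functor preserves. Using \(s\), construct \(u \in T(E)\) explicitly: choose a retraction-type map \(r \colon A \to E\) with \(\pi_1 r = \mathrm{id}_A\) and \(\pi_2 r = s f\), namely \(r(a) = (a, sf(a))\). Then \(T\pi_1 (Tr(x)) = x\) and \(T\pi_2(Tr(x)) = T(sf)(x)\). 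So \((x, T(sf)(x)) \in \lift{T}(E)\), and likewise \((y, T(sf)(y)) \in \lift{T}(E)\). Since \(T(sf)(x) = Ts(Tf(x)) = Ts(Tf(y)) = T(sf)(y)\), we obtain \((x,z)\) and \((y,z)\) in \(\lift{T}(E)\) for the common element \(z\).

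It remains to conclude \((x,y) \in \lift{T}(E)\). For \(\Set\)-functors, relation lifting need not preserve composition in general (that requires weak pullback preservation). However, it is symmetric, because \(E\) is symmetric via the swap map. Rather than rely on composition, I would therefore pass through Staton's more general statement \cite[Theorem 4.1]{Staton2011}, or Rutten's direct argument. Either of these yields the coalgebra structure on the kernel pair directly: since the kernel pair of a coalgebra homomorphism is a bisimulation, \cite[Theorem 7.1]{R20}, there is a coalgebra structure \(\epsilon_i\) making the projections homomorphisms, and I would cite exactly this. Thus each \((E,\epsilon_i)\) is a bisimulation equivalence.

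Finally, the quotient of \(\alpha_i\) by \(\epsilon_i\) has carrier \(\block{A}{E}\). Since \(f\) is surjective with kernel \(E\), there is a bijection \(\block{A}{E} \cong B\) through which \(f\) factors. By \eqref{eq:coalg-blockmodel-concretely}, the quotient structure sends \([a] \mapsto Tq(\alpha_i(a))\). Transported along the bijection, this becomes \(Tf(\alpha_i(a)) = \beta_i(f(a))\), so the quotient is exactly \((B,\beta_i)\) with quotient map \(f\), simultaneously for every \(i\).
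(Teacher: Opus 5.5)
Your forward direction and your final identification of the quotient with \((B,\beta_i)\) via \eqref{eq:coalg-blockmodel-concretely} are fine. Using creation of coequalizers is a slightly more direct route than the paper's `coequalizer \(\Rightarrow\) epi \(\Rightarrow\) surjective'. The gap is in the converse, exactly at the step you flagged. It cannot be closed by citing Rutten or Staton, because for an arbitrary functor \(T\) the kernel pair of a surjective homomorphism need not be a bisimulation, and then the lemma itself fails.

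Here is a counterexample. Take \(T(X)=\{(x_1,x_2,x_3)\in X^3 \mid |\{x_1,x_2,x_3\}|\le 2\}\), acting componentwise on maps. Let \(A=\{a,b\}\) with \(\alpha(a)=(a,a,b)\) and \(\alpha(b)=(a,b,b)\), and let \(f\colon A\to\{\ast\}\) with \(\beta(\ast)=(\ast,\ast,\ast)\). Then \(f\) is a surjective homomorphism. Since coequalizers are created by the forgetful functor, any bisimulation equivalence whose quotient is \(f\) must be \(E=\ker f=A\times A\). But \(\epsilon(a,b)\) would then be forced to equal \(((a,a),(a,b),(b,b))\), which has three distinct entries and so is not in \(T(E)\).

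Your own construction shows precisely where things break. With \(s(\ast)=a\) you get \(z=(a,a,a)\), and \((x,z),(y,z)\in\lift{T}(E)\) do hold. But \((x,y)\notin\lift{T}(E)\): gluing the two pairs requires the relational composite of \(\lift{T}(E)\) with itself to lie in \(\lift{T}(E\circ E)\). That is a weak-pullback condition, and it fails here.

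The paper's proof has the same skeleton as yours: take the kernel pair, invoke `the kernel of a homomorphism is a bisimulation equivalence' (citing Rutten), then use creation of coequalizers. It relies on the same fact, which is false in this generality. So the defect is a missing hypothesis in the statement, not something peculiar to your route.

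What is needed is that \(T\) weakly preserves kernel pairs, for instance that \(T\) preserves weak pullbacks. Under that hypothesis the step is immediate. The equation \(Tf(\alpha_i(a))=Tf(\alpha_i(a'))\) says that \((\alpha_i(a),\alpha_i(a'))\) lies in the kernel pair of \(Tf\), so it lifts to some \(u\in T(E)\) with the right projections, and \Cref{thm:T-bisimulation-span} finishes. No section is needed, although your composition argument would also go through. Since \(\power\) preserves weak pullbacks, and hence so does \(\power\power\), the cases the paper actually uses later are unaffected.
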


\begin{proof}
    First, suppose \(f \colon (A, \alpha_1) \to (B, \beta_1)\) is a quotient map. Then in particular it is a coequalizer, so it is an epimorphism in \(\Coalg(T)\). Rutten proves that a morphism in \(\Coalg(T)\) is epic if and only if its underlying function is surjective \cite[Proposition 4.7(1)]{R20}; hence, \(f \colon A \to B\) is surjective.

    For the converse, assume \(f \colon A \to B\) is surjective. Let \(E \subseteq A \times A\) be the equivalence relation
    \(\{(a,a') \mid f(a) = f(a')\}\);
    then \(f\) is the coequalizer in \(\Set\) of the projection functions \(p_1,p_2 \colon E \rightrightarrows A\). Since \(f \colon (A,\alpha_i) \to (B,\beta_i)\) is a coalgebra homomorphism, \(E\) is a bisimulation equivalence on \((A,\alpha_i)\): it carries a unique coalgebra structure \(\epsilon_i\) with respect to which \(p_1,p_2 \colon (E,\epsilon_i) \rightrightarrows (A,\alpha_i)\) are homomorphisms \cite[Proposition 5.7]{R20}. Since the forgetful functor creates coequalizers, \(f\) is the coequalizer of \((p_1,p_2)\) in \(\Coalg(T)\); that is, \(f \colon (A,\alpha_i) \to (B, \beta_i)\) is the quotient of \((A,\alpha_i)\) by \((E, \epsilon_i)\).
\end{proof}

On the strength of the lemma, we make the following definition.

\begin{definition}
\label{def:positional-reduction-coalgebras}
    Let \(T \colon \Set \to \Set\) be a functor. A \define{positional reduction} of a \(k\)-relational \(T\)-coalgebra \((A, (\alpha_i)_{i=1}^k)\) is a map \(f \colon (A, (\alpha_i)_{i=1}^k) \to (B, (\beta_i)_{i=1}^k)\) in \(k\Coalg(T)\) such that \(f \colon A \to B\) is surjective.
\end{definition}

Taking \(T = \power\) in \Cref{def:positional-reduction-coalgebras} recovers the definition of positional reductions for multirelational graphs (\Cref{def:positional-reduction-graph}). Taking \(T = \power\power\), it yields a notion of positional reduction for multirelational F-hypergraphs, which we will turn into a definition in \Cref{sec:analysis-for-hypergraphs}.





\section{Role analysis via Kleisli composition}\label{sec:role_kleisli}

In \Cref{sec:positional_role} we described role analysis for multirelational graphs as follows. First, given \(G = (A, (R_i)_{i=1}^k)\), one generates the \emph{semigroup of roles} \(\Role(G)\) by taking all possible composites of the graphs \((A, R_1), \ldots, (A,R_k)\) in the sense of \Cref{def:graph-composition}. The elements of the semigroup of roles are the compound relations present in the social system modelled by \(G\). A \emph{role reduction} of \(G\) is then defined to be a quotient of \(\Role(G)\): a semigroup \(\mathbb{S}\) equipped with a surjective homomorphism \(\Role(G) \twoheadrightarrow \mathbb{S}\). In \Cref{subsec:graph-functoriality} we saw that role analysis, in this sense, is functorial with respect to positional analysis: every positional reduction \(G \twoheadrightarrow H\) induces a role reduction \(\Role(G) \twoheadrightarrow \Role(H)\).

In this section we will recast each of these constructions more abstractly, in order to instantiate them beyond the setting of graphs. In \Cref{subsec:k-coalgebra-roles} we will define the \emph{semigroup of roles} in a multirelational coalgebra, and the corresponding notion of \emph{role reduction}. Then, in \Cref{subsec:functoriality}, we will prove a functoriality theorem for this abstracted form of role analysis, guaranteeing its compatibility with the form of positional analysis described in \Cref{sec:positional_bisim}.


\subsection{The semigroup of roles in a multirelational coalgebra}
\label{subsec:k-coalgebra-roles}

Our starting point is the observation---well-known to category theorists---that the binary relations between sets \(A\) and \(B\) are the morphisms from \(A\) to \(B\) in the Kleisli category, \(\Set_\power\), for the covariant powerset monad. (This terminology will be explained below.) In particular, the graph structures on a set \(A\) correspond to endomorphisms of \(A\) in \(\Set_\power\). What's more, composition in the Kleisli category is precisely composition of relations, or of graphs. It follows that for every multirelational graph \(G\), the semigroup \(\Role(G)\) can be described as a sub-semigroup of the monoid of endomorphisms of \(A\) in \(\Set_\power\).

Indeed, for this purpose it is not important that the monoid of endomorphisms of \(A\) has an identity element (see \Cref{rem:GA-monoid}). Thus, it is not important that \(\Set_\power\) is a category; nor, for that matter, that \(\power\) carries the structure of a monad. All we need is that \(\Set_\power\) is a \emph{semicategory}, and for that it is enough to know that \(\power\) is a \emph{semimonad}. This allows a degree of extra generality which will be crucial when we come to consider roles in systems represented by F-hypergraphs (\Cref{eg:double-powerset-roles}). 

Though monads are ubiquitous in category theory, the term `semimonad' is less likely to be familiar. Let us begin with the definition.

\begin{definition}\label{def:semimonad}
    A \define{semimonad} on a category \(\mathbf{C}\) is a pair \(\mathbb{T} = (T, \mu)\) consisting of a functor $T \colon \mathbf{C} \to \mathbf{C}$ and a natural transformation $\mu\colon TT \Rightarrow T$ that makes this diagram commute:
    \[
    \begin{tikzcd}
        TTT \arrow[Rightarrow, r, "T\mu"] \arrow[Rightarrow, d, "\mu_{T}", swap] & TT \arrow[Rightarrow, d, "\mu"]\\
        TT \arrow[Rightarrow, r, "\mu", swap] & T.
    \end{tikzcd}
    \]
    The natural transformation \(\mu\) is called the \define{multiplication} of \(\mathbb{T}\).
    
    A \define{monad} on \(\mathbf{C}\) is a semimonad \(\mathbb{T}\) together with a natural transformation \(\eta\) from the identity functor on \(\mathbf{C}\) to \(T\), making this diagram commute:
    \[
    \begin{tikzcd}
    T \arrow[Rightarrow, r, "\eta_{T}"] \arrow[Rightarrow, d, "T\eta", swap] \arrow[Rightarrow, rd, "\mathrm{Id}_T"] & TT \arrow[Rightarrow, d, "\mu"]\\
    TT \arrow[Rightarrow, r, swap, "\mu"] & T
    \end{tikzcd}
    \]
    The natural transformation \(\eta\) is called the \define{unit} of \(\mathbb{T}\).
\end{definition}

Just as every monad gives rise to category known as its \emph{Kleisli category} \cite[\S VI.5]{MacLane}, a semimonad gives rise to a \emph{semicategory}. A \define{semicategory} \(\mathbf{S}\) consists of a collection of objects \(\mathrm{Ob}(\mathbf{S})\); for each pair \(X,Y \in \mathrm{Ob}(\mathbf{S})\) a set of morphisms \(\mathbf{S}(X,Y)\); and for each triple \(X,Y,Z \in \mathrm{Ob}(\mathbf{S})\) a composition function
\[\mathbf{S}(Y,Z) \times \mathbf{S}(X,Y) \to \mathbf{S}(X,Z)\]
which satisfies the associative law familiar from the definition of a category. Thus, a {semicategory} is `a category without identities', just as a semigroup is `a monoid without an identity'.

\begin{definition}\label{def:kleisli-semicategory}
    Let \(\mathbb{T} = (T, \mu)\) be a semimonad on a category \(\mathbf{C}\). The \define{Kleisli semicategory} of \(\mathbb{T}\) is the semicategory \(\mathbf{C}_\mathbb{T}\) whose objects are those of \(\mathbf{C}\), and in which a morphism \(f\colon X \rightsquigarrow Y\) is a morphism \(f \colon  X \to T(Y)\) in \(\mathbf{C}\). Given \(f\colon X \rightsquigarrow Y\) and \(g\colon Y \rightsquigarrow Z\), the composite $g \ast f$ in \(\mathbf{C}_\mathbb{T}\) is defined by
    \[X \xto{f} T(Y) \xto{T(g)} TT(Z) \xto{\mu_Z} T(Z).\]
    If \(\mathbb{T}\) is a monad, meaning that it has a unit transformation \(\eta\), then \(\mathbf{C}_\mathbb{T}\) is a category, called the \define{Kleisli category} of \(\mathbb{T}\). The identity morphism \(\mathrm{Id}_X \colon X \rightsquigarrow X\) in \(\mathbf{C}_\mathbb{T}\) is the morphism \(\eta_X \colon X \to T(X)\) in \(\mathbf{C}\).
\end{definition}

For each object \(X\) in \(\mathbf{C}\), the set of endomorphisms \(\mathbf{C}_\mathbb{T}(X,X)\) is the set of morphisms from \(X\) to \(T(X)\) in \(\mathbf{C}\); that is, the set of \(T\)-coalgebra structures on \(X\). Whereas in a category the set of endomorphisms of any object forms a monoid under composition, in a semicategory this set is, \textit{a priori}, only a semigroup. But that structure is enough to let us define a \emph{semigroup of roles} and \emph{role reduction} for any multirelational \(T\)-coalgebra.

\begin{definition}\label{def:coalgebra-roles}
    Let $\mathbb{T} = (T, \mu)$ be a semimonad on $\Set$, and $\alpha = (A, (\alpha_i)_{i=1}^k)$ be a multirelational \(T\)-coalgebra. The \define{semigroup of \(\mu\)-roles in $\alpha$} is the sub-semigroup of $\Set_\mathbb{T}(A,A)$ generated by the coalgebras in \(\alpha\):
    \[\Role_\mathbb{T}(\alpha) = \langle \alpha_1, \ldots, \alpha_k \rangle_\ast.\]
    A \define{\(\mu\)-role reduction of \(\alpha\)} is a semigroup \(\mathbb{S}\) equipped with a surjective semigroup homomorphism
    \(\Role_\mathbb{T}(\alpha) \twoheadrightarrow \mathbb{S}\).
\end{definition}

Where the choice of multiplication is unambiguous, we will suppress the reference to \(\mu\).

\begin{example}[Role analysis for \(\power\)-coalgebras]
\label{eg:powerset-roles}
    The covariant powerset functor \(\power \colon \Set \to \Set\) carries the structure of a monad \(\mathbb{P} = (\power, \eta, \mu)\), whose unit \(\eta\) at a set \(A\) is the function 
    \begin{align*}
        \eta_A \colon A & \to \power(A) \\
        a &\mapsto \{a\}
    \end{align*}
    and whose multiplication \(\mu\) at \(A\) takes each set of subsets to their union:
    \begin{align*}
        \mu_A \colon \power \power(A) & \to \power (A) \\
        \mathcal{U} & \mapsto \bigcup_{V \in \mathcal{U}} V.
    \end{align*}
    Since coalgebras for the powerset functor are graphs, the elements of \(\Set_\mathbb{P}(A,A)\) are in bijection with the graph structures on \(A\). Moreover, composition in the Kleisli category corresponds, under this bijection, to the composition of graphs as defined in \Cref{def:graph-composition}. Thus, for each set \(A\), the semigroup  \(\Set_\mathbb{P}(A,A)\) is isomorphic to the semigroup \(\mathbb{G}(A)\) of graphs on \(A\), and given any multirelational graph \(G = (A, (R_i)_{i=1}^k)\), we have
    \[\Role_\mathbb{P}(N^\mathsf{out}_G) \cong \Role(G).\]
    It follows that a role reduction of \(N^\mathsf{out}_G\), in the sense of \Cref{def:coalgebra-roles}, is a role reduction of \(G\) in the sense of \Cref{def:role-reduction}.
\end{example}

\Cref{eg:powerset-roles} confirms that role analysis for \(\power\)-coalgebras recovers the established notion of role analysis for graphs. The next example makes use of the category-theoretic framework to define not just one, but two notions of role analysis for F-hypergraphs.

\begin{example}
[Role analysis for \(\power\power\)-coalgebras]
\label{eg:double-powerset-roles}
    The double covariant powerset functor \(\power\power\colon \Set \to \Set\) does not admit the structure of a monad: Klin and Salamanca have shown that it cannot be equipped with a unit transformation \cite[Theorem 3.2]{KLIN2018261}. However, \(\power\power\) \emph{does} admit the structure of a semimonad; in fact, it carries at least two different multiplications satisfying \Cref{def:semimonad}.
    
    Let \(\mu \colon \power\power \Rightarrow \power\) be the multiplication of the covariant powerset monad, and let $\nu_1, \nu_2 \colon \mathcal{PPPP} \Rightarrow \mathcal{PP}$ be the natural transformations defined by
    \[\nu_1 = \mu_\power \circ \mu_\mathcal{PP}.\]
    and 
    \[\nu_2 = \power\mu \circ \mathcal{PP}\mu\]
    In a discussion on the mathematical blog \emph{The \(n\)-Category Caf\'e}, Baez and Egan proved that both $\mathbb{PP}_1 = (\mathcal{PP}, \nu_1)$ and $\mathbb{PP}_2 = (\mathcal{PP}, \nu_2)$ are semimonads \cite{BaezEganncat}.
    
    Since coalgebras for the double powerset functor are  F-hypergraphs, both \(\Set_{\mathbb{PP}_1}(A,A)\) and  \(\Set_{\mathbb{PP}_2}(A,A)\) are in bijection with the set of F-hypergraph structures on \(A\). Their composition operations, though, are different; we will denote them by \(\ast_1\) and \(\ast_2\) respectively. This means that, given a multirelational F-hypergraph \(\mathcal{H} = (A, (\eta_i)_{i=1}^k)\), we can generate two distinct semigroups of roles in \(\mathcal{H}\), namely 
    \[\Role_{\mathbb{PP}_1}(\mathcal{H}) = \langle \eta_1, \ldots, \eta_k \rangle_{\ast_1} \subseteq \Set_{\mathbb{PP}_1}(A,A)\]
    and
    \[\Role_{\mathbb{PP}_2}(\mathcal{H}) = \langle \eta_1, \ldots, \eta_k \rangle_{\ast_2} \subseteq \Set_{\mathbb{PP}_2}(A,A)\] 
    Thus, the two semimonad structures on \(\power\power\) give rise to two distinct forms of role analysis for multirelational F-hypergraphs, both of which will be described explicitly in \Cref{sec:analysis-for-hypergraphs}.
\end{example}


\subsection{The functoriality theorem}
\label{subsec:functoriality}

In \Cref{sec:positional_bisim}, we abstracted the main ideas and constructions of positional analysis from graphs to \(T\)-coalgebras for an arbitrary functor \(T\colon \Set \to \Set\). In \Cref{subsec:k-coalgebra-roles} we showed that if \(T\) happens to carry the structure of a semimonad, then one can define role analysis for its coalgebras, too. Now we are going to prove that role analysis, in this very general sense, is compatible with positional analysis. Our main result, \Cref{thm:functoriality-coalgebras}, specializes to the functoriality theorem for graphs (\Cref{thm:functoriality-graphs}) by taking \(T\) to be the powerset functor \(\power\).

Given a semimonad \(\mathbb{T} = (T, \mu)\) on \(\Set\), let \(k\Coalg(T)_\mathsf{PR}\) denote the category of \(k\)-relational \(T\)-coalgebras and positional reductions; that is, the subcategory of \(k\Coalg(T)\) containing all the objects, but only those morphisms whose underlying functions are surjective. As before, let \(\mathbf{SGrp}_\mathsf{Q}\) denote the category of semigroups and surjective semigroup homomorphisms.

\begin{theorem}
\label{thm:functoriality-coalgebras}
    Let \(\mathbb{T} = (T, \mu)\) be a semimonad on \(\Set\). The assignment of the semigroup of roles extends to a functor
    \[\Role_\mathbb{T} \colon k\Coalg(T)_\mathsf{PR} \to \mathbf{SGrp}_\mathsf{Q}.\]
    That is, \(\mu\)-role analysis is functorial with respect to \(T\)-positional reductions.
\end{theorem}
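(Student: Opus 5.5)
Given a positional reduction \(f \colon (A,(\alpha_i)_{i=1}^k) \to (B,(\beta_i)_{i=1}^k)\), so \(f\) is surjective and \(T(f)\circ\alpha_i = \beta_i\circ f\) for each \(i\), the plan is to define \(\Role_\mathbb{T}(f)\) by transporting Kleisli endomorphisms along \(f\). The key observation is that the relation ``\(\gamma \in \Set_\mathbb{T}(A,A)\) and \(\delta \in \Set_\mathbb{T}(B,B)\) satisfy \(T(f)\circ \gamma = \delta\circ f\)'' is closed under Kleisli composition. Indeed, if \(T(f)\gamma = \delta f\) and \(T(f)\gamma' = \delta' f\), then
\[
T(f)\,\mu_A\, T(\gamma')\,\gamma = \mu_B\, TT(f)\, T(\gamma')\,\gamma = \mu_B\, T(\delta')\, T(f)\,\gamma = \mu_B\, T(\delta')\,\delta\, f,
\]
using naturality of \(\mu\), then functoriality of \(T\) applied to \(T(f)\gamma'=\delta' f\), then the hypothesis on \(\gamma\). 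So \(T(f)\circ(\gamma'\ast\gamma) = (\delta'\ast\delta)\circ f\).

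Next I would show that for each \(\gamma\) there is at most one such \(\delta\). This is where surjectivity of \(f\) is used: \(f\) has a section \(s\) in \(\Set\), so \(\delta = \delta f s = T(f)\gamma s\). The set
\[
\Sigma = \{\gamma \in \Set_\mathbb{T}(A,A) \mid \exists\, \delta \text{ with } T(f)\gamma = \delta f\}
\]
is therefore a sub-semigroup by the previous paragraph. It contains every \(\alpha_i\), with partner \(\beta_i\), so it contains \(\Role_\mathbb{T}(\alpha)\). The assignment \(\phi(\gamma) = \delta\) (equivalently \(\phi(\gamma) = T(f)\gamma s\), independent of \(s\)) is then a semigroup homomorphism \(\Sigma \to \Set_\mathbb{T}(B,B)\) with \(\phi(\alpha_i) = \beta_i\).

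Restricting to \(\Role_\mathbb{T}(\alpha)\), the image of \(\phi\) is a sub-semigroup containing the generators \(\beta_i\) and contained in \(\langle\beta_1,\ldots,\beta_k\rangle_\ast\), since every element is a word in the \(\alpha_i\) and \(\phi\) maps it to the same word in the \(\beta_i\). Hence \(\phi \colon \Role_\mathbb{T}(\alpha) \twoheadrightarrow \Role_\mathbb{T}(\beta)\) is surjective, and we set \(\Role_\mathbb{T}(f) := \phi\).

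Functoriality then follows from uniqueness. For the identity, \(\delta = \gamma\) works. For a composite \(g\circ f\), the element \(\Role_\mathbb{T}(g)(\Role_\mathbb{T}(f)(\gamma))\) satisfies
\[
T(gf)\,\gamma = T(g)\,\delta\, f = \delta''\, g f,
\]
so by uniqueness (with \(gf\) surjective) it equals \(\Role_\mathbb{T}(gf)(\gamma)\).

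The only step requiring real care is the closure computation in the first paragraph, where naturality of \(\mu\) must be applied at the morphism \(f\). The associativity axiom of the semimonad is used only to know that \(\Set_\mathbb{T}(A,A)\) is a semigroup at all.
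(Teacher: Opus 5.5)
Your proof is correct and follows essentially the same route as the paper. Your closure computation is \Cref{lem:f_mult} (naturality of \(\mu\) at \(f\)), and your uniqueness-of-partner step is \Cref{lem:prod_eq}. The only difference is packaging: you define the map on the larger sub-semigroup \(\Sigma\) of all \(f\)-related endomorphisms, so well-definedness comes for free instead of being checked against different word decompositions, and you derive functoriality from uniqueness rather than from the paper's observation that the induced map is the same for every \(f\).
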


Before we prove the theorem, let us elaborate on the statement. In the domain category, an object is a set \(A\) equipped with a tuple \((\alpha_1,\ldots, \alpha_k)\) of \(T\)-coalgebra structures. As \(\mathbb{T}\) is a semimonad, each of these coalgebra structures can be regarded as an endomorphism of \(A\) in the Kleisli semicategory \(\Set_\mathbb{T}\), and \(\Role_\mathbb{T}\) is defined on objects by taking the set \(\{\alpha_1,\ldots, \alpha_k\}\) to the semigroup it generates under the composition operation \(\ast\) in \(\Set_\mathbb{T}(A,A)\). 

The claim is that every surjective function \(f \colon A \to B\) which is a homomorphism from \(\alpha = (A, (\alpha_i)_{i=1}^k)\) to \(\beta = (B, (\beta_i)_{i=1}^k)\) induces a surjective semigroup homomorphism \(\Role_\mathbb{T}(f) \colon \Role_\mathbb{T}(\alpha) \to \Role_\mathbb{T}(\beta)\), in such a way that
\begin{equation}\label{eq:functoriality-conditions}
    \Role_\mathbb{T}(g \circ f) = \Role_\mathbb{T}(g) \circ \Role_\mathbb{T}(f) \: \text{ and } \: \Role_\mathbb{T}(\mathrm{Id}_\alpha) = \mathrm{Id}_{\Role_\mathbb{T}(\alpha)}. 
\end{equation}
In fact we will prove that, provided there exists \emph{at least one} morphism from \(\alpha\) to \(\beta\) in \(k\Coalg(T)_\mathsf{PR}\), the function
\begin{equation}\label{eq:semigp-hom-def}
    \begin{aligned}
    \{\alpha_1, \ldots, \alpha_k\} &\to \{\beta_1, \ldots, \beta_k\} \\
    \alpha_i &\mapsto \beta_i
    \end{aligned}
\end{equation}
extends to a surjective semigroup homomorphism \(\Role_\mathbb{T}(\alpha) \to \Role_\mathbb{T}(\beta)\). We will declare this homomorphism to be \(\Role_\mathbb{T}(f)\) for \emph{every} \(f \colon \alpha \to \beta\). 

The main task is to show that the mapping in \eqref{eq:semigp-hom-def} really does extend to a homomorphism; if so, then the functoriality conditions in \eqref{eq:functoriality-conditions} are plainly satisfied. The proof makes use of two lemmas.

\begin{lemma}\label{lem:f_mult}
    Let \(\alpha_1, \alpha_2\) be \(T\)-coalgebras on a set \(A\) and let \(\beta_1, \beta_2\) be \(T\)-coalgebras on a set \(B\). Suppose \(f \colon A \to B\) is a function which is a \(T\)-coalgebra homomorphism from \(\alpha_1\) to  \(\beta_1\) and from \(\alpha_2\) to \(\beta_2\). Then \(f\) is a \(T\)-coalgebra homomorphism from \(\alpha_1 \ast \alpha_2\) to \(\beta_1 \ast \beta_2\).
\end{lemma}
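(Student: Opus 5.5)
We verify the homomorphism square for the composite directly from the definition of Kleisli composition in \Cref{def:kleisli-semicategory}. By that definition, \(\alpha_1 \ast \alpha_2 = \mu_A \circ T(\alpha_1) \circ \alpha_2\) and \(\beta_1 \ast \beta_2 = \mu_B \circ T(\beta_1) \circ \beta_2\). We must show that
\[T(f) \circ \mu_A \circ T(\alpha_1) \circ \alpha_2 = \mu_B \circ T(\beta_1) \circ \beta_2 \circ f.\]
The argument moves \(f\) from the left of this expression to the right, one factor at a time, using three facts in turn.

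The first fact is the naturality of \(\mu\). Its naturality square at \(f\) reads \(T(f) \circ \mu_A = \mu_B \circ TT(f)\), so the left-hand side becomes
\[\mu_B \circ TT(f) \circ T(\alpha_1) \circ \alpha_2 = \mu_B \circ T\bigl(T(f) \circ \alpha_1\bigr) \circ \alpha_2,\]
where the equality uses functoriality of \(T\).

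The second fact is the hypothesis that \(f\) is a homomorphism \(\alpha_1 \to \beta_1\), i.e.\ \(T(f) \circ \alpha_1 = \beta_1 \circ f\). Substituting this and applying functoriality of \(T\) again gives
\[\mu_B \circ T(\beta_1) \circ T(f) \circ \alpha_2.\]
The third fact is the hypothesis that \(f\) is a homomorphism \(\alpha_2 \to \beta_2\), i.e.\ \(T(f) \circ \alpha_2 = \beta_2 \circ f\). Substituting this yields \(\mu_B \circ T(\beta_1) \circ \beta_2 \circ f\), which is the right-hand side.

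There is no genuine obstacle here. The only point needing care is to keep the order of composition consistent with the paper's convention that \(g \ast f = \mu \circ T(g) \circ f\), so that the hypothesis on \(\alpha_1\) is applied inside \(T\) and the hypothesis on \(\alpha_2\) outside it. Associativity of \(\mu\) is not needed for this lemma; it enters only in making \(\Set_\mathbb{T}\) a semicategory. Neither surjectivity of \(f\) nor any property specific to \(\Set\) is used.
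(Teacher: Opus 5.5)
Your proof is correct and matches the paper's: the paper pastes the two homomorphism squares for \(f\) and the naturality square \(T(f)\circ\mu_A = \mu_B\circ TT(f)\) into a single commutative diagram, and your equational chain is that same diagram chase. One small difference: the paper's diagram draws the composite as \(\mu_A\circ T(\alpha_2)\circ\alpha_1\), the reverse of its own convention in \Cref{def:kleisli-semicategory}, whereas your ordering follows that definition; since the hypotheses on \(\alpha_1\) and \(\alpha_2\) are symmetric, this changes nothing.
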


\begin{proof}
Consider the diagram
    \[ 
    \begin{tikzcd}
        A \arrow[bend left=25]{rrr}{\alpha_1 \ast \alpha_2} \arrow{r}{\alpha_1} \arrow{d}{f} & TA \arrow{r}{T\alpha_2} \arrow{d}{Tf} & TTA \arrow{r}{\mu_A} \arrow{d}{TTf} & TA \arrow{d}{Tf} \\
        B \arrow[bend right=25, swap]{rrr}{\beta_1 \ast \beta_2} \arrow[swap]{r}{\beta_1} & TB \arrow[swap]{r}{T\beta_2} & TTB \arrow[swap]{r}{\mu_B} & TB.
    \end{tikzcd}
    \]
The squares on the left and in the middle commute because \(f\) is a \(T\)-coalgebra homomorphism; the square on the right commutes by the naturality of \(\mu\). So the outer diagram commutes, which says that \(f\) is a \(T\)-coalgebra homomorphism from \(\alpha_1 \ast \alpha_2\) to \(\beta_1 \ast \beta_2\).
\end{proof}

\begin{lemma}\label{lem:prod_eq}
    Let \(\alpha\) be a \(T\)-coalgebra on a set \(A\) and let \(\beta_1, \beta_2\) be \(T\)-coalgebras on a set \(B\). Suppose there exists a surjective function $f \colon  A \to B$ which is a \(T\)-coalgebra homomorphism  \(\alpha \to \beta_1\) and \(\alpha \to \beta_2\). Then $\beta_1 = \beta_2$. 
\end{lemma}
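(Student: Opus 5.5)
The argument is direct: show that the two structure maps \(\beta_1, \beta_2 \colon B \to T(B)\) agree on every element of \(B\), using surjectivity of \(f\) to reach each element.

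The homomorphism condition of \Cref{def:T-coalgebra}, applied to \(f\) regarded as a homomorphism \(\alpha \to \beta_1\) and as a homomorphism \(\alpha \to \beta_2\), gives two commuting squares with the same top-left path:
\[\beta_1 \circ f = T(f) \circ \alpha = \beta_2 \circ f.\]
So \(\beta_1\) and \(\beta_2\) agree after precomposition with \(f\).

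Now let \(b \in B\). Since \(f\) is surjective, choose \(a \in A\) with \(f(a) = b\). Then
\[\beta_1(b) = \beta_1(f(a)) = T(f)(\alpha(a)) = \beta_2(f(a)) = \beta_2(b).\]
Hence \(\beta_1 = \beta_2\) as functions \(B \to T(B)\), which is the claim. Equivalently, surjections are epimorphisms in \(\Set\), so \(f\) can be cancelled on the right of \(\beta_1 \circ f = \beta_2 \circ f\). This step uses the axiom of choice only in the harmless form that surjections are epic, and no choice is needed pointwise.

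There is no real obstacle. The only point worth flagging is that the common map \(T(f) \circ \alpha\) depends only on \(\alpha\) and \(f\), not on \(\beta_i\). That is exactly why a single coalgebra \(\alpha\) on the domain forces equality of the two codomain structures.

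In the proof of \Cref{thm:functoriality-coalgebras}, this lemma will be combined with \Cref{lem:f_mult}. If two words in the \(\alpha_i\) coincide in \(\Role_\mathbb{T}(\alpha)\), then \(f\) is a homomorphism from that common element to both corresponding words in the \(\beta_i\). The present lemma then shows those two words in the \(\beta_i\) are equal, which is what makes the map in \eqref{eq:semigp-hom-def} well defined.
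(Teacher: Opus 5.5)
Your proof is correct and matches the paper's argument: both homomorphism squares give \(\beta_1 \circ f = T(f) \circ \alpha = \beta_2 \circ f\), and surjectivity of \(f\) lets you cancel \(f\) on the right. One small correction to your aside: the fact that surjections are epimorphisms in \(\Set\) uses no form of the axiom of choice at all; choice is needed only for the converse-type fact that every surjection has a section.
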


\begin{proof}
As $f$ is surjective, it suffices to show that $\beta_1 \circ f = \beta_2 \circ f$. And indeed,
\[\beta_1 \circ f = Tf \circ \alpha = \beta_2 \circ f\]
where the first equation expresses that \(f\) is a homomorphism from \(\alpha\) to \(\beta_1\) and the second that it is a homomorphism from \(\alpha\) to \(\beta_2\).
\end{proof}

With these, we can prove the theorem.

\begin{proof}[Proof of \Cref{thm:functoriality-coalgebras}]
    Let \(\alpha = (A, (\alpha_i)_{i=1}^k)\) and \(\beta = (B, (\beta_i)_{i=1}^k)\) be \(k\)-relational \(T\)-coalgebras, and suppose there exists a surjection \(f \colon  A \to B\) which is a \(T\)-coalgebra homomorphism \(\alpha_i \to \beta_i\) for each \(i \in \{1,\ldots,k\}\). Our goal is to prove that the function
    \begin{align*}
    \{\alpha_1, \ldots, \alpha_k\} &\to \{\beta_1, \ldots, \beta_k\} \\
    \alpha_i &\mapsto \beta_i
    \end{align*}
    extends to a surjective semigroup homomorphism \(\Role_\mathbb{T}(\alpha) \to \Role_\mathbb{T}(\beta)\).

    Every element \(\rho \in \Role_\mathbb{T}(\alpha)\) can be written in at least one way as a composite of the generating coalgebras: say, \(\rho = \alpha_{i_1} \ast \cdots \ast \alpha_{i_m}\) for \(i_1, \ldots, i_m \in \{1,\ldots,k\}\). We would like to specify
    \[h\colon \Role_\mathbb{T}(\alpha) \to \Role_\mathbb{T}(\beta)\]
    by \(h(\rho) = \beta_{i_1} \ast \cdots \ast \beta_{i_m}\). If \(h\) is well-defined, it is evidently a semigroup homomorphism and surjective; what we need to check is that \(h(\rho)\) does not depend on the chosen decomposition of \(\rho\) into generators.

    So, suppose \(\alpha_{i_1} \ast \cdots \ast \alpha_{i_m} = \alpha_{j_1} \ast \cdots \ast \alpha_{j_n}\) for some \(j_1, \ldots, j_n \in \{1,\ldots, k\}\). We would like to see that \(\beta_{i_1} \ast \cdots \ast \beta_{i_m} = \beta_{j_1} \ast \cdots \ast \beta_{j_n}\). Since the surjective function \(f \colon A \to B\) is a coalgebra homomorphism \(\alpha_{i_\ell} \to \beta_{i_p}\) for each \(p \in \{1, \ldots, m\}\), and also a coalgebra homomorphism \(\alpha_{j_q} \to \beta_{j_q}\) for each \(q \in \{1, \ldots, n\}\), applying \Cref{lem:f_mult} repeatedly tells us that it is a coalgebra homomorphism
    \[\alpha_{i_1} \ast \cdots \ast \alpha_{i_m} \to \beta_{i_1} \ast \cdots \ast \beta_{i_m}\]
    and also a coalgebra homomorphism
    \[\alpha_{j_1} \ast \cdots \ast \alpha_{j_n} \to \beta_{j_1} \ast \cdots \ast \beta_{j_n}.\]
    Given that \(\alpha_{i_1} \ast \cdots \ast \alpha_{i_m} = \alpha_{j_1} \ast \cdots \ast \alpha_{j_n}\), \Cref{lem:prod_eq} now tells us that
    \[\beta_{i_1} \ast \cdots \ast \beta_{i_m} = \beta_{j_1} \ast \cdots \ast \beta_{j_n},\]
    and hence that \(h\colon \Role_\mathbb{T}(\alpha) \to \Role_\mathbb{T}(\beta)\) is well-defined.

    Now, for each \(f \colon \alpha \to \beta\) in \(k\Coalg(T)_\mathsf{PR}\), we declare that \(\Role_\mathbb{T}(f) = h\). Then \(\Role_\mathbb{T}(\mathrm{Id}_\alpha) = \mathrm{Id}_{\Role_\mathbb{T}(\alpha)}\) and \(\Role_\mathbb{T}(g \circ f) = \Role_\mathbb{T}(g) \circ \Role_\mathbb{T}(f)\), so \(\Role_\mathbb{T}\) becomes a functor, as claimed.
\end{proof}

\begin{remark}
    Though we state \Cref{thm:functoriality-coalgebras} for coalgebras over $\Set$, the result is really far more general. If we instead worked with a semimonad \(\mathbb{T}\) on an arbitrary category $\mathbf{C}$, and defined the category of $k$-relational $T$-coalgebras using epimorphisms in \(\mathbf{C}\) in place of surjections, the same proof would go through unchanged. Indeed, the proof never uses any special property of $\Set$; even \Cref{lem:prod_eq} relies only on the fact that surjections are epimorphisms in $\Set$. We present the theorem (as with the rest of the paper) in $\Set$ for accessibility, and because this level of generality suffices for our purposes.
    
    What does change outside $\Set$ is that the relationship between epimorphic coalgebra morphisms and bisimulation equivalences becomes subtler; see \cite{Mavoungou_2019}. Thus, to develop positional analysis for coalgebras over a functor \(T \colon \mathbf{C} \to \mathbf{C}\) would require additional assumptions on both \(\mathbf{C}\) and $T$.
\end{remark}


\section{Positional and role analysis on hypergraphs}
\label{sec:analysis-for-hypergraphs}

Finally, we are equipped to describe positional and role analysis for social systems modelled by hypergraphs. The definitions in this section are given in terms of F-hypergraphs, but each one can be specialized to undirected hypergraphs using \Cref{prop:hypergraphs-F-hypergraphs}.

We introduce positional analysis first, in \Cref{subsec:hypergraph-positional-analysis}, where we define \emph{blockmodels}, \emph{regular equivalences}, and \emph{positional reductions} of F-hypergraphs and give an example.  In \Cref{subsec:hypergraph-role-analysis} we describe role analysis. In contrast with the case for graphs, there are two distinct notions of role analysis for multirelational F-hypergraphs that emerge naturally from our framework. We refer to these as producing a semigroup of \emph{tight roles} and a semigroup of \emph{loose roles}, and we give two examples to illustrate how they differ. 

In \Cref{subsec:hypergraph-functoriality} we apply our main result, \Cref{thm:functoriality-coalgebras}, to prove a functoriality theorem expressing that both tight and loose role analysis are compatible with positional analysis for hypergraphs. In particular, every positional reduction of hypergraphs induces a role reduction of the semigroup of tight roles \emph{and} the semigroup of loose roles.


\subsection{Positional analysis}
\label{subsec:hypergraph-positional-analysis}

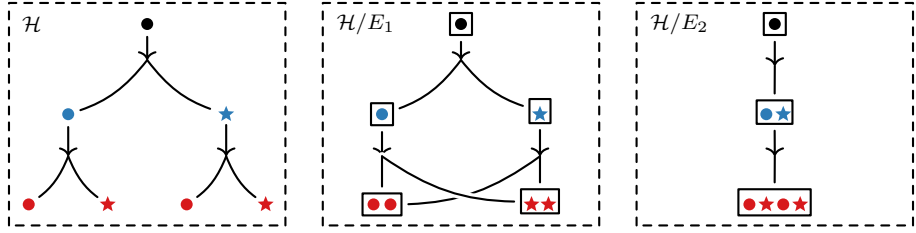
\begin{figure}
    \centering
    \begin{tikzpicture}[line cap=round, line join=round, thick, yscale=0.8, xscale=0.75] 
    \tikzset{
    dot/.style={ circle, fill, inner sep=1.6pt},
    stardot/.style={ star, star points=5, star point ratio=2.25, fill, draw=none, minimum size=5.5pt, inner sep=1pt}, 
    }
    \def\bend{16}
    \def\gap{2.2pt} 

    \node[dot] (r) at ( 0, 2.2) {}; 
    \node[dot, blue] (a) at (-1.4, 0.7) {}; 
    \node[stardot, blue] (b) at ( 1.4, 0.7) {}; 
    \node[dot, red] (a1) at (-2.1,-0.8) {}; 
    \node[stardot, red] (a2) at (-0.7,-0.8) {}; 
    \node[dot, red] (b1) at ( 0.7,-0.8) {}; 
    \node[stardot, red] (b2) at ( 2.1,-0.8) {};

    \coordinate (sr) at ( 0, 1.6); 
    \coordinate (sa) at (-1.4, 0); 
    \coordinate (sb) at ( 1.4, 0);

    \draw[->, shorten <=\gap] (r) -- (sr); 
    \draw[->, shorten <=\gap] (a) -- (sa); 
    \draw[->, shorten <=\gap] (b) -- (sb);

    \draw[shorten >=\gap] (sr) to[bend left=\bend] (a); 
    \draw[shorten >=\gap] (sr) to[bend right=\bend] (b);

    \draw[shorten >=\gap] (sa) to[bend left=\bend] (a1); 
    \draw[shorten >=\gap] (sa) to[bend right=\bend] (a2);

    \draw[shorten >=\gap] (sb) to[bend left=\bend] (b1); 
    \draw[shorten >=\gap] (sb) to[bend right=\bend] (b2); 
    
    \draw[dashed] (-2.45,-1.15) rectangle (2.45,2.55); 
    \node[anchor=east] (g) at ( -1.7, 2.2) {\footnotesize {\(\mathcal{H}\)}};
    \end{tikzpicture}
    \quad
    \begin{tikzpicture}[line cap=round, line join=round, thick, yscale=0.8, xscale=0.75] 
    \tikzset{dot/.style={circle, fill, inner sep=1.6pt}, 
    stardot/.style={star, star points=5, star point ratio=2.25, fill, draw=none, minimum size=5.5pt, inner sep=1pt},
    over/.style={preaction={draw=white, line width=\pgflinewidth+1.2pt}}, } 
    \def\bend{16}
    \def\gap{2.2pt} 

    \node[dot] (r) at ( 0, 2.2) {}; 
    \node[rectangle,draw,fit=(r),inner sep=1.5] (rr) {};
    \node[dot, blue] (a) at (-1.4, 0.7) {}; 
    \node[stardot, blue] (b) at ( 1.4, 0.7) {}; 
    \node[rectangle,draw,fit=(a),inner sep=1.5] (aa) {};
    \node[rectangle,draw,fit=(b),inner sep=1.5] (bb) {};
    \node[dot, red] (a1) at (-1.55,-0.8) {}; 
    \node[dot, red] (b1) at (-1.25,-0.8) {}; 
    \node[rectangle,draw,fit=(a1) (b1),inner sep=1.5] (ab1) {};
    \node[stardot, red] (a2) at ( 1.25,-0.8) {};
    \node[stardot, red] (b2) at ( 1.55,-0.8) {};
    \node[rectangle,draw,fit=(a2) (b2),inner sep=1.5] (ab2) {};

    \coordinate (sr) at ( 0, 1.6); 
    \coordinate (sa) at (-1.4, 0); 
    \coordinate (sb) at ( 1.4, 0);

    \draw[->, shorten <=\gap] (rr) -- (sr); 
    \draw[->, shorten <=\gap] (aa) -- (sa); 
    \draw[->, shorten <=\gap] (bb) -- (sb);

    \draw[shorten >=\gap] (sr) to[bend left=\bend] (aa); 
    \draw[shorten >=\gap] (sr) to[bend right=\bend] (bb);
    
    \draw[shorten >=\gap] (sb) to[bend left=\bend] (ab1); 
    
    \draw[shorten >=\gap, over] (sa) to[bend right=\bend] (ab2);
    \draw[shorten >=\gap] (sa) -- (ab1); 
    \draw[shorten >=\gap] (sb) -- (ab2); 

    \draw[dashed] (-2.45,-1.15) rectangle (2.45,2.55); 
    \node[anchor=east] (g) at ( -1, 2.2) {\footnotesize {\(\block{\mathcal{H}}{E_1}\)}};
    \end{tikzpicture}
    \quad
    \begin{tikzpicture}[line cap=round, line join=round, thick, yscale=0.8, xscale=0.75] 
    \tikzset{
    dot/.style={ circle, fill, inner sep=1.6pt},
    stardot/.style={ star, star points=5, star point ratio=2.25, fill, draw=none, minimum size=5.5pt, inner sep=1pt}, 
    } 
    \def\bend{16}
    \def\gap{2.2pt} 

    \node[dot] (r) at ( 0, 2.2) {}; 
    \node[rectangle,draw,fit=(r),inner sep=1.5] (rr) {};
    \node[dot, blue] (a) at (-.15, 0.7) {}; 
    \node[stardot, blue] (b) at ( .15, 0.7) {}; 
    \node[rectangle,draw,fit=(a) (b),inner sep=1] (ab) {};
    \node[dot, red] (a1) at (-.45,-0.8) {}; 
    \node[stardot, red] (a2) at (-.15,-0.8) {}; 
    \node[dot, red] (b1) at ( .15,-0.8) {}; 
    \node[stardot, red] (b2) at ( .45,-0.8) {}; 
    \node[rectangle,draw,fit=(a1) (a2) (b1) (b2),inner sep=1.5] (b12) {};

    \coordinate (sr) at ( 0, 1.5); 
    \coordinate (sa) at ( 0, 0); 

    \draw[->, shorten <=\gap] (rr) -- (sr); 
    \draw[->, shorten <=\gap] (ab) -- (sa); 

    \draw[shorten >=\gap] (sr) -- (ab); 
    \draw[shorten >=\gap] (sa) -- (b12); 
    
    \draw[dashed] (-2.45,-1.15) rectangle (2.45,2.55); 
    \node[anchor=east] (g) at ( -1, 2.2) {\footnotesize {\(\block{\mathcal{H}}{E_2}\)}};
    \end{tikzpicture}

    \caption{An F-hypergraph and two of its blockmodels. See \Cref{eg:hypergraph-blockmodels}.}
    \label{fig:hyper-block-more-info}
\end{figure}

Just as in the case of graphs, \emph{any} equivalence relation on the set of vertices of an F-hypergraph \(\mathcal{H}\) can be used to produce a new, smaller F-hypergraph whose structure reflects that of \(\mathcal{H}\). Echoing the terminology for graphs, we will call this a \emph{blockmodel}.

\begin{definition}
    Let \(\mathcal{H} = (A, H)\) be an F-hypergraph. The \define{blockmodel} of \(\mathcal{H}\) with respect to \(E\) is the F-hypergraph \(\block{\mathcal{H}}{E} = (\block{A}{E}, \block{H}{E})\) in which \(([a],X)\) is a hyperedge if and only if there exists \((a,U) \in H\) such that \(X = \{[u] \mid u \in U\}\).
\end{definition}

Just as in the case of graphs, some equivalence relations produce blockmodels that are nicer---more sensible---than others. The theory introduced in Sections \ref{sec:ss_coalg} and \ref{sec:positional_bisim} tells us how these equivalence relations should be characterized. Echoing the terminology for graphs, we will call them \emph{regular equivalences}.

\begin{definition}\label{def:regular-equiv-hypergraph}
    Let \(\mathcal{H} = (A, H)\) be an F-hypergraph. A \define{regular equivalence} on \(\mathcal{H}\) is an equivalence relation \(E \subseteq A \times A\) with the following property. For every \((a, a') \in E\) and for each \((a, U) \in H\), there exists \((a', U') \in H\) such that
    \begin{itemize}
        \item for each \(u \in U\) there exists \(u' \in U'\) such that \((u,u') \in E\), and
        \item for each \(u' \in U'\) there exists \(u \in U\) such that \((u,u') \in E\).
    \end{itemize}
\end{definition}

\begin{example}\label{eg:hypergraph-blockmodels}
    Consider the F-hypergraph \(\mathcal{H} = (A,H)\) on the left in \Cref{fig:hyper-block-more-info}. Suppose the elements of \(A\) are the members of a family and \((a,U)\) is a hyperedge in \(H\) if and only if the members of \(U\) are the parents of \(A\). \Cref{fig:hyper-block-more-info} shows blockmodels of \(\mathcal{H}\) by two different regular equivalences. The three blocks in \(\block{\mathcal{H}}{E_2}\) correspond to the `generations' in this family. The five blocks in \(\block{\mathcal{H}}{E_1}\) might, for instance, correspond to the positions of `child', `mother', `father', `grandmother' and `grandfather'.
\end{example}

Next, echoing \Cref{def:positional-reduction-graph} for graphs, we will define \emph{positional reductions} of multirelational F-hypergraphs.

\begin{definition}
\label{def:blockmodel-reduction-hypergraph}
    A \define{positional reduction} of a \(k\)-relational F-hypergraph \(\mathcal{H}\) is a map of \(k\)-relational F-hypergraphs \(\mathcal{H} \to \mathcal{K}\) which is surjective on vertices and reflects hyperedges.
\end{definition}

The idea is that a positional reduction is a map that reduces a multirelational F-hypergraph to its blockmodel by a regular equivalence. The next lemma confirms that \Cref{def:blockmodel-reduction-hypergraph} correctly characterizes such maps.

\begin{lemma}\label{lem:pos-red-equiv}
    Let \(\mathcal{H} = (A, (H_i)_{i=1}^k)\) and \(\mathcal{K} = (B, (K_i)_{i=1}^k)\) be k-relational F-hypergraphs and \(f \colon \mathcal{H}\to \mathcal{K}\) a map in \(k\FHGraph\). The following are equivalent.
    \begin{itemize}
        \item The map \(f\) is a positional reduction.
        \item The equivalence relation \(E = \{(a,a') \mid f(a) = f(a')\}\) is a regular equivalence on \((A,H_i)\) for each \(i \in \{1, \ldots, k\}\), and \((B,K_i) = (\block{A}{E},\block{H_i}{E})\).
    \end{itemize}
\end{lemma}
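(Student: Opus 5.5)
We will prove this by transporting everything along the functor \(\mathcal{H}_{(-)}\) of \Cref{prop:CoalgPP-HGraph-functor} and then applying \Cref{lem:coalgebra-PR-concretely} with \(T = \power\power\). First, by the computation in the proof of \Cref{prop:CoalgPP-HGraph-functor}, a function \(f \colon A \to B\) is a \(\power\power\)-coalgebra homomorphism \(\mathcal{N}_{(A,H_i)} \to \mathcal{N}_{(B,K_i)}\) if and only if
\[\{f(V) \mid (a,V) \in H_i\} = \{U \mid (f(a),U) \in K_i\}\]
for every \(a \in A\). Here the inclusion from left to right says that \(f\) preserves hyperedges, and the inclusion from right to left says that \(f\) reflects hyperedges. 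Since \(f\) is assumed to be a map in \(k\FHGraph\), \(f\) is a positional reduction in the sense of \Cref{def:blockmodel-reduction-hypergraph} exactly when \(f\) is a surjective map \((A,(\mathcal{N}_{H_i})_i) \to (B,(\mathcal{N}_{K_i})_i)\) in \(k\Coalg(\power\power)\), that is, a positional reduction in the sense of \Cref{def:positional-reduction-coalgebras}.

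For the forward direction, \Cref{lem:coalgebra-PR-concretely} gives the following. The kernel \(E = \{(a,a') \mid f(a)=f(a')\}\) carries coalgebra structures \(\epsilon_i\) making it a \(\power\power\)-bisimulation equivalence on each \(\mathcal{N}_{H_i}\), and \(f\) is the quotient. (The equivalence relation produced in that proof is precisely this kernel.) By \Cref{eg:PP-bisimulations}, bisimulation equivalences on \(\mathcal{N}_{H_i}\) are exactly the regular equivalences of \Cref{def:regular-equiv-hypergraph}, so \(E\) is regular on each \((A,H_i)\). By \Cref{eg:PP-blockmodels}, the quotient coalgebra is \(\mathcal{N}_{\block{H_i}{E}}\). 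Because \(f\) is surjective, it induces a bijection \(\block{A}{E} \cong B\), under which \(\mathcal{N}_{K_i}\) corresponds to \(\mathcal{N}_{\block{H_i}{E}}\). Hence \((B,K_i) = (\block{A}{E},\block{H_i}{E})\), with this identification understood via \(f\), as in \Cref{lem:graph-PR-concretely}.

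For the converse, suppose \(E\) is regular on each \((A,H_i)\) and \((B,K_i) = (\block{A}{E},\block{H_i}{E})\), with \(f\) the quotient map \(a \mapsto [a]\). Then \(f\) is surjective. It remains to show that \(f\) reflects hyperedges. Take a hyperedge \(([a],X) \in \block{H_i}{E}\). By definition of the blockmodel there is some \((a'',U) \in H_i\) with \([a'']=[a]\) and \(X = \{[u] \mid u\in U\}\). Regularity applied to the pair \((a'',a) \in E\) gives some \((a,U') \in H_i\) such that every element of \(U\) is \(E\)-related to an element of \(U'\) and vice versa. This two-sided condition says exactly that \(\{[u'] \mid u'\in U'\} = \{[u] \mid u \in U\} = X\), so \(f(U') = X\), as required. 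Alternatively, one can simply cite \Cref{thm:T-bisimulation-span} together with \Cref{eg:PP-blockmodels}, which show that \(f\) is a coalgebra homomorphism.

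The step we expect to need the most care is the bookkeeping identification of \(B\) with \(\block{A}{E}\). The statement is literally an equality, so it should be read up to the canonical bijection induced by the surjection \(f\). The only other point needing attention is the blockmodel definition, where a representative \(a''\) different from \(a\) may be used; this is precisely where regularity enters.
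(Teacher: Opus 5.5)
Your proof is correct and takes essentially the same route as the paper: it identifies positional reductions of F-hypergraphs with surjective maps in $k\Coalg(\power\power)$ via the computation in the proof of \Cref{prop:CoalgPP-HGraph-functor}, then invokes \Cref{lem:coalgebra-PR-concretely} and \Cref{eg:PP-bisimulations}. The extra pieces you add fill in what the paper's three-line proof leaves implicit. These are the appeal to \Cref{eg:PP-blockmodels}, the direct check that the quotient map reflects hyperedges, and reading $f$ as the canonical quotient map $a \mapsto [a]$; without that last reading the converse can genuinely fail when $A$ is infinite.
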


\begin{proof}
    By the proof of \Cref{prop:CoalgPP-HGraph-functor}, a map in \(k\FHGraph\) is a map of \(k\)-relational \(\power\power\)-coalgebras if and only if it reflects hyperedges. By \Cref{lem:coalgebra-PR-concretely}, a map of \(k\)-relational \(\power\power\)-coalgebras is a quotient by a bisimulation equivalence if and only if it is surjective on vertices. And by \Cref{eg:PP-bisimulations}, a regular equivalence on \((A,H_i)\) is precisely a bisimulation equivalence on the corresponding \(\power\power\)-coalgebra.
\end{proof}

In \Cref{subsec:hypergraph-role-analysis} we will define \emph{role reductions} for multirelational F-hypergraphs, and in \Cref{subsec:hypergraph-functoriality} we will prove that every positional reduction induces a role reduction.


\subsection{Role analysis}
\label{subsec:hypergraph-role-analysis}

In the case of a multirelational graph \((A, (R_i)_{i=1}^k)\), there is just one sensible definition for the semigroup of roles: it has to be the semigroup generated by \(\{R_1,\ldots, R_k\}\) under composition of relations. For hypergraphs, things are not quite so straightforward. Given two F-hypergraphs on the same set, there is more than one way to form a `composite', and these give rise to at least two distinct semigroups of roles.

To see this, consider the 2-relational F-hypergraph \((A, (H_1,H_2))\) on the left in \Cref{fig:two-hypergraph-composites}. There are two equally natural ways to concatenate the hyperedges in \(H_1\) and \(H_2\) to obtain a new F-hypergraph \(H_2 \circ H_1\) on \(A\). Given the unique hyperedge \((a,V)\) in \(H_1\), the first approach treats each vertex in \(V\) separately, adding a hyperedge \((a,U)\) to \(H_2 \circ H_1\) for every \(b \in V\) and every hyperedge \((b,U)\) in \(H_2\). The second approach does not distinguish the vertices in \(V\), but instead adds a single hyperedge to \(H_2 \circ H_1\) whose source is \(a\) and whose target is the union over all sets \(U\) that admit a hyperedge from \(b\) in \(H_2\). We will refer to these approaches as producing the \emph{tight} and \emph{loose} composites, respectively.

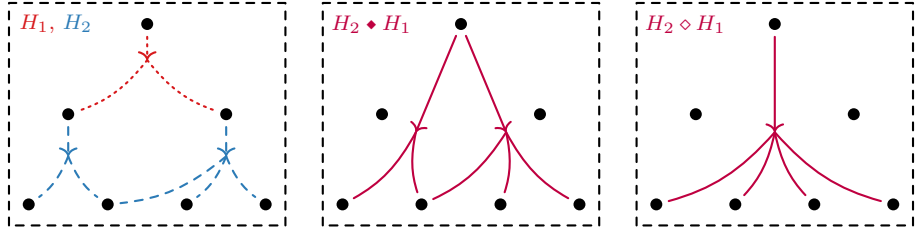
\begin{figure}
    \centering
    \begin{tikzpicture}[line cap=round, line join=round, thick, yscale=0.8, xscale=0.75] 
    \tikzset{ dot/.style={circle, fill, inner sep=1.6pt}, } 
    \def\bend{16}
    \def\gap{2.2pt} 

    \node[dot] (r) at ( 0, 2.2) {}; 
    \node[dot] (a) at (-1.4, 0.7) {}; 
    \node[dot] (b) at ( 1.4, 0.7) {}; 
    \node[dot] (a1) at (-2.1,-0.8) {}; 
    \node[dot] (a2) at (-0.7,-0.8) {}; 
    \node[dot] (b1) at ( 0.7,-0.8) {}; 
    \node[dot] (b2) at ( 2.1,-0.8) {};

    \coordinate (sr) at ( 0, 1.6); 
    \coordinate (sa) at (-1.4, 0); 
    \coordinate (sb) at ( 1.4, 0);

    \draw[->, shorten <=\gap, red, dotted] (r) -- (sr); 
    \draw[->, shorten <=\gap, blue, dashed] (a) -- (sa); 
    \draw[->, shorten <=\gap, blue, dashed] (b) -- (sb);

    \draw[shorten >=\gap, red, dotted] (sr) to[bend left=\bend] (a); \draw[shorten >=\gap, red, dotted] (sr) to[bend right=\bend] (b);

    \draw[shorten >=\gap, blue, dashed] (sa) to[bend left=\bend] (a1);
    \draw[shorten >=\gap, blue, dashed] (sa) to[bend right=\bend] (a2);

    \draw[shorten >=\gap, blue, dashed] (sb) to[bend left=\bend] (a2);
    \draw[shorten >=\gap, blue, dashed] (sb) to[bend left=\bend] (b1);
    \draw[shorten >=\gap, blue, dashed] (sb) to[bend right=\bend] (b2); 
    
    \draw[dashed] (-2.45,-1.15) rectangle (2.45,2.55); 
    \node[anchor=east] (g) at ( -.8, 2.2) {\footnotesize {\color{red}{$H_1$}, \color{blue}{$H_2$}}}; 
    \end{tikzpicture}
    \quad
    \begin{tikzpicture}[line cap=round, line join=round, thick, yscale=0.8, xscale=0.75]
    \tikzset{ dot/.style={circle, fill, inner sep=1.6pt}, } 
    \def\bend{16}
    \def\gap{2.2pt} 

    \node[dot] (r) at ( 0, 2.2) {}; 
    \node[dot] (a) at (-1.4, 0.7) {}; 
    \node[dot] (b) at ( 1.4, 0.7) {}; 
    \node[dot] (a1) at (-2.1,-0.8) {}; 
    \node[dot] (a2) at (-0.7,-0.8) {}; 
    \node[dot] (b1) at ( 0.7,-0.8) {}; 
    \node[dot] (b2) at ( 2.1,-0.8) {};

    \coordinate (sa) at (-.8, 0.4); 
    \coordinate (sb) at ( .8, 0.4);

    \draw[->, shorten <=\gap, purple] (r) -- (sa); 
    \draw[->, shorten <=\gap, purple] (r) -- (sb);

    \draw[shorten >=\gap, purple] (sa) to[bend left=\bend] (a1);
    \draw[shorten >=\gap, purple] (sa) to[bend right=\bend] (a2);

    \draw[shorten >=\gap, purple] (sb) to[bend left=\bend] (a2);
    \draw[shorten >=\gap, purple] (sb) to[bend left=\bend] (b1);
    \draw[shorten >=\gap, purple] (sb) to[bend right=\bend] (b2); 
    
    \draw[dashed] (-2.45,-1.15) rectangle (2.45,2.55); 
    \node[anchor=east] (g) at ( -.7, 2.2) {\footnotesize {\color{purple}{$H_2 \blackdiamond H_1$}}}; 
    \end{tikzpicture}
    \quad
    \begin{tikzpicture}[line cap=round, line join=round, thick, yscale=0.8, xscale=0.75]
    \tikzset{ dot/.style={circle, fill, inner sep=1.6pt}, } 
    \def\bend{16}
    \def\gap{2.2pt} 

    \node[dot] (r) at ( 0, 2.2) {}; 
    \node[dot] (a) at (-1.4, 0.7) {}; 
    \node[dot] (b) at ( 1.4, 0.7) {}; 
    \node[dot] (a1) at (-2.1,-0.8) {}; 
    \node[dot] (a2) at (-0.7,-0.8) {}; 
    \node[dot] (b1) at ( 0.7,-0.8) {}; 
    \node[dot] (b2) at ( 2.1,-0.8) {};

    \coordinate (s) at ( 0, 0.4);

    \draw[->, shorten <=\gap, purple] (r) -- (s);

    \draw[shorten >=\gap, purple] (s) to[bend left=\bend] (a1); \draw[shorten >=\gap, purple] (s) to[bend left=\bend] (a2);

    \draw[shorten >=\gap, purple] (s) to[bend right=\bend] (b1); \draw[shorten >=\gap, purple] (s) to[bend right=\bend] (b2); 
    
    \draw[dashed] (-2.45,-1.15) rectangle (2.45,2.55); 
    
    \node[anchor=east] (g) at ( -.7, 2.2) {\footnotesize {\color{purple}{$H_2 \diamond H_1$}}}; 
    
    \end{tikzpicture}

    \caption{On the left is a multirelational F-hypergraph \(\mathcal{H} = (A, (H_1,H_2))\). To its right are the tight composite \(\mathcal{H}_2 \blackdiamond \mathcal{H}_1\) and the loose composite \(\mathcal{H}_2 \diamond \mathcal{H}_1\).}
    \label{fig:two-hypergraph-composites}
\end{figure}

\begin{definition}
    Given two F-hypergraphs \(\mathcal{H}_1 = (A, H_1)\) and \(\mathcal{H}_2 = (A, H_2)\), their \define{tight composite} is the F-hypergraph \(\mathcal{H}_2 \blackdiamond \mathcal{H}_1 = (A, H_2 \blackdiamond H_1)\) in which
    \[H_2 \blackdiamond H_1 = \left\{(a,U) \: \middle| \: \text{there exist } (a,V) \in H_1 \text{ and } b \in V \text{ such that } (b,U) \in H_2 \right\}.\]
    Their \define{loose composite} is the F-hypergraph \(\mathcal{H}_2 \diamond \mathcal{H}_1 = (A, H_2 \diamond H_1)\) in which
    \[H_2 \diamond H_1 = \left\{(a,W) \mid \text{there exists } (a,V) \in H_1 \text{ such that } W = \bigcup_{b \in V} \bigcup_{(b,U) \in H_2} U \right\}.\]
\end{definition}

The next example suggests a semantic interpretation for the information captured by the tight and loose composites.

\begin{example}[Tight and loose composition]
\label{eg:hypergraph-composites}
    Consider a `family tree' F-hypergraph \(\mathcal{H} = (A, H)\), in which the elements of \(A\) are the members of a family and \((a,U)\) is a hyperedge if and only if the members of \(U\) are the parents of \(A\).
    In this F-hypergraph, the neighbourhood of each \(a \in A\) contains a single element:
    \[\mathcal{N}_\mathcal{H}(a) = \{\{\text{parents of \(a\)}\}\}.\]
    In the loose composite \(\mathcal{H} \diamond \mathcal{H}\), the neighbourhood of \(a\) is again a singleton:
    \[\mathcal{N}_{\mathcal{H} \diamond \mathcal{H}}(a) = \{\{\text{grandparents of \(a\)}\}\}.\]
    On the other hand, if \(a\) has two parents---a mother and a father, say---then in the tight composite \(\mathcal{H} \blackdiamond \mathcal{H}\) the neighbourhood of \(a\) has two elements:
    \[\mathcal{N}_{\mathcal{H} \blackdiamond \mathcal{H}}(a) = \{\{\text{maternal grandparents of \(a\)}\}, \{\text{paternal grandparents of \(a\)}\}\}.\]
    Applying the loose composite repeatedly to the F-hypergraph \(\mathcal{H}\) generates the grandparent relation, the great-grandparent relation, and so forth. Using the tight composite instead, one additionally keeps track of branches of the family.
\end{example}

Both \(\blackdiamond\) and \(\diamond\) are binary operations on the set of all F-hypergraph structures on \(A\). In \Cref{subsec:hypergraph-functoriality} we will see that these are the two operations arising from the two semimonad structures on the functor \(\power\power\), as discussed in \Cref{eg:double-powerset-roles}. In particular, that tells us both operations are associative (which can also be checked directly). Thus, the set of F-hypergraph structures on \(A\) acquires two semigroup structures; we will denote these semigroups by \(\mathbb{H}_\blackdiamond(A)\) and \(\mathbb{H}_\diamond(A)\).

\begin{definition}\label{def:hypergraph-roles}
    Let \(\mathcal{H} = (A, (H_i)_{i=1}^k)\) be a multirelational F-hypergraph.
    \begin{itemize}
        \item The \define{semigroup of tight roles} in \(\mathcal{H}\) is the sub-semigroup of \(\mathbb{H}_\blackdiamond(A)\) generated by \(\{H_i\}_{i=1}^k\). That is,
        \[\Role_\blackdiamond (\mathcal{H}) = \langle H_1, \ldots, H_k \rangle_\blackdiamond.\]
        A \define{reduction of tight roles} for \(\mathcal{H}\) is a surjective semigroup homomorphism
        \(\Role_\blackdiamond(\mathcal{H}) \twoheadrightarrow \mathbb{S}\).

        \item The \define{semigroup of loose roles} in \(\mathcal{H}\) is the sub-semigroup of \(\mathbb{H}_\diamond(A)\) generated by \(\{H_i\}_{i=1}^k\). That is,
        \[\Role_\diamond (\mathcal{H}) = \langle H_1, \ldots, H_k \rangle_\diamond.\]
        A \define{reduction of loose roles} for \(\mathcal{H}\) is a surjective semigroup homomorphism
        \(\Role_\diamond(\mathcal{H}) \twoheadrightarrow \mathbb{S}\).
    \end{itemize}
\end{definition}

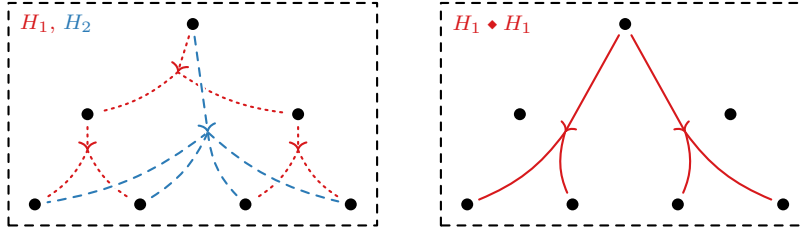
\begin{figure}
    \centering
    \begin{tikzpicture}[line cap=round, line join=round, thick, yscale=0.8]
    \tikzset{ dot/.style={circle, fill, inner sep=1.6pt}, 
    over/.style={preaction={draw=white, line width=\pgflinewidth+.8pt}},}
    \def\bend{16}
    \def\gap{2.2pt} 

    \node[dot] (r) at ( 0, 2.2) {}; 
    \node[dot] (a) at (-1.4, 0.7) {}; 
    \node[dot] (b) at ( 1.4, 0.7) {}; 
    \node[dot] (a1) at (-2.1,-0.8) {}; 
    \node[dot] (a2) at (-0.7,-0.8) {}; 
    \node[dot] (b1) at ( 0.7,-0.8) {}; 
    \node[dot] (b2) at ( 2.1,-0.8) {};

    \coordinate (sr) at ( -.2, 1.4); 
    \coordinate (sa) at (-1.4, 0.1); 
    \coordinate (sb) at ( 1.4, 0.1);

    \draw[->, shorten <=\gap, red, dotted] (r) -- (sr); 
    \draw[->, shorten <=\gap, red, dotted] (a) -- (sa); 
    \draw[->, shorten <=\gap, red, dotted] (b) -- (sb);

    \draw[shorten >=\gap, red, dotted] (sr) to[bend left=\bend] (a); \draw[shorten >=\gap, red, dotted] (sr) to[bend right=\bend] (b);

    \draw[shorten >=\gap, red, dotted] (sa) to[bend left=\bend] (a1);
    \draw[shorten >=\gap, red, dotted] (sa) to[bend right=\bend] (a2);

    \draw[shorten >=\gap, red, dotted] (sb) to[bend left=\bend] (b1);
    \draw[shorten >=\gap, red, dotted] (sb) to[bend right=\bend] (b2);
    
    \coordinate (s) at ( 0.2, 0.4);

    \draw[shorten >=\gap, blue, dashed, over] (s) to[bend left=\bend] (a1); 
    \draw[shorten >=\gap, blue, dashed, over] (s) to[bend right=\bend] (b2); 
    
    \draw[shorten >=\gap, blue, dashed] (s) to[bend left=\bend] (a2);
    \draw[shorten >=\gap, blue, dashed] (s) to[bend right=\bend] (b1); 

    \draw[->, shorten <=\gap, blue, dashed, over] (r) -- (s);

    \draw[dashed] (-2.45,-1.15) rectangle (2.45,2.55); 
    \node[anchor=east] (g) at ( -1.2, 2.2) {\footnotesize {\color{red}{$H_1$}, \color{blue}{$H_2$}}}; 
    
    \end{tikzpicture}
    \qquad
    \begin{tikzpicture}[line cap=round, line join=round, thick, yscale=0.8]
    \tikzset{ dot/.style={circle, fill, inner sep=1.6pt}, } 
    \def\bend{16}
    \def\gap{2.2pt} 

    \node[dot] (r) at ( 0, 2.2) {}; 
    \node[dot] (a) at (-1.4, 0.7) {}; 
    \node[dot] (b) at ( 1.4, 0.7) {}; 
    \node[dot] (a1) at (-2.1,-0.8) {}; 
    \node[dot] (a2) at (-0.7,-0.8) {}; 
    \node[dot] (b1) at ( 0.7,-0.8) {}; 
    \node[dot] (b2) at ( 2.1,-0.8) {};

    \coordinate (sa) at (-.8, 0.4); 
    \coordinate (sb) at ( .8, 0.4);

    \draw[->, shorten <=\gap, red] (r) -- (sa); 
    \draw[->, shorten <=\gap, red] (r) -- (sb);

    \draw[shorten >=\gap, red] (sa) to[bend left=\bend] (a1);
    \draw[shorten >=\gap, red] (sa) to[bend right=\bend] (a2);

    \draw[shorten >=\gap, red] (sb) to[bend left=\bend] (b1);
    \draw[shorten >=\gap, red] (sb) to[bend right=\bend] (b2);
    
    \draw[dashed] (-2.45,-1.15) rectangle (2.45,2.55); 
    
    \node[anchor=east] (g) at ( -1.1, 2.2) {\footnotesize {\color{red}{$H_1 \blackdiamond H_1$}}}; 
    
    \end{tikzpicture}

    \caption{For the multirelational F-hypergraph on the left, \(\mathcal{H} = (A,(H_1,H_2))\), the semigroups of tight and loose roles differ. See \Cref{eg:tight-loose-different}.}
    \label{fig:tight-loose-different}
\end{figure}

The following example confirms that the semigroups of tight and loose roles for a given \(\mathcal{H}\) can differ.

\begin{example}
\label{eg:tight-loose-different}
    Consider the multirelational F-hypergraph \(\mathcal{H}\) in \Cref{fig:tight-loose-different}. Notice that \(H_1 \diamond H_1 = H_2 \neq H_1 \blackdiamond H_1\). It follows that \(\Role_\blackdiamond(\mathcal{H})\) has three non-zero elements, \(H_1\), \(H_2\) and \(H_1 \blackdiamond H_1\), while \(\Role_\diamond(\mathcal{H})\) has only two, \(H_1\) and \(H_2\).
\end{example}

\begin{remark}
    The loose composition operation has been introduced independently in modal logic. Van Benthem, Bezhanishvili and Enqvist~\cite{vanBenthem2019} develop a dynamic extension of \emph{instantial neighbourhood logic}, interpreted over \emph{neighbourhood frames}: structures consisting of {states} and specifying which sets of possible {outcomes} are achievable at each state. Formally, neighbourhood frames are exactly \(\power\power\)-coalgebras. The dynamic logic reasons about the effects of applying \emph{programs} transforming these neighbourhood frames. To do this, one needs a semantics for the effect of applying one program after another. This semantics precisely corresponds to our loose composition. As they point out, this composition operation is \emph{bisimulation-safe}, meaning that it preserves the bisimilarity of neighbourhood frames. Our \Cref{thm:functoriality-hypergraphs} likewise ties loose composition to bisimulation (via positional reductions), and also shows that it is not an \emph{ad hoc} choice: it is the composition induced by one of the semimonad structures on $\power\power$.
\end{remark}


\subsection{Compatibility of positional and role analysis}
\label{subsec:hypergraph-functoriality}

We will close by applying our main result---\Cref{thm:functoriality-coalgebras}---to prove that both forms of role analysis introduced in \Cref{subsec:hypergraph-role-analysis} are compatible with positional analysis for F-hypergraphs, as defined in \Cref{subsec:hypergraph-positional-analysis}. Just as in the case of graphs, we will express this compatibility in terms of a functoriality theorem.

The theorem says, in particular, that every positional reduction of F-hypergraphs \(\mathcal{H} \twoheadrightarrow \mathcal{K}\) induces a role reduction \(\Role_\blackdiamond(\mathcal{H}) \twoheadrightarrow \Role_\blackdiamond(\mathcal{K})\) \emph{and} a role reduction \(\Role_\diamond(\mathcal{H}) \twoheadrightarrow \Role_\diamond(\mathcal{K})\); this is the analogue, for hypergraphs, of the classical statement \Cref{prop:induced-map-graph-case} for graphs. It also says a little more: that if one performs a sequence of several positional reductions, then the induced role reductions are guaranteed to behave consistently.

To state the theorem, let \(k\FHGraph_\mathsf{PR}\) denote the category of \(k\)-relational F-hypergraphs and positional reductions. As before, \(\mathbf{SGrp}_\mathsf{Q}\) denotes the category of semigroups and surjective semigroup homomorphisms.

\begin{theorem}
\label{thm:functoriality-hypergraphs}
    The semigroups of tight and loose roles are both functorial with respect to positional reductions. That is, they define functors
    \begin{equation*}
    \label{eq:tight-functorial}
        \Role_\blackdiamond \colon k\FHGraph_\mathsf{PR} \to \mathbf{SGrp}_\mathsf{Q}
    \end{equation*}
    and
    \begin{equation*}
    \label{eq:loose-functorial}
        \Role_\diamond \colon k\FHGraph_\mathsf{PR} \to \mathbf{SGrp}_\mathsf{Q}.
    \end{equation*}
\end{theorem}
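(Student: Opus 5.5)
The plan is to deduce the theorem from \Cref{thm:functoriality-coalgebras}, applied twice: once to the semimonad \(\mathbb{PP}_1 = (\power\power, \nu_1)\) and once to \(\mathbb{PP}_2 = (\power\power,\nu_2)\) from \Cref{eg:double-powerset-roles}. Two ingredients are needed. The first is an isomorphism of categories \(k\FHGraph_\mathsf{PR} \cong k\Coalg(\power\power)_\mathsf{PR}\). The second is an identification of the Kleisli compositions \(\ast_1\) and \(\ast_2\) with the tight and loose composites, transported along the bijection of \Cref{eg:hypergraph-coalgebras}. Granting these, \(\Role_\blackdiamond\) and \(\Role_\diamond\) are the composites of this isomorphism with \(\Role_{\mathbb{PP}_1}\) or \(\Role_{\mathbb{PP}_2}\), up to relabelling.

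\textbf{The isomorphism of categories.} By \Cref{prop:CoalgPP-HGraph-functor}, \(\mathcal{H}_{(-)}\) is bijective on objects and faithful. Its proof shows that a function is a coalgebra homomorphism exactly when it preserves and reflects hyperedges. Hence a map of \(k\)-relational \(\power\power\)-coalgebras with surjective underlying function corresponds exactly to a map in \(k\FHGraph\) that is surjective on vertices and reflects hyperedges, i.e.\ a positional reduction in the sense of \Cref{def:blockmodel-reduction-hypergraph}. So \(\mathcal{H}_{(-)}\) restricts to an isomorphism between the two \(\mathsf{PR}\) subcategories.

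\textbf{Computing the Kleisli composites.} For coalgebras \(\alpha,\beta\colon A \to \power\power A\), the composite \(\beta\ast\alpha = \nu_A \circ \power\power(\beta) \circ \alpha\) must be computed for each \(\nu\).
\begin{itemize}
\item For \(\nu_1 = \mu_{\power}\circ\mu_{\power\power}\): we have \(\power\power(\beta)(\alpha(a)) = \{\{\beta(b) \mid b\in V\} \mid V \in \alpha(a)\}\). Applying \(\mu_{\power\power}\) flattens this to \(\{\beta(b) \mid b \in V,\ V\in\alpha(a)\}\), and applying \(\mu_\power\) gives \(\{U \mid U\in\beta(b),\ b\in V,\ V\in\alpha(a)\}\). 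This is exactly the neighbourhood of \(a\) in the tight composite.
\item For \(\nu_2 = \power\mu\circ\power\power\mu\): the inner \(\power\power\mu\) sends each \(\beta(b)\) to \(\bigcup\beta(b)\), giving \(\{\{\bigcup\beta(b) \mid b\in V\} \mid V\in\alpha(a)\}\). Then \(\power\mu\) unions each inner family, yielding \(\{\bigcup_{b\in V}\bigcup_{U\in\beta(b)}U \mid V\in\alpha(a)\}\). This is exactly the loose composite.
\end{itemize}
Since \(\beta\ast\alpha\) means "first \(\alpha\), then \(\beta\)", the result matches \(H_2\blackdiamond H_1\) or \(H_2 \diamond H_1\) with \(\alpha\leftrightarrow H_1\) and \(\beta\leftrightarrow H_2\).

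Consequently \(\mathbb{H}_\blackdiamond(A) \cong \Set_{\mathbb{PP}_1}(A,A)\) and \(\mathbb{H}_\diamond(A)\cong\Set_{\mathbb{PP}_2}(A,A)\) as semigroups. This also confirms associativity. Generated sub-semigroups correspond, so \(\Role_\blackdiamond(\mathcal{H}) \cong \Role_{\mathbb{PP}_1}(\mathcal{N}_\mathcal{H})\), and similarly for \(\diamond\).

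\textbf{Conclusion and main obstacle.} Composing the isomorphism \(k\FHGraph_\mathsf{PR}\cong k\Coalg(\power\power)_\mathsf{PR}\) with the functors of \Cref{thm:functoriality-coalgebras}, and transporting along these semigroup isomorphisms, gives the required functors. On morphisms they send \(H_i \mapsto K_i\). The only real work is the element-chasing identification of \(\nu_1,\nu_2\) with \(\blackdiamond,\diamond\). The expected pitfall there is the order convention, namely whether \(\nu_1\) yields tight or loose composition. I would settle this first by testing on the example of \Cref{fig:two-hypergraph-composites}.
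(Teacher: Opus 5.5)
Your proposal is correct and follows essentially the same route as the paper. You identify \(k\FHGraph_\mathsf{PR}\) with \(k\Coalg(\power\power)_\mathsf{PR}\) via the neighbourhood-coalgebra bijection, apply \Cref{thm:functoriality-coalgebras} to \(\mathbb{PP}_1\) and \(\mathbb{PP}_2\), and check by element-chasing that \(\ast_1\) and \(\ast_2\) are the tight and loose composites; your computations of \(\nu_1\) and \(\nu_2\) agree with the paper's. Deriving the isomorphism of categories directly from the preserve-and-reflect characterization of homomorphisms, rather than through \Cref{lem:pos-red-equiv}, is a harmless and slightly sharper variant of the paper's ``equivalence''.
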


\begin{proof}
    In \Cref{eg:hypergraph-coalgebras} we saw that the neighbourhood coalgebra \(\mathcal{N}_{(-)}\) defines a bijection between F-hypergraphs and \(\power\power\)-coalgebras. By \Cref{lem:pos-red-equiv}, a positional reduction of F-hypergraphs \(\mathcal{H} \twoheadrightarrow \mathcal{K}\) is precisely a positional reduction of \(\power\power\)-coalgebras \(\mathcal{N}_\mathcal{H} \twoheadrightarrow \mathcal{N}_\mathcal{K}\). Thus, the neighbourhood coalgebra determines an equivalence of categories
    \[\mathcal{N}_{(-)} \colon k\FHGraph_\mathsf{PR} \xto{\simeq} k\Coalg(\power\power)_\mathsf{PR}.\]
    Meanwhile, by \Cref{thm:functoriality-coalgebras}, the two semigroups of roles \(\Role_{\mathbb{PP}_1}\) and \(\Role_{\mathbb{PP}_2}\) (introduced in \Cref{eg:double-powerset-roles}) each define a functor \(k\Coalg(\power\power)_\mathsf{PR} \to \mathbf{SGrp}_\mathsf{Q}\).
    Our claim is that we have \(\Role_{\blackdiamond}(\mathcal{H}) \cong \Role_{\mathbb{PP}_1} (\mathcal{N}_{\mathcal{H}})\) and \(\Role_{\diamond}(\mathcal{H}) \cong \Role_{\mathbb{PP}_2}(\mathcal{N}_{\mathcal{H}})\) for every \(\mathcal{H}\) in \(k\FHGraph_\mathsf{PR}\), and hence a diagram of functors
    \[
    \begin{tikzcd}
        k\FHGraph_\mathsf{PR} \arrow{r}{\mathcal{N}_{(-)}} \arrow[bend left=30]{rr}{\Role_\blackdiamond} \arrow[bend right=32, swap]{rr}{\Role_\diamond} & k\Coalg(\power\power)_\mathsf{PR} \arrow[shift left=1]{r}{\Role_{\mathbb{PP}_1}} \arrow[shift right=.9, swap]{r}{\Role_{\mathbb{PP}_2}} & \mathbf{SGrp}_\mathsf{Q}.
    \end{tikzcd}
    \]
    All we need to prove is that for every set \(A\) and every pair of F-hypergraphs \((A,H)\) and \((A,K)\), we have \(\mathcal{N}_{K} \ast_1 \mathcal{N}_{H} = \mathcal{N}_{K \blackdiamond H}\) and \(\mathcal{N}_{K} \ast_2 \mathcal{N}_{H} = \mathcal{N}_{K \diamond H}\).
    
    We consider the tight composite first. By definition of composition in the semicategory \(\Set_{\mathbb{PP}_1}\), we have
    \[
    \begin{tikzcd}
        A \arrow[swap]{r}{\mathcal{N}_{H}} \arrow[bend left=10]{rrrr}{\mathcal{N}_{K} \ast_1 \mathcal{N}_{H}} &
        \power\power(A) \arrow[swap]{r}{\power\power(\mathcal{N}_{K})} &
        \power\power\power\power(A) \arrow[swap]{r}{\mu_{\power\power(A)}} &
        \power\power\power(A) \arrow[swap]{r}{\mu_{\power(A)}} &
        \power\power(A)
    \end{tikzcd}
    \]
    where \(\mu\) is the multiplication of the powerset monad. Taking an element \(a \in A\) and tracking it through these four maps, we see that
    \begin{align*}
        a \mapsto \mathcal{N}_{H}(a) &\mapsto \left\{ \{\mathcal{N}_{K}(b) \mid b \in V \} \mid V \in \mathcal{N}_{H}(a) \right\} \\
        &\mapsto \bigcup_{V \in \mathcal{N}_{H}(a)} \{\mathcal{N}_{K}(b) \mid b \in V\} \\
        &\mapsto \bigcup_{V \in \mathcal{N}_{H}(a)} \left( \bigcup_{b \in V} \mathcal{N}_{K}(b)\right).
    \end{align*}
    That is, for each \(a \in A\) we have
    \[(\mathcal{N}_{K} \ast_1 \mathcal{N}_{H})(a) = \{U \mid \exists \: (a,V) \in H \text{ and } b \in V \text{ such that } (b,U) \in K\},\]
    which is exactly \(\mathcal{N}_{K \blackdiamond H}(a)\).
    
    Now we consider the loose composite. By definition of composition in the semicategory \(\Set_{\mathbb{PP}_2}\), we have
    \[
    \begin{tikzcd}
        A \arrow[swap]{r}{\mathcal{N}_{H}} \arrow[bend left=10]{rrrr}{\mathcal{N}_{K} \ast_2 \mathcal{N}_{H}} &
        \power\power(A) \arrow[swap]{r}{\power\power(\mathcal{N}_{K})} &
        \power\power\power\power(A) \arrow[swap]{r}{\power\power(\mu_A)} &
        \power\power\power(A) \arrow[swap]{r}{\power(\mu_A)} &
        \power\power(A).
    \end{tikzcd}
    \]
    Taking \(a \in A\) and tracking it through these four maps, we see that
    \begin{align*}
        a \mapsto \mathcal{N}_{H}(a) &\mapsto \left\{ \{\mathcal{N}_{K}(b) \mid b \in V \} \mid V \in \mathcal{N}_{H}(a) \right\} \\
        &\mapsto \left\{ \left\{ \bigcup_{U \in \mathcal{N}_{K}(b)} U \: \middle| \: b \in V \right\} \: \middle| \: V \in \mathcal{N}_{H}(a) \right\} \\
        &\mapsto \left\{ \bigcup_{b \in V} \left(\bigcup_{U \in \mathcal{N}_{K}(b)} U \right) \: \middle| \: V \in \mathcal{N}_{H}(a) \right\}.
    \end{align*}
    That is, for each \(a \in A\) we have
    \[(\mathcal{N}_{K} \ast_2 \mathcal{N}_{H})(a) = \left\{ W \mid \exists \: (a,V) \in H \text{ such that } W = \bigcup_{b \in V} \bigcup_{(b,U) \in K} U \right\},\]
    which is exactly \(\mathcal{N}_{K \diamond H}(a)\).
\end{proof}


\section{Conclusions and future work}\label{sec:future_work}

In this paper we have introduced a coalgebraic framework to generalize role and positional analysis from graphs to hypergraphs, tying the two types of analysis together via a functoriality result. We believe this to be the first step in a promising line of investigation, with a number of natural extensions and applications, some of which we indicate briefly below.

\paragraph{Arbitrary directed hypergraphs}
To keep the framework conceptually simple, we restricted to F-hypergraphs: directed hypergraphs in which the source of each hyperedge is a singleton set. Our framework and functoriality result can, however, be extended to arbitrary directed hypergraphs,  at the cost of moving from coalgebras to \emph{dialgebras}, which are a generalisation of both algebras and coalgebras \cite{POLL2001289}.

\paragraph{Approximate notions of equivalence}
In real-world applications of positional analysis, researchers often work not with strict notions of equivalence, but rather with measures of approximate equivalences, quantifying the extent to which two actors are equivalent; such measures include cosine similarity \cite[Chapter 7.12.1]{newman10} and measures of approximate regular equivalences \cite[Chapter 12.4.4]{Wasserman_Faust_1994}. To extend our framework and results to this setting, there are several potential directions to explore, such as coalgebraic behavioural metrics \cite{baldan14} which allow to encode distances between actors in a system,  enriched generalisations of coalgebras \cite{BKV11}, or approximate bisimulations in the context of automata \cite{9268079}.

\paragraph{Non-associative compositions}
While our framework yields associative compositions of relations, in some scenarios the meaning of a composite relation may depend on context, so that different bracketings represent genuinely different composites. An interesting question is whether functoriality can still be obtained in such a setting, perhaps by working with \emph{lax} coalgebra morphisms and replacing semigroups with nonunital magmas. It is unclear what bisimulations and regular equivalences would look like in such a setting. We stress that these questions are also relevant to graphs: for instance, when modelling kinship networks in which the meaning of relations is gender-dependent (see Remark \ref{R:non-associative}).

\paragraph{Real-world applications}
To deploy role and positional analysis in real-world applications, there are several directions in which our work could be taken. First, while in applications the semigroups of roles we consider will always be finite, in practice even reductions of these semigroups may be too large to easily interpret. In particular, it might be difficult to assign a clear meaning to a compound relation given by a long string of relations. One way to address this is by working with truncations of the semigroups, as discussed in \cite[Section 9.2]{OtterPorter2020}. Second, whilst applications often focus solely on computing specific bisimulations, such as the maximal bisimulation, we believe that the full lattice of bisimulations provides a wealth of information about a given social system that ought to be harnessed.

\section*{Acknowledgements}
Our collaboration  started during the 2022 Mathematics Research Community (MRC) on Applied Category Theory \cite{act}. We are profoundly indebted to several MRC participants, with whom we discussed many related ideas and explored potential research directions which ultimately were not pursued. In particular, we would like to thank: Daniel Cicala, Zachary Flores, Abigail Hickok,  Elise McMahon, Nikola Mili\'{c}evi\'{c},  Rachel Hardeman Morrill, Joseph Randich, Jordan Sawdy, Joshua Tan, and Angelo Taranto. 

We also thank Wolfgang Poiger and Helle Hvid Hansen for bringing instantial neighbourhood logic to our attention at CALCO 2025.
    
We thank the AMS for the generous support during and after the MRC, through the National Science Foundation under Grant Number DMS 1641020.


\printbibliography

\end{document}